\documentclass[11pt,letterpaper]{amsart}

\usepackage[utf8]{inputenc}
\usepackage[T1]{fontenc}

\usepackage{times}

\usepackage{lmodern}

\usepackage{dutchcal}
\usepackage{tikz}
\usetikzlibrary{decorations.pathmorphing,patterns}
\usetikzlibrary{calc,arrows}
 
\usepackage{dutchcal}

\usepackage{caption}

\usepackage[T1]{fontenc}

\usepackage{media9}
\usepackage{animate}
\usepackage{graphicx}

\usepackage{amssymb,fge}

\usepackage{scrextend}
\usepackage{lipsum}

\usepackage{ulem}

\usepackage[pagebackref,colorlinks=true,pdfpagemode=none,urlcolor=blue,
linkcolor=blue,citecolor=blue]{hyperref}

\usepackage{amsmath,amsfonts,amssymb,amsthm}
\usepackage{amsthm, amssymb, amsmath, amsfonts, mathrsfs}

\usepackage{mathrsfs}
\usepackage{MnSymbol}
\usepackage{scalerel} % for scaling the cube

\usepackage{tikz}
\usepackage{color}
\usepackage{accents}

\usepackage{bbm}

\definecolor{labelkey}{gray}{.8}
\definecolor{refkey}{gray}{.8}

\definecolor{darkred}{rgb}{0.9,0.1,0.1}
\definecolor{darkgreen}{rgb}{0,0.5,0}

\newcommand{\tnorm}[1]{{\left\vert\kern-0.25ex\left\vert\kern-0.25ex\left\vert #1 
    \right\vert\kern-0.25ex\right\vert\kern-0.25ex\right\vert}}

\newtheorem{theorem}{Theorem}[section]

\newtheorem{definition}[theorem]{Definition}
\newtheorem{corollary}[theorem]{Corollary}
\newtheorem{proposition}[theorem]{Proposition}

\newtheorem{example}{Example}
\newtheorem{assumption}{Assumption} 
 
\theoremstyle{remark}
\newtheorem{remark}[theorem]{Remark}

\renewenvironment{proof}[1][Proof]{ {\itshape \noindent {#1.}} }{$\Box$
\medskip}

\numberwithin{equation}{section}
\newcommand{\commentout}[1]{}
\newcommand{\bes}{\begin{displaymath}}
\newcommand{\ees}{\end{displaymath}}
\newcommand{\be}{\begin{equation}}
\newcommand{\ee}{\end{equation}}
\newcommand{\ba}{\begin{eqnarray}}
\newcommand{\ea}{\end{eqnarray}}
\newcommand{\bas}{\begin{eqnarray*}}
\newcommand{\eas}{\end{eqnarray*}}

\newcommand{\dd}{{\rm d}}

\newcommand{\cal}{\mathcal}

\newcommand{\rv}{{\bf r}}
\newcommand{\pv}{{\bf p}}

\newcommand{\bbE}{\Bbb E}
\newcommand{\bbR}{\Bbb R}

\newcommand{\bbZ}{{\mathbb Z}}

\newcommand{\ga}{\gamma}
\newcommand{\al}{\alpha}
\newcommand{\bbP}{\Bbb P}

\newcommand{\la}{\lambda}

\newcommand{\Om}{\Omega}

\newcommand{\mc}[1]{{\mathcal #1}}

\newcommand\lang{\langle\langle}
\newcommand\rang{\rangle\rangle}

\definecolor{lviolet}{rgb}{0.5,0,0.5}
\definecolor{ao}{rgb}{0.0, 0.5, 0.0}
\definecolor{byl}{rgb}{0.0,1.0,0.0}

\begin{document}

\title[Superdiffusive energy transport]{Boundary Thermalization in Superdiffusive Energy Transport}

% \title[$3$ conservation laws]{{Superdiffusive energy evolution in
%  a one-dimensional open harmonic chain connected to two heat bath at the boundaries.}} 

\author{Tomasz Komorowski}
\address{Tomasz Komorowski\\ IMPAN,  \'{S}niadeckich 8, 00-656, Warsaw, Poland
  \\
  ORCID \# 0000-0003-1564-0169}

\email{\tt tkomorowski@impan.pl}

 \author{Stefano Olla}
 \address{Stefano Olla, CEREMADE,
   Universit\'e Paris-Dauphine, PSL Research University \\
 \and Institut Universitaire de France\\ \and
 GSSI, L'Aquila\\
ORCID \# 0000-0003-0845-1861}
  \email{\tt olla@ceremade.dauphine.fr}

% --- Dedication ---
\dedicatory{Dedicated to Claudio Landim on the occasion of his 60th birthday,\\ for our long friendship and collaboration.}

\begin{center}
    \vspace{-2em} % Optional: Adjusts vertical spacing if needed
\end{center}

  \begin{abstract}
We study energy transport in a finite one-dimensional unpinned harmonic chain with stochastic nearest-neighbor momentum exchanges and Langevin heat baths at its endpoints. Such systems are known to exhibit superdiffusive transport driven by long-wavelength acoustic modes, leading to fractional macroscopic behavior. While fractional heat equations have been rigorously derived for infinite chains, the corresponding boundary conditions for finite systems in contact with heat baths remain unclear due to the nonlocality of the fractional Laplacian.
Under the superdiffusive {time} scaling  $t\sim n^{3/2}$,
{where $n$ is the system size,}
we prove that the averaged microscopic energy profile converges,
{as $n\to+\infty$,} to a temperature field solving a fractional heat equation on $[0,1]$,
 with the generator given by a Neumann fractional Laplacian and
 additional nonlocal boundary terms induced by the {heat} baths.
Our results provide a
 rigorous derivation of macroscopic boundary conditions for
 superdiffusive heat transport in open chains and introduce new
 boundary conditions for fractional Laplacians, {that are motivated by
 a physical model.}
  \end{abstract}

  \thanks{T.K acknowledges the support of
the NCN grant 2024/53/B/ST1/00286}
% This paper has been partially supported by }
% \keywords{...}
% \subjclass[2000]{...}
\maketitle

\section{Introduction}

Heat conduction in dielectric solids is mediated by lattice vibrations that propagate as waves.
In one-dimensional unpinned (acoustic) chains, long-wavelength modes travel with
non-vanishing velocity and are expected to
produce superdiffusive transport of energy.
The thermal conductivity can be expressed through
the Green–Kubo formula,
involving space–time correlations of the microscopic dynamics,
and numerical studies of nonlinear oscillator chains
suggest that it diverges
with system size \cite{sll,llp97}.
However, rigorous results for deterministic anharmonic
systems remain largely out of reach,
and even the convergence of the Green–Kubo formula
{(when expected finite)} is not proven, see \cite{BLR}.
Mathematical progress has instead been achieved for harmonic chains
perturbed by stochastic exchanges of momenta between neighboring particles
\cite{bborev,bbo2,bo11,BBJKO,JKO15,BGJ16}.
This noise conserves energy and momentum while breaking integrability,
induces scattering with rates inversely proportional to {the square of the wavelength},
and leads macroscopically to a L\'evy-type superdiffusion described by a
fractional heat equation with the generator proportional to the fractional Laplacian
$|\Delta|^{3/4}$.

Most rigorous results to date concern infinite systems without
boundaries. When a finite chain is coupled to heat baths at its
endpoints, a natural question is which macroscopic boundary conditions
emerge for the resulting superdiffusive energy evolution. In diffusive
settings (e.g. pinned chains or noise that breaks momentum
conservation), the baths impose Dirichlet boundary conditions fixed by
their temperatures. For a fractional diffusion, however, the nonlocality of the fractional Laplacian allows several possible boundary conditions, and it is not a priori clear which ones arise from the microscopic dynamics. {Physics literature} \cite{LM08,LM10,KB18} predicts that the macroscopic boundary behavior depends on whether the chain is pinned or unpinned at the endpoints, but a rigorous derivation has been missing as well as
a
% \sout{non-rigorous precise}}
prediction of the boundary layer arising in the unpinned case.

Some results for {infinite systems with a boundary consisting of a single
  Langevin thermal bath have been obtained} in the 'kinetic limit'.
{This correspond to a  {\it two step approach}, where first 
  the random exchange is rescaled so that in the macroscopic unit
time only a finite number of stochastic exchanges occurs
\cite{BOS}.
The corresponding kinetic equation is then rescaled superdiffusively
that, in the limit,  leads to
a  fractional evolution of the energy  with a
boundary condition at one site \cite{kjo,babo}.
In the presence of a single heat bath this program was
  achieved in \cite{KORS20,KO20} for the kinetic
  equation with a boundary term, and  \cite{KOR20}
  for the corresponding
  hydrodynamic limit (see \cite{KOrev} for a complete review
  of these results).}

In this paper we review our recent result \cite{KO-supp}
concerning a finite unpinned, {one-dimensional}  harmonic chain
with stochastic nearest-neighbor momentum exchanges and
Langevin heat baths
at its boundaries. The dynamics conserves energy, momentum, and total
length {$n$} in the bulk,
while the baths exchange energy at fixed temperatures.
Starting from suitable initial data and adopting the superdiffusive scaling $t\sim n^{3/2}$,
we analyze the evolution of the averaged microscopic energy profiles.
As the system size {$n$} tends to infinity, these profiles converge to a
macroscopic temperature field solving a fractional heat equation on $[0,1]$
with the generator given by a Neumann fractional Laplacian
with additional
nonlocal boundary terms {produced}  by the heat baths.
These terms can be interpreted in terms of
absorption and creation rates for a
Lévy-type process and enforce the bath temperatures at the endpoints,
while, {at the same time, produce} boundary layers characteristic of unpinned chains.

{Our results provide a rigorous derivation of the macroscopic boundary conditions governing superdiffusive heat transport in open chains.}
 %\sout{thereby confirming previously obtained non-rigorous predictions.}}
In addition,  the emerging boundary conditions for fractional
Laplacians  appear to be new
in the mathematical literature.
They are expected to be
relevant for more general nonlinear chains, where numerical simulations
are typically performed in contact with heat baths and exhibit similar superdiffusive behavior.

{In the framework of mass transport, different boundary conditions for
the regional fractional Laplacian were obtained in the hydrodynamic limit
for exclusion processes with long jumps in contact with spacially
extended particle reservoirs \cite{BGJO-21,BCGS-24}. }

\subsection{Notations}

We denote \begin{equation}
  \label{nbl}
  \begin{split}
    &\nabla  f_x = f_{x+1} - f_{x} \quad \mbox{and}\\
    &\nabla^\star f_x =    f_x - f_{x-1},\quad x \in \{0,1,\ldots,n\} =: \bbZ_n.
    \end{split}
\end{equation}
 We adopt the convention 
 $
    f_{n+1}:=f_n$ and $ f_{-1}= f_0.
    $
The Neumann discrete Laplacian, $\Delta_{\rm N}$ is defined as
\begin{align*}
  \Delta_{\rm N}f_x= f_{x+1}+f_{x-1}-2f_x = \nabla \nabla^\star f_x, \quad x\in \bbZ_n.
    \end{align*}
    In the following we denote by $\Delta$ the usual (continuous)
    Laplacian on $[0,1]$ with the Neumann boundary
    conditions.

\subsection{Model: stochastically perturbed unpinned harmonic chains}

  We consider   a finite chain of $n+1$ atoms, with
two Langevin  heat baths, at temperatures $T_L$ 
and $T_R$,  attached to the left and right endpoints, respectively,
see Figure \ref{fig1} below.
 The dynamics   of the positions
$\mathbf q(t)=\big(q_x(t)\big)_{x\in\bbZ_n}$ and momenta
$\mathbf p(t)=\big(p_x(t)\big)_{x\in\bbZ_n}$,
where $q_x(t),p_x(t)\in\bbR$ are given by:
\begin{equation} \label{eq:bas1}
  \begin{split}
    \dot q_x(t) =&\ p_x(t), % \qquad x\in\bbZ_n:=\{0,\ldots,n\},
    \\
   \dd p_x(t) =&\ \Delta_{\rm N} q_x(t)\dd t +
   \big[\nabla^\star p_{x+1}(t-)\;\dd N_{x,x+1}(\gamma t)
   \\
   &
 - \nabla^\star p_x(t-)\; \dd N_{x-1,x}(\gamma t)\big]  +\sum_{z=0,n}\delta_{x,z}\big(- \tilde\gamma  p_{z}\dd t + \sqrt{2T_z \tilde \gamma} \dd
     w_{z}\big).
\end{split}
\end{equation}
{Here $w_0(t)$ and $w_n(t)$ are independent standard Brownian motions, and 
$\{N_{x,x+1}(t), x=0,\dots, n-1\}$ are independent Poisson processes of intensity $1$,
independent of the Brownian motions.
These processes are defined over
a probability space $(\Om,{\cal F},\bbP)$.
The parameters $\ga,\tilde
\ga>0$ determine 
the respective rates of the momentum exchange and the strength
of the heat baths.
To simplify the notation, at the boundaries $x=0,n$ we shall also write sometimes $T_0=T_L$,
$w_0=w_L$ and $T_n=T_R$,
$w_n=w_R$.

Being motivated by the results of \cite{JKO15} we consider the dynamics under the 
superdiffusive time scaling  $t':=t/n^{3/2}$, $u=x/n$,
where  {$(t',u)\in [0,+\infty)\times[0,1]$} correspond to the macroscopic time and space co-ordinates.
Since  the dynamics is unpinned, it  is  invariant under
translations: $q_x \rightarrow q_x + a$, $a\in\bbR$. For this reason,
it is   convenient to work with    the inter-particle
  stretches
$
r_x:=q_x-q_{x-1}$, 
$x=1,\ldots,n$, $r_0:=0.
$
Then, the scaled dynamics  of
  $$
r^{(n)}_x(t)=r_x(n^{3/2} t),\quad p^{(n)}_x(t)=p_x(n^{3/2} t)
$$
is described by the following system of equations
\begin{align}
  \label{011012-25}
    \dot   r_x^{(n)}(t) = & n^{3/2}\nabla^\star p_x^{(n)}(t),\notag\\
    \dd   p_x^{(n)}(t) = & n^{3/2}\big(\nabla r_x^{(n)} -\ga
                           \Delta_{\rm N}p_x^{(n)}(t) \big)\dd t \\
                          &
                            +
                          { \nabla\left[% \nabla^\star  p_{x+1}^{(n)}(t-)\dd
      \nabla^\star
      p_x^{(n)}(t-)\dd \tilde N_{x-1,x}(n^{3/2}\gamma t)\right]}\notag\\
      &+\sum_{z=0,n}\delta_{x,z}\big[-n^{3/2}
                       \tilde\gamma p_z^{(n)} (t) \dd t
                     +\sqrt{2 n^{3/2}\tilde  \gamma T_z} \dd   w_z(t)\big]. \notag
\end{align}
Here  $\tilde N^{(n)}_{x,x+1}(t)=N^{(n)}_{x,x+1}(t)-t$.

  To lighten somewhat the  notation we shall omit writing the
  superscript $n$ by the stretches and momenta, using from this point
  on  the  convention:
  $ r_x (t) :=r_x^{(n)} (t)$ and   $p_x (t):=  p_x^{(n)} (t)$.

 The generator of the dynamics is given by   $n^{3/2}\mathcal G $, with
\begin{equation}
  \label{eq:7}
  \mathcal G =  \mathcal A +  \gamma S_{\text{ex}}
  +    \tilde\gamma (S_L+  S_R),
\end{equation}
where, with the convention $r_0=r_{n+1}=0$,
\begin{itemize}
\item its Hamiltonian part equals
\begin{equation}
  \label{eq:8}
  \mathcal A= \sum_{x=1}^n \nabla^\star p_x \partial_{r_x}
  + \sum_{x=0}^n  \nabla r_{x} \partial_{p_x},
\end{equation}
\item the momentum exchange part is
\begin{equation}
  \label{eq:21}
  \begin{split}
 &   S_{\text{ex}} f (\rv,\pv) =    \sum_{x=0}^{n-1}   \Big( f
 (\rv,\pv^{x,x+1}) - f (\rv,\pv)\Big).
\end{split}
  \end{equation}
 for any bounded and measurable function $f:\bbR^n\times\bbR^{n+1}\to\bbR$. 

  \item   the action of 
 thermostats corresponds to 
 \begin{equation}
   \label{eq:10}
   S_{L} = T_L \partial_{p_0}^2 - p_0 \partial_{p_0}\quad  S_{R} = T_R \partial_{p_n}^2 - p_n \partial_{p_n}.
 \end{equation}

 \end{itemize}
Here
 $\pv^{x,x'}$ denotes the momentum configuration, where the velocities at
 sites {$x\not=x'$} have been exchanged, 
   i.e. $\pv^{x,x'}=(p^{x,x'}_0,\ldots,p_n^{x,x'})$, with $p_y^{x,x'}=p_y$,
   $y\not \in\{x,x'\}$ and  $p_{x'}^{x,x'}=p_x$, $p_{x}^{x,x'}=p_{x'}$.

   \begin{figure}
\begin{center}
  \begin{tikzpicture}
   
\node[circle,fill=black,inner sep=1.2mm] (e) at (0,0) {};
\node[circle,fill=black,inner sep=1.2mm] (f) at (2,0) {};
\node[circle,fill=black,inner sep=1.2mm] (g) at (3.5,0) {};
\node[circle,fill=black,inner sep=1.2mm] (h) at (4.8,0) {};
\node[circle,fill=black,inner sep=1.2mm] (i) at (6.8,0) {};
\node[circle,fill=black,inner sep=1.2mm] (j) at (8.5,0) {};
\node[circle,fill=black,inner sep=1.2mm] (k) at (10,0) {};
\node[circle,fill=black,inner sep=1.2mm] (l) at (11.3,0) {};

\draw[dashed] (2,0) -- (3.5,0);
\draw[dashed] (8.5,0) -- (10,0);
%\draw[ultra thick, blue, ->] (11.3,0) -- (12.5,0);

\draw[thick, <->] (3.5,-1.2) -- (4.8,-1.2);

\draw (0, -0.6) node[] {$q_{0}$};
\draw (3.5, -0.6) node[] {$q_{x-1}$};
\draw (4.8, -0.6) node[] {$q_{x}$};
\draw (6.8, -0.6) node[] {$q_{x+1}$};
\draw[dashed] (3.5,-1.5) -- (3.5,-1);
\draw (4.1, -1.5) node[] {$r_x$};
\draw[dashed] (4.8,-1.5) -- (4.8,-1);

\draw (-0.6,1.8) node[] {\large\color{blue}$T_-$};
\draw (12,1.8) node[] {\large\color{red}$T_+$};
\draw (11.3,-0.6) node[] {$q_{n}$};;

\draw[decoration={aspect=0.3, segment length=3mm, amplitude=3mm,coil},decorate] (0,0) -- (2,0); 
\draw[decoration={aspect=0.3, segment length=1.8mm, amplitude=3mm,coil},decorate] (3.5,0) -- (4.9,0); 
\draw[decoration={aspect=0.3, segment length=3mm, amplitude=3mm,coil},decorate] (4.8,0) -- (6.9,0); 
\draw[decoration={aspect=0.3, segment length=2.5mm, amplitude=3mm,coil},decorate] (6.8,0) -- (8.6,0); 
\draw[decoration={aspect=0.3, segment length=1.8mm, amplitude=3mm,coil},decorate] (10,0) -- (11.4,0); 

\fill [pattern = north east lines, pattern color=blue] (-0.3,0.8) rectangle (0.3,2);
\fill [pattern = north east lines, pattern color=red] (11,0.8) rectangle (11.6,2);
\node (c) at (-0.3,1.5) {};
\node (d) at (-0.1,0.1) {};
\node (a) at (11.6,1.5) {};
\node (b) at (11.4,0.1) {};

%\draw[thick, ->, bend left=45]  (c) -- (d);
%\draw[thick] (-1,5) -- (3,5);
\draw (c) edge[dashed, ultra thick, blue, ->, >=latex, bend right=60] (d);
\draw (a) edge[dashed, ultra thick, red, ->, >=latex, bend left=60] (b);

\end{tikzpicture}
\end{center}
\captionof{figure}{\small Oscillator chain: thermostats at both endpoints}
  \label{fig1}
\end{figure}

\subsubsection{Conserved quantities and calculation of energy currents}

 {Let $ {\cal E}_x(t)=  {\cal E}_x
  \big(\mathbf r(t), \mathbf p(t)\big)$, where
      $$
      {\cal E}_x (\mathbf r, \mathbf p):=  \frac{1}2 (p_x^2+
r_{x}^2),\
\quad x\in\bbZ_n,
$$
be   the energy per atom (or microscopic energy density).}
Observe that for  $x\in\bbZ_n$
 \begin{align}
   \label{current}
&
  \frac {d}{dt}\bbE_n\left[ {\cal E}_{x}(t)\right]
    =n^{3/2}\mathcal G\mathcal E_{x}(t)   =-n^{3/2}\nabla^\star j_{x,x+1}  (t), \quad \mbox{where}\notag
\\
     &
       j_{x,x+1}(t)= j_{x,x+1}^{(a)}(t)+j_{x,x+1}^{(s)}(t)\\
  & j_{x,x+1}^{(a)}(t):=- p_{x}(t) r_{x+1}(t) , \quad
    j_{x,x+1}^{(s)}=-\frac{\ga}{2}\big(p_{x+1}^2-p_{x}^2\big)
, \notag\\
   &  { j_{-1,0}:= \tilde{ \gamma} \left(T_L - p_{0}^2 \right)},
      \qquad
  j_{n,n+1} :=      \tilde{ \gamma} \left( p_{n}^2-T_R \right). \notag
\end{align}
The quantities $j_{x,x+1}(t)$ are called the {\it energy currents}. We
shall refer to $j_{x,x+1}^{(a)}(t)$, $j_{x,x+1}^{(s)}(t)$ as the
{\it mechanical} and {\it stochastic} parts of the current,
respectively.

{As can be seen from \eqref{011012-25} and \eqref{current}, in the
  case $\tilde\gamma=0$, i.e. when the system is closed, the dynamics   conserves 
 the total stretch  $  \sum_{x\in\bbZ_n}r_x (t)$, 
  momentum $\sum_{x\in\bbZ_n}p_x(t)$ and  energy
  $\sum_{x\in\bbZ_n}{\cal E}_x(t)$.
 }

\subsection{Assumptions about the initial data}

\label{sec1.4}

\subsubsection{The existence of the initial profile of energy}

 We suppose that  the  distribution $\mu_n$ of the initial data $({\bf
  r}(0),{\bf p}(0))\in\Om_n:=\bbR^n\times \bbR^{n+1}$  is   of   zero mean, i.e. 
$$
\int_{\Om_n}r_x\mu_n(\dd {\bf r},\dd{\bf p})=0\quad \mbox{and}\quad
\int_{\Om_n}p_x\mu_n(\dd {\bf r},\dd{\bf p})=0 ,\quad \mbox{for }x\in\bbZ_n.
$$
Furthermore, we assume   the following.
\begin{assumption}
\label{IP} There exists   an initial profile of energy
$T_{\rm ini} \in L^2[0,1]$ such
that   for any $\varphi\in C[0,1]$
  \begin{align*}
   \lim_{n\to+\infty}\frac{1}{n}\sum_{x=0}^{n}  \varphi\left(\frac x{n} \right)
  \int_{\Om_n} {\cal E}_{n,x} \dd \mu_n=\int_0^1 T_{\rm ini} (u)\varphi(u)\dd
  u.
  \end{align*}  
\end{assumption}
By $\bbE_n$ we shall denote the expectation w.r.t. the measure
$\mu_n\otimes \bbP$.

  \subsubsection{Entropy bound}
For a given $T>0$, define the Gaussian probability measure
\begin{equation*} 
  \begin{split}
    &\nu_{T} (\dd{\bf r},\dd{\bf p}) : =g_{T}({\bf r},{\bf p})  \dd{\bf r}\dd{\bf p} ,
    \quad \mbox{where }
    \quad 
    g_{T}({\bf r},{\bf p})=\frac{e^{-\mc E_0/T}}{\sqrt{2\pi
        T}}\prod_{x=1}^n \frac{e^{-\mc E_x/T}}{2\pi T} .
    \end{split}
  \end{equation*}
Assume that $\mu_n$ is absolutely continuous w.r.t. $\nu_{T} $, with
the density $f_n$. The relative entropy of $\mu_n$ w.r.t. $\nu_{T} $ is defined as
\begin{equation}
  \label{rel-ent}
  {\mathbf{H}}_{n,T} =\int_{\Om_n}  f_n \log f_n \dd
{\nu_{T}}=\int_{\Om_n}   \log f_n \dd
\mu_n.
\end{equation}
It is an easy consequence of the Jensen inequality that  ${\mathbf{H}}_{n,T} \ge0$.

  \begin{assumption}
    \label{ass1}
    We suppose that  the relative entropy of the initial distribution
    with respect to $\nu_{T}$ is bounded by
 the size of the system, i.e. there exists a constant {$C_{T} >0$}
 such that
 \begin{equation}
   \label{011012-25z}
 {\mathbf{H}}_{n,T}  \le C_{T} n,\quad n=1,2,\ldots.
\end{equation}
\end{assumption}

\begin{example}[Local equilibrium]
  \label{ex011012-25}
  Suppose that $\beta:[0,1]\to(0,+\infty)$. Let
  $\beta_x := \beta(x/(n+1))$, $x=0,\ldots,n+1$. 
Let $\nu_\beta$ be the probability  measure on $\Om_n$ given by the
formula 
\begin{equation}
  \label{tilde-nu}
  \begin{split}
    &\nu_\beta (\dd \mathbf{r},\dd \mathbf{p}) :=
    { \frac{e^{-\beta_0 \mc E_0/T}}{\sqrt{2\pi \beta_0^{-1}}}
    \prod_{x=1}^n \frac{e^{-\beta_x \mathcal E_x}}{2\pi \beta_x^{-1}} }
    \dd \mathbf{r}\dd \mathbf{p}.
   \end{split}
\end{equation}
% Here 
% \begin{equation}
%   \label{eq:16g}
%   \mathcal G(\beta):= \log\int_{\mathbb R^2} e^{-\frac{\beta}{2} (r^2+p^2)} \dd p \dd r =
%   \log \left(2\pi \beta^{-1}\right),\quad \beta>0,
% \end{equation}
% is  called  the Gibbs potential.
Suppose that  $\mu_n=\nu_\beta$.
The density    with respect to $\nu_T$ satisfies
\begin{equation}
  f_n (\mathbf{r},\mathbf{p})=
  \prod_{x=0}^n \exp\Big\{\Big(\frac{1}{T}-\beta_x \Big)\mathcal E_x
  { - \left(1- \frac{\delta_{x,0}}{2}\right) \log(T\beta_x)} \Big\}   . \label{eq:tildef}
\end{equation}
The relative entropy is then given by 
\begin{equation}
  \label{021012-25}
  \mathbf{H}_{n,T}  
   =\sum_{x=0}^n \int_{\Om_n}\Big(\frac{1}{T}-\beta_x \Big) \mathcal E_x\dd\mu_n
    - \sum_{x=0}^n  \left(1- \frac{\delta_{x,0}}{2}\right) \log(T\beta_x).
\end{equation}
Suppose now that $\beta(\cdot)$ is a bounded function.
Then, under the Assumption
\ref{IP}, we conclude that, the  terms on the
right hand side of \eqref{021012-25} can be estimated by
$$
n(1+o_n(1))\Big(\frac{1}{T}+\|\beta\|_\infty  \Big) \|T_{\rm
  ini}\|_{L^1[0,1]},
$$
and
$n\|\log\big(T\beta(\cdot)\big)\|_\infty$, respectively.
Here, $\lim_{n\to+\infty}o_n(1)=0$
and, for a given function $f:[0,1]\to\bbR$ we let 
$\|f\|_\infty :=\sup_{u\in[0,1]}|f(u)|$.  Thus,  
Assumption \ref{ass1} holds.
   \end{example}

\subsubsection{Square summability of covariances}
    
  Define
  \begin{align}
    \label{Hn2a}
    {\cal H}_n^{(2)}(t) 
=\frac 1{2n} 
                           \sum_{x,x'=0}^n &\bigg\{\Big[\mathbb E_{n}\left( p_{x}(t) p_{x'}(t)\right)\Big]^2+
                           \Big[\mathbb E_{n}\left( r_x(t) r_{x'}(t)\right)\Big]^2 
  \\
  &
    +2 \Big[\mathbb E_{n}\left( p_x(t) r_{x'}(t)\right)\Big]^2\bigg\}. \notag
\end{align}
We assume the following.
\begin{assumption}[Bound on the covariances of the initial data]
  \label{ass2}
  There exists $C_{2,{\cal H}}>0$ such that
  \begin{equation}
    \label{Hn2}
        {\cal H}_n^{(2)}(0)  \le C_{2,{\cal H}},\quad n=1,2,\ldots.\end{equation}
\end{assumption}
\begin{remark}
  In the case $\mu_n=\nu_\beta$, where $\beta$ is as in Example
  \ref{ex011012-25}, one can easily see that
  $$
\int_{\Om_n}r_xr_{x'}\dd\mu_n=\int_{\Om_n}p_xp_{x'}\dd\mu_n=\beta_x^{-1}\delta_{x,x'},\quad x,x'\in\bbZ_n
$$
and Assumption \ref{ass2} holds. One can show, see Corollary
\ref{cor021212-25} below, that the bound \eqref{Hn2} persists
for any $t>0$.
  \end{remark}

\section{Main results}

 \subsection{Preliminaries}
 Our main result deals with the existence of the   limit
 $\lim_{n\to+\infty}{\cal E}_n[t;\varphi]$  of the
 emprirical distribution of the averaged microscopic energy density
 functional
 \begin{align}
   \label{Etp}
 {\cal E}_n[t;\varphi]:=  \frac{1}{n}\sum_{x=0}^{n}  \varphi\left(\frac x{n} \right)
 \bbE_{n}\big[ {\cal E}_{n,x}( t)\big] ,
\end{align}
  for any $\varphi\in C[0,1]$.
Before its rigorous formulation  we need some
preparatory material  that enables us to describe the limit.

% For any $\varphi\in L^2[0,1]$ consider its Fourier decomposition in
% the orhonormal cosine base, i.e.
% \begin{align*}
%   &\varphi(u)=\sum_{\ell=0}^{+\infty}\hat\varphi_c(\ell)c_\ell(u),\quad
%     \mbox{where}\\
%   &
%     \hat\varphi_c(\ell):=\int_0^1 \varphi(u) c_\ell(u)\dd u\quad\mbox{and}\\
%     & 
%  c_0(u)\equiv 1,\quad   c_\ell(u)=\sqrt{2}\cos(\pi \ell u), \,\ell=1,2,\ldots .
% \end{align*}

% \subsubsection{Discrete Fourier transform}
%  For $\varphi:[0,1]\to\bbR$ and $j=1,\dots,n$ we define the discrete sine
%  and cosine Fourier transforms
% \begin{equation}
%   \label{eq:18}
%   \begin{split}
% &  \widehat{\varphi}_{n,o}(j) := \frac{\sqrt 2}{n+1} \sum_{x=1}^n \sin(\pi j u_x) \varphi (u_x),\\
%   & \widehat{\varphi}_{n,e}(j) := \frac{\sqrt 2}{n+1} \sum_{x=0}^n \cos(\pi
%   j u_x) \varphi (u_x).
%   \end{split}
% \end{equation}
% For $j=0$ we let
% \begin{equation}
%   \label{eq:33}
%   \widehat{\varphi}_{n,e}(0) := \frac{1}{n+1} \sum_{x=0}^n \varphi (u_x).
% \end{equation}

  \subsubsection{Fractional Laplacian and the Green's function}
For any $\varphi\in L^2[0,1]$ consider its Fourier decomposition in
the orhonormal cosine base, i.e.
\begin{align*}
  &\varphi(u)=\sum_{\ell=0}^{+\infty}\hat\varphi_c(\ell)c_\ell(u),\quad
    \mbox{where}\\
  &
    \hat\varphi_c(\ell):=\int_0^1 \varphi(u) c_\ell(u)\dd u\quad\mbox{and}\\
    & 
 c_0(u)\equiv 1,\quad   c_\ell(u)=\sqrt{2}\cos(\pi \ell u), \,\ell=1,2,\ldots .
\end{align*}

For any $\alpha\in\bbR$ define the fractional $\alpha$-power of the
Neumann Laplacian on [0,1] as the operator
$-|\Delta|^{\al}:{\cal D}\big(|\Delta|^{\al}\big)\to L^2[0,1]$ given by
\begin{align}
  \label{fDlt}
  &-|\Delta|^{\al}\varphi(u)=
    -\sum_{\ell=0}^{+\infty}(\pi\ell)^{2\al}\hat\varphi_c(\ell)c_\ell(u),\quad\mbox{with}
  \\
&
 {\cal D}(|\Delta|^{\al})=  \Big[\varphi\in L^2[0,1]:\, \sum_{\ell=0}^{+\infty}(\pi\ell)^{4\al}
   \hat\varphi_c^2(\ell)<+\infty\Big].\notag
\end{align}
%In particular we let $\Delta =-|\Delta|^{1}$.

%  Suppose that $\varphi\in C^\infty_c(0,1)$.  By \cite[Lemma B.1]{KLO23}, for any $k>0$ we have for some constant $C>0$:
% \begin{equation}
%   \label{eq:19}
%   |\widehat{\varphi}_{n,\iota}(j)| \le  \frac{C}{\chi_n^k(j)},\quad j\in\bbZ,\,n=1,2,\ldots, \qquad \iota = o, e,
% \end{equation}
% where $\chi_n$ is $2n+2$-periodic extension of the function
% $$
% \chi_n(j)=(1+j)\wedge (2n+2-j), \quad j=0,\ldots,2n+1.
% $$
% In addition, if $\kappa\in(0,1)$, then there exists $C>0$ such that
% \begin{equation}
%   \label{eq:18a}
%  \sup_{|j|\le n^{\kappa}}\big(|\widehat{\varphi}_{n,o}(j)-\hat \varphi_s(j)|
%  +|\widehat{\varphi}_{n,e}(j)-\hat \varphi_c(j)|\big)\le \frac{C}{n^{1-\kappa}}.
% \end{equation}}

%   \subsubsection{Fractional Laplacian and the Green's function}

% For any $\alpha\in\bbR$ define the fractional $\alpha$-power of the
% Neumann Laplacian $\Delta_N$, as the operator $-|\Delta_N|^{\al}:{\cal
%   D}\big(|\Delta_N|^{\al}\big)\to L^2[0,1]$ given by
% \begin{align*}
% &-|\Delta_N|^{\al}\varphi(u)=-\sum_{\ell=0}^{+\infty}(\pi\ell)^{2\al}\hat\varphi_c(\ell)c_\ell(u),\quad\mbox{with}\notag\\
% &
% {\cal D}(|\Delta_N|^{\al})=\Big[\varphi\in L^2[0,1]:\, \sum_{\ell=0}^{+\infty}(\pi\ell)^{4\al}\hat\varphi_c^2(\ell)<+\infty\Big].
% \end{align*}
% In particular we let $\Delta_N=-|\Delta_N|^{1}$.

Suppose that $\al>0$. Let  $C^\infty[0,1]$ (resp. $C^\infty_c(0,1)$) be the set of all
$C^\infty$ smooth functions on $[0,1]$  (resp. compactly supported in $(0,1)$
smooth functions)
Define $H^{\alpha}[0,1]$ (resp. $H^{\alpha}_0[0,1]$)  as the
completion of $C^\infty[0,1]$ (resp. $C^\infty_c(0,1)$)
     under that norm
    \begin{equation}
      \label{Hal}
      \begin{split}
       & \|\varphi\|_{\alpha}:=\left(\|\varphi\|^2_{L^2[0,1]}+\|\varphi\|_{\alpha,0}^2\right)^{1/2}\quad\mbox{where}\\
       &
       \|\varphi\|_{L^2[0,1]}=\left(\int_0^1\varphi^2(u)\dd u\right)^{1/2}=\left(\sum_{\ell=0}^{+\infty}
       \hat\varphi_c(\ell) ^2\right)^{1/2},\\
       &
       \|\varphi\|_{\alpha,0}:=\Big(\sum_{\ell=1}^{+\infty}(\pi
       \ell)^{2\alpha} \hat\varphi_c^2(\ell) \Big)^{1/2}.
\end{split}
\end{equation}

Suppose that $\la>0$. The Green's function   is defined as 
\begin{align}
  \label{G-la}
&G_\la(u,v)=
                 \sum_{\ell=0}^{+\infty}\frac{c_\ell(u)
                 c_\ell(v)}{\la+(\pi\ell)^2}.
\end{align}
Let
 \begin{align}
  \label{V-la}
   &V_\la(u,v)=\la G_\la(u,v).
     \end{align}
     One can easily check that
    \begin{align*}
 \int_0^1V_\la(v,u)\dd u=   \int_0^1V_\la(u,v)\dd u=1.
    \end{align*}
    Using the method of images one can conclude that
    \begin{align}
      \label{G-la1}
&G_\la(u,v)
    =\sum_{n=-\infty}^{+\infty}\Big[G_\la(u-v+2n)+G_\la(2n+u+v)\Big],\quad
                 u,v\in[0,1],\quad\mbox{where}\\
  &
    G_\la(u)=\frac{1}{\sqrt{4\pi }}\int_0^{+\infty}e^{-\la
    t}e^{-u^2/(4t)}\frac{\dd t}{\sqrt{t}},\quad u\in\bbR.\notag
\end{align}
In particular, the above implies that $V_\la(v,u)\ge0$.

 \subsubsection{Fractional diffusion equation with Dirichlet boundary conditions}
\label{fde}

We denote by $L^2_w[0,1]$ the corresponding $L^2[0,1]$ space
equipped with the weak topology.

\begin{definition}
  \label{df1.5}
  Suppose that $c_{\rm bulk},c_{\rm bd}>0$, $T_0,T_1>0$ and $T_{\rm
    ini}\in L^2[0,1]$.
  We say that a 
  function $T:[0,+\infty)\to L^2[0,1]$
  is a weak solution of 
\begin{equation}
    \label{f-diff1}
    \begin{split}
      \partial_t T(t, u)&=-c_{\rm bulk}|\Delta|^{3/4}T(t, u)   \\
      &
  +c_{\rm  bd}    \sum_{v=0,1} \int_0^{+\infty}  \Big\{ {V_{\varrho}(u,v)}
 \int_0^1 V_{\varrho}(u',v)[ T_v -  T(t, u') ]\dd
  u'\Big\}\frac{\dd\varrho}{\varrho^{3/4}} ,
      \end{split}
    \end{equation}
with  the boundary
      values $T(t,v)=T_v$, $v=0,1$,  
    if the following   hold:
    \begin{itemize}
    \item[i)] $T\in C\Big([0,+\infty); L^2_w[0,1]\Big)$.
      % , where
      %   $L^2_w[0,1]$ is equipped with the weak topology,
      \item[ii)] for any $t>0$ and $v=0,1$ we have
        \begin{equation}
          \label{012111-24}
                      \int_0^t\dd s \int_0^{+\infty} 
 \Big( \int_0^1 V_{\varrho}(u',v)  \big(T_v -  T(s,u')\big)\dd u'
 \Big) ^2 \frac{\dd \varrho}{\varrho^{3/4}}<+\infty,
          \end{equation}
        \item[iii)] for any $\varphi\in C^\infty_c(0,1)$ we have
     \begin{equation}
    \label{042209-24aa}
    \begin{split}
    &  \langle\varphi,
  T(t )\rangle_{L^2[0,1]} -\langle\varphi,
  T_{\rm ini}\rangle _{L^2[0,1]}
    = - c_{\rm bulk}
 \int_0^t \langle|\Delta|^{3/4}\varphi, T(s )\rangle_{L^2[0,1]}  \dd s\\
     &
      +c_{\rm
    bd} \sum_{v=0,1} \int_0^t\dd s \int_0^{+\infty} \langle V_{\varrho}(\cdot
  ,v), \varphi\rangle_{L^2[0,1]}     
   \langle V_{\varrho}(\cdot
  ,v), T_v - T(s)\rangle_{L^2[0,1]}    \frac{\dd\varrho}{\varrho^{3/4}}.
    \end{split}
  \end{equation}
  \end{itemize}
  % Here,
%  % ${\rm m}_{[0,1]}$ is the Lebesgue measure on $[0,1]$ and
%   for a given function $\varphi$ and a Borel (signed) measure
% $\mu$ we let
% $
% \langle \varphi, \mu\rangle:= \int_0^1\varphi(u)\mu(\dd u) .
%  $ If $\mu$ is absolutely continuous with respect to the Lebesgue measure with the density $f$, then we write $
% \langle \varphi, f\rangle:= \int_0^1\varphi(u)f(u)\dd u.
%  $
\end{definition}

 \begin{theorem}
    \label{lm032209-24}
   Suppose that $T_{\rm ini}\in  L^2[0,1]$. Then,  equation
   \eqref{f-diff1} has a unique solution $T(\cdot,\cdot)$. In addition, the
   solution satisfies  
   \begin{equation}
     \label{boundary}
     \begin{split}
       &\int_0^tT(s,\cdot)\dd s\in C[0,1]\quad\mbox{and}\\
     &\int_0^tT(s,0)\dd s= {T_0} t,\quad  \int_0^tT(s,1)\dd s={T_1} t,\quad
     t\ge0.
     \end{split}
     \end{equation}
        \end{theorem}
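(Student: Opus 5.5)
The plan is to recast \eqref{f-diff1} as an abstract evolution equation in $L^2[0,1]$ driven by a nonnegative self-adjoint operator plus a source, and then read off existence, uniqueness and the boundary identity \eqref{boundary} from a coercive energy estimate. First we identify the structure of the boundary term. For $v=0,1$ and $\varrho>0$ the map $f\mapsto \langle V_\varrho(\cdot,v),f\rangle V_\varrho(\cdot,v)$ is a rank-one nonnegative operator. Using $\int_0^1 V_\varrho(u,v)\,\dd u=1$, we have $\langle V_\varrho(\cdot,v),T_v\rangle=T_v$. Hence the boundary term equals $-\mathcal B T+F$, where
$$
\mathcal B f:=c_{\rm bd}\sum_{v=0,1}\int_0^{+\infty}\langle V_\varrho(\cdot,v),f\rangle V_\varrho(\cdot,v)\frac{\dd\varrho}{\varrho^{3/4}},\qquad F:=c_{\rm bd}\sum_{v}T_v\int_0^{+\infty}V_\varrho(\cdot,v)\frac{\dd\varrho}{\varrho^{3/4}} .
$$
We set $\mathcal L:=c_{\rm bulk}|\Delta|^{3/4}+\mathcal B$ and define it through the closed quadratic form
$$
\mathfrak q(f)=c_{\rm bulk}\|f\|_{3/4,0}^2+c_{\rm bd}\sum_v\int_0^\infty\langle V_\varrho(\cdot,v),f\rangle^2\varrho^{-3/4}\dd\varrho
$$
on the natural domain. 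Expanding $V_\varrho(\cdot,v)$ in cosines gives $\langle V_\varrho(\cdot,v),f\rangle=\sum_\ell \varrho c_\ell(v)\hat f_c(\ell)/(\varrho+(\pi\ell)^2)$. For $f$ in the form domain of $|\Delta|^{3/4}$ with $f(v)=0$, the boundary part of $\mathfrak q$ should be bounded by a multiple of $\|f\|^2_{3/4}$. I expect it to behave like $\int|f-f(v)|^2$-type weighted quantities near $v$ (a Hardy-type norm in $H^{3/4}$). The form is closed by Fatou, so Kato's theory gives a nonnegative self-adjoint $\mathcal L$ generating a contraction semigroup.

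Second, we handle the inhomogeneity by subtracting a stationary profile. Note that $F$ need not lie in $L^2$, since $V_\varrho(u,v)\sim\sqrt\varrho e^{-\sqrt\varrho|u-v|}$ and the $\varrho$-integral diverges like $|u-v|^{-1/2}$, which is borderline. So we do not treat it as a forcing term. Instead we write $T=\bar T+\theta$, where $\bar T$ is a fixed smooth function with $\bar T(v)=T_v$, for instance affine. Then $\langle V_\varrho(\cdot,v),T_v-\bar T\rangle=\langle V_\varrho(\cdot,v),\bar T(v)-\bar T\rangle=O(\varrho^{-1/2})$ as $\varrho\to\infty$, since $\bar T$ is Lipschitz. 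The remaining forcing
$$
g:=-c_{\rm bulk}|\Delta|^{3/4}\bar T+c_{\rm bd}\sum_v\int V_\varrho(\cdot,v)\langle V_\varrho(\cdot,v),T_v-\bar T\rangle\varrho^{-3/4}\dd\varrho
$$
then lies in the dual of the form domain. This requires $\int \varrho^{-1}\cdot\varrho^{-3/4}\dd\varrho<\infty$ at infinity, and near $\varrho=0$ the factor $\varrho^{-3/4}$ is integrable.

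We then solve $\partial_t\theta=-\mathcal L\theta+g$ with $\theta(0)=T_{\rm ini}-\bar T$ via Duhamel, or via the Lions theorem in the Gelfand triple $\mathcal D(\mathfrak q)\subset L^2\subset \mathcal D(\mathfrak q)'$. This gives $\theta\in C([0,\infty);L^2)\cap L^2_{\rm loc}(\mathcal D(\mathfrak q))$. Consequently $T$ satisfies i), the finiteness of the boundary part of $\int_0^t\mathfrak q(\theta)$ together with the $O(\varrho^{-1/2})$ bound for $\bar T$ gives ii), and testing against $\varphi\in C^\infty_c(0,1)$ gives iii).

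Uniqueness: the difference $w$ of two weak solutions satisfies the homogeneous weak equation with test functions in $C_c^\infty(0,1)$ and $\int_0^t\int\langle V_\varrho(\cdot,v),w\rangle^2\varrho^{-3/4}<\infty$. The main obstacle, and where the real work lies, is to extend the test class from $C_c^\infty(0,1)$ to the form domain, so as to test with $w$ itself (after time regularization) and obtain $\|w(t)\|^2+2\int_0^t\mathfrak q(w)\le0$. The plan is to show that $C^\infty_c(0,1)$ is a core for $\mathfrak q$, i.e.\ that the form domain sits inside $H^{3/4}_0$-type functions vanishing at the endpoints in a weak sense. We would do this by proving that $\int_0^\infty\langle V_\varrho(\cdot,v),f\rangle^2\varrho^{-3/4}\dd\varrho<\infty$ forces the averages $\langle V_\varrho(\cdot,v),f\rangle\to0$ as $\varrho\to\infty$ at a rate sufficient for cutoff approximations $f\chi_\epsilon$ to converge in $\mathfrak q$.

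Finally, \eqref{boundary}. The energy estimate gives $\Theta:=\int_0^t\theta\,\dd s\in\mathcal D(\mathfrak q)\subset H^{3/4}$, and $H^{3/4}[0,1]\subset C[0,1]$ in one dimension since $3/4>1/2$. Thus $\int_0^tT\,\dd s=t\bar T+\Theta$ is continuous. The finiteness of $\int_0^\infty\langle V_\varrho(\cdot,v),\Theta\rangle^2\varrho^{-3/4}\dd\varrho$ (by Jensen in time from ii)) together with continuity of $\Theta$ forces $\Theta(v)=0$. Indeed, $\langle V_\varrho(\cdot,v),\Theta\rangle\to\Theta(v)$ as $\varrho\to\infty$ because $V_\varrho(\cdot,v)$ is an approximate identity, and $\int^\infty\varrho^{-3/4}\dd\varrho=\infty$. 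Hence $\int_0^tT(s,v)\,\dd s=T_vt$.
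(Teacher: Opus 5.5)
Your existence argument is sound. The form $\mathfrak q$ is closed. Lifting by an affine $\bar T$ puts the forcing into $\mathcal D(\mathfrak q)'$, since $|\langle V_\varrho(\cdot,v),T_v-\bar T\rangle|\lesssim 1\wedge\varrho^{-1/2}$. The Lions solution satisfies i)--iii), and \eqref{boundary} holds for it because $\int_0^t\theta\,\dd s\in\mathcal D(\mathfrak q)\subset H^{3/4}[0,1]\subset C[0,1]$ must vanish at $v=0,1$. One small slip: $\int_0^\infty V_\varrho(u,v)\varrho^{-3/4}\dd\varrho\sim\sqrt\pi|u-v|^{-3/2}$ (cf. \eqref{buv}), not $|u-v|^{-1/2}$; this only reinforces your choice not to use $F$.

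\textbf{The gap is in uniqueness.} A weak solution in the sense of Definition \ref{df1.5} is only known to lie in $C([0,\infty);L^2_w[0,1])$ and to satisfy ii). The definition does not provide $\int_0^t\|w(s)\|_{3/4,0}^2\dd s<\infty$. So $w(s)$ is not known to be in $\mathcal D(\mathfrak q)$, the quantity $\mathfrak q(w)$ is undefined, and you cannot test with $w$ itself. Time regularization does not supply the missing spatial regularity. The form-core property of $C_c^\infty(0,1)$ does not help either. That property is true, by the fractional Hardy inequality $\int_0^1 f^2(u)u^{-3/2}\dd u\lesssim\|f\|_{3/4}^2$ for $f\in H^{3/4}$ with $f(0)=0$, valid since $3/4>1/2$. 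But passing to the limit in iii) along $\varphi_k\to\psi$, with $w$ merely in $L^2$, requires control of $\mathcal L\varphi_k$ (in graph norm, or in duality with functions satisfying ii)), not of $\mathfrak q(\varphi_k-\psi)$.

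This is not a technicality. There is no reason for $\mathcal L$ restricted to $C_c^\infty(0,1)$ to be essentially self-adjoint. Solutions of $(\mathcal L+1)k=0$ in $(0,1)$ may lie in $L^2$ while blowing up like a negative power of the distance to the boundary, and condition ii) is exactly what must rule them out.

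\textbf{What is missing is an elliptic lemma.} Suppose $W\in L^2[0,1]$ satisfies $\sum_v\int_0^\infty\langle V_\varrho(\cdot,v),W\rangle^2\varrho^{-3/4}\dd\varrho<\infty$, and $\langle(\mathcal L+1)\varphi,W\rangle=\langle\varphi,h\rangle$ for all $\varphi\in C_c^\infty(0,1)$ and some $h\in L^2[0,1]$. (Note $\mathcal L\varphi\in L^2$ here, because $\langle V_\varrho(\cdot,v),\varphi\rangle$ decays rapidly in $\varrho$.) The lemma should state that then $W$ lies in the domain of your form realization $\mathcal L$ and $(\mathcal L+1)W=h$. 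Equivalently, no nonzero $k\in L^2[0,1]$ with $\langle(\mathcal L+1)\varphi,k\rangle=0$ for all $\varphi\in C_c^\infty(0,1)$ has a finite boundary part.

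Given this lemma, uniqueness follows by applying it to $W(t)=\int_0^t w(s)\dd s$:
\begin{itemize}
\item iii) together with Fubini gives $\langle\mathcal L\varphi,W(t)\rangle=-\langle\varphi,w(t)\rangle$.
\item ii) for both solutions, plus Cauchy--Schwarz in time, gives that the boundary part of $W(t)$ is finite.
\item Hence $\mathcal L W(t)=-w(t)$, and $\frac{\dd}{\dd t}\|W(t)\|^2_{L^2[0,1]}=2\langle w(t),W(t)\rangle=-2\mathfrak q(W(t))\le 0$.
\item Therefore $W\equiv0$, and so $w\equiv0$.
\end{itemize}
Proving the lemma requires analysing how $(\mathcal L+1)$-harmonic functions behave at the boundary. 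That analysis, not the form-core statement, is where the real work lies.
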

The proof of Theorem \ref{lm032209-24} is presented in \cite[Section B of the Appendix]{KO-supp}.

\begin{remark}

  Using \eqref{G-la1}, \cite[Theorem 1.1 and Definition 2.4]{KW17} we
 conclude that for   $\varphi\in C^\infty[0,1]$ (the integral is understood in the principal value sense)
\begin{equation}
  \label{011504-25}
  \begin{split}
 &   |\Delta|^{3/4}\varphi(u)=
 \int_0^1 q(u',u) [\varphi(u')-\varphi(u)]
 \dd u',\\
 &
 \int_0^{+\infty}   V_{\varrho}(u
  ,v)  V_{\varrho}(u'
  ,v) \frac{\dd\varrho}{\varrho^{3/4}}=g(u,u';v),
 \end{split}
\end{equation}
 with 
  \begin{align}
    \label{012412-24}
    &q(u,u'):=\frac{ 3}{2^{5/2} {\pi}^{1/2}} \sum_{n\in\bbZ} 
        \Big(\frac{1 }{|u+u'+2n|^{5/2}} +\frac{1}{|u-u'+2n|^{5/2}} \Big) ,\notag\\
    &g(u,u';v)=\sum_{n,n'\in\bbZ}W (u+v+2n,u'+v+2n'),\quad v=0,1,\quad \mbox{where}\notag\\
    &
      W(u,u'):=  \frac{5\Gamma\Big(\frac{1}{4}\Big)^2}{2^{5}\pi}\int_0^{\pi/2}
      \Big(\frac {\sin^2(2\theta)}{ (u\sin\theta)^2+(u'\cos\theta)^2}\Big)^{5/4}\dd \theta
.  
  \end{align}
{Here $\Gamma(\cdot)$ is the Euler gamma function.}

  Obviously $W(u,u')=W(u',u)$ {and an elementary calculation leads
    to 
    \begin{equation}
      \label{buv}
   \int_0^1g(u,u';v) \dd u'  
    = \sqrt{\pi} \sum_{n\in\bbZ} \frac{1}{|u+v+2n|^{3/2}} ,\qquad
    v=0,1.   
  \end{equation}}

  \end{remark}

   {
     \begin{remark}
       We can rewrite \eqref{f-diff1} as, cf \eqref{buv},
        \begin{equation}
    \label{eq:71}
    \begin{split}
      \partial_t T(t, u)&= -c_{\rm bulk}|\Delta|^{3/4}T(t, u)  + c_{\rm  bd}    \sum_{v=0,1}  
 \int_0^1 g(u,u';v) [ T_v -  T(t, u) ]\dd u'\\ 
&\qquad +
 c_{\rm  bd}    \sum_{v=0,1}
 \int_0^1 g(u,u';v) [ T(t, u) -  T(t, u') ]\dd u'\\
& = \int_0^1 r(u,u') \left[T(t,u') - T(t,u)\right] du'
    + \sum_{v = 0,1} b(u;v) \left[T_v - T(t,u)\right],
    \end{split}
  \end{equation}
     where
     \begin{equation}
       \label{eq:5}
       \begin{split}
         & r(u,u') := c_{\rm bulk} q(u',u) - c_{\rm bd} \sum_{v=0,1}
         g(u,u'; v),\\
         &
       b(u;v):=  c_{\rm  bd}     
 \int_0^1 g(u,u';v)  \dd u'.
         \end{split}
       \end{equation}
      
   \end{remark}

   \begin{remark}
     \label{rem:probint}
     If $r(u,u')\ge 0$ we can interpret  \eqref{eq:71} as the
       equation describing  the evolution of the density $T(t,u)$ of
       a Markov process
     with creation and annihilation. The dynamics of the
     process can be described as follows:
     {a particle}
     jumps from $u$ to $u'$ with rate $r(u,u')$ (this takes into account the
     jumps with reflection of the fractional Laplacian minus
     {the jumps censored by the boundaries}).
     At time $t$ and position $u$ the particle gets annihilated at the rate
     {$  b(u,0)+ b(u,1) $}
     and it is created at this site at  the
     rate $ b(u,0) T_L +  b(u,1) T_R$. 
   \end{remark}
}

\subsection{Formulation of the main result}

\begin{theorem}[The limit of thermal energy and equipartition]
  \label{thm012911-23}
  Under the assumptions about the initial data made in Section
  \ref{sec1.4}, for any continuous test function $\varphi:[0,1]\to\bbR$ and any $t\ge 0$, we have
\begin{align}
   \lim_{n\to+\infty}\frac{1}{n}\sum_{x=0}^{n}  \varphi\left(\frac x{n} \right)
 \bbE_{n}\big[ {\cal E}_{x}( t)\big] =\int_0^1 T(t,u)\varphi(u)\dd u,
 \label{eq:conv-temp}
\end{align}
where $T(t,u)$ is the solution of \eqref{f-diff1} with the initial data 
$T(0,u)=T_{\rm ini}(u)$ and the boundary conditions $T(t,0)=T_L$,
$T(t,1)=T_R$. Here
\begin{align}
    \label{022111-24}
       & c_{\rm bulk} =\frac{1}{(2^3\ga)^{1/2}}
       ,\quad
         c_{\rm bd}  =\frac{ \sqrt{2}\tilde\ga c_{\rm
         bulk}}{(1+\tilde\ga)^2}. 
          \end{align}
In addition, for any compactly supported,
continuous function $\Phi:\bbR_+\times [0,1]\to\bbR$
\begin{equation}
  \label{eq:conv-temp2}
  \begin{split}
  \lim_{n\to+\infty}\frac{1}{n}\sum_{x=0}^{n} \int_{0}^{+\infty} \Phi\left(t,\frac x{n} \right)
 \bbE_{n}\big[ p_{x}^2( t)\big] \dd t
  &=
   \lim_{n\to+\infty}\frac{1}{n}\sum_{x=0}^{n} \int_{0}^{+\infty} \Phi\left(t,\frac x{n} \right)
   \bbE_{n}\big[ {\cal E}_{x}( t)\big] \dd t
   \\
   &=  \int_{0}^{+\infty} \dd t\int_0^1 T(t,u) \Phi\left(t, u\right)\dd u.
 \end{split}
\end{equation}
\end{theorem}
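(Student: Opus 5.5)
The plan is to work with the covariance (Wigner-type) matrix of the Gaussian-like second moments. Since the dynamics is linear with linear noise, the averaged energies $\bbE_n[\mathcal E_x(t)]$ are determined by the closed system of ODEs satisfied by the covariances $S(t)=\bbE_n[\mathbf X(t)\mathbf X(t)^T]$, $\mathbf X=(\mathbf r,\mathbf p)$. We diagonalize the Hamiltonian part in the discrete cosine/sine basis of the Neumann Laplacian $\Delta_{\rm N}$, with modes $\theta_n(j)=\pi j/(n+1)$ and dispersion $\omega(j)=2\sin(\theta/2)$. We then introduce wave functions $\hat\psi(j)=\omega\hat r(j)+i\hat p(j)$ and the Wigner-type functional $W_n(t;\varphi)$ pairing the energy profile with test functions $\varphi$ via the discrete cosine transform.

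\textbf{A priori bounds.} Using the entropy inequality (Assumption \ref{ass1}) together with the standard entropy production bound for the generator $\mathcal G$ (the baths contribute a Dirichlet form in $p_0,p_n$), we get $\int_0^t\bbE_n[(p_0^2-T_L)^2+ (p_n^2-T_R)^2]\dd s\lesssim n^{-1/2}$-type bounds. Equivalently, the boundary currents $j_{-1,0}, j_{n,n+1}$ are of order $n^{-1/2}$ after time integration, so that the total energy stays $O(n)$ uniformly on compacts. Assumption \ref{ass2} and its propagation in time (Corollary \ref{cor021212-25}) control the off-diagonal covariances in $\ell^2$. These bounds give tightness of $\mathcal E_n[t;\cdot]$ in $C([0,t_*];(C[0,1])^*)$ and, through Theorem \ref{lm032209-24}, limit points in $L^2$. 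They also provide the equipartition statement \eqref{eq:conv-temp2}: the difference $p_x^2-r_x^2$ (more precisely, the cosine transform of $\bbE[p_x^2-r_{x}^2]$) is a time derivative of the term $\bbE[p_xr_x]$ plus a small error under the $n^{3/2}$ time scaling, hence it vanishes after time averaging against $\Phi$.

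\textbf{Identifying the limit.} We write the evolution of $\mathcal E_n[t;\varphi]$ via \eqref{current}. The stochastic current $j^{(s)}$ gives only a diffusive $O(n^{-1/2})$ contribution. The mechanical current $p_xr_{x+1}$ requires a fluctuation–dissipation decomposition: we solve a resolvent-type equation for the off-diagonal covariance (the "$Y$" part of the Wigner function) in Fourier variables. This expresses $n^{3/2}\nabla^\star\bbE[j^{(a)}]$, up to vanishing errors, as the symbol $n^{3/2}\omega'(\theta)^2\,\hat\varphi\,\cdot\,[\gamma R(\theta)]^{-1}$ convolved with energy. After the scaling $\theta\sim n^{-1/2}$ (i.e. $\ell\sim n^{1/2}$ frequencies in the time scale $n^{3/2}$), this produces the Lévy symbol $(\pi\ell)^{3/2}c_{\rm bulk}$, with $c_{\rm bulk}=(2^3\gamma)^{-1/2}$ coming from integrating $\omega'^2/(\gamma\theta^2+\ldots)$ over the long-wave scale.

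\textbf{Boundary terms (the main obstacle).} The boundary terms arise from the same resolvent equation, where the Langevin damping at $x=0,n$ introduces a rank-one (per endpoint) perturbation. We invert it by a Sherman–Morrison formula; the resulting denominator is where $(1+\tilde\gamma)^{-2}$ in $c_{\rm bd}$ appears. In the limit the correction takes the form $\int_0^\infty\langle V_\varrho(\cdot,v),\varphi\rangle\langle V_\varrho(\cdot,v),T_v-T\rangle\varrho^{-3/4}\dd\varrho$, since the resolvent of the discrete Neumann Laplacian at spectral parameter $\varrho n^{-2}$-scaled converges to $G_\varrho$ and $V_\varrho=\varrho G_\varrho$. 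We control the rank-one pieces using the $O(n^{-1/2})$ bounds on $\bbE[p_0^2-T_L]$ and the time-integrated estimate \eqref{012111-24}. These must be established at the discrete level, uniformly in $n$, and are the delicate step: I would first prove an $\ell^2$-bound on the boundary traces of the resolvent and a uniform bound on the $\varrho$-integral near $\varrho\to0$ and $\varrho\to\infty$.

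\textbf{Conclusion.} Passing to the limit for $\varphi\in C_c^\infty(0,1)$ shows that any limit point is a weak solution of \eqref{f-diff1} in the sense of Definition \ref{df1.5}, including condition ii) from the limiting bounds. Uniqueness in Theorem \ref{lm032209-24} then gives convergence of the whole sequence. Extension to continuous $\varphi$ on $[0,1]$ follows from the uniform energy bound and the absence of mass concentration at the boundary points, which is also given by the a priori boundary estimates.
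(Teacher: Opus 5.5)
Your overall architecture matches the paper's: entropy and energy bounds, the $\ell^2$ covariance bound from \eqref{021212-25}, compactness, the negligible stochastic current, resolution of the time-integrated covariance equation \eqref{011312-25} in the cosine/sine bases, the bulk term giving $c_{\rm bulk}|\Delta|^{3/4}$, and uniqueness. The genuine gap is the boundary step, which is the real content of the theorem.

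\textbf{The boundary step.} The baths do not enter the covariance equation as a rank-one perturbation. The term $\tilde\gamma(EX+XE)$ involves whole rows and columns of $\langle S\rangle_t$, namely the unknown boundary--bulk covariances $\bbE_n[p_zp_x]$ and $\bbE_n[p_zr_x]$, $z=0,n$. A Sherman--Morrison/Woodbury inversion therefore only restates the problem as an $O(n)$-dimensional system for these unknowns.

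These covariances carry the $O(1)$ boundary effect. After the rescaling $\mathfrak b^{(p,v)}_n=(n+1)^{1/2}\tilde b^{(p)}_{nv,j}$ they are merely bounded in $L^2(\dd s\,\dd\varrho)$, see \eqref{011612-25}, and they enter \eqref{051612-25} with a non-vanishing weight. By contrast, the diagonal entry $T_v-\bbE_np_{nv}^2$ that you propose to use contributes only about $\sqrt2\,(T_v-\bbE_np_{nv}^2)$ to $\mathfrak b^{(p,v)}_n$, and this vanishes in that norm by the bound on $\mathfrak B_n(t)$. So bounds cannot "control" the boundary pieces. Since \eqref{051612-25} is not closed, their weak limits must be identified in terms of $T_v$ and the energy profile.

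The paper does this through Propositions \ref{prop010901-25} and \ref{prop011001-25}. These couple $\mathfrak b^{(p,v)}_n$ to the momentum--stretch field $\mathfrak b^{(pr,v)}_n$ through the Hilbert-type singular integral operator $\mathfrak T$ and its adjoint. Only because $\mathfrak T^\star\mathfrak T=\pi^2I$ does the coupled system collapse to multiplication by $(1+2\tilde\gamma)+\tilde\gamma^2=(1+\tilde\gamma)^2$. This gives Theorem \ref{thm011001-25} and hence $c_{\rm bd}$.

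Neither the second field nor $\mathfrak T$ appears in your plan. Your claim that $(1+\tilde\gamma)^{-2}$ is a Sherman--Morrison denominator is unsupported, since a single such denominator is affine in $\tilde\gamma$. Your intuition is sound at the level of amplitudes: for long waves the Langevin end reflects with coefficient $(1-\tilde\gamma)/(1+\tilde\gamma)$ and absorbs the fraction $4\tilde\gamma/(1+\tilde\gamma)^2$. One could try to exploit this via $\langle S\rangle_t=\frac1{2\pi}\int_{\bbR}(A+i\omega)^{-1}Q\,(A^T-i\omega)^{-1}\dd\omega$ (for stable $A$), where $\tilde\gamma E$ really is of finite rank. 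But then the full asymptotic analysis of that double-resolvent expression would be the proof. Here $Q$ contains the unknown $\langle(\nabla^\star p_y)^2\rangle_t$ and the time-boundary terms, and the analysis must produce the measure $\varrho^{-3/4}\dd\varrho$ and the kernels $V_\varrho$. None of this is in your proposal.

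\textbf{A priori bounds and equipartition.} Several claims here also need repair.
\begin{itemize}
\item The quantity $\int_0^t\bbE_n[(p_0^2-T_L)^2]\dd s$ does not vanish, because the variance of $p_0^2$ is of order $T_L^2$. The correct statement is $\int_0^t(T_L-\bbE_np_0^2)^2\dd s\lesssim n^{-1/2}$, which the paper reads off \eqref{021212-25}.
\item $L^2$ limit points come from Corollary \ref{cor021212-25}, not from Theorem \ref{lm032209-24}.
\item \eqref{012111-24} is a property of the limit; its discrete counterpart is \eqref{011612-25}.
\item Your virial argument leaves nearest-neighbour correlations uncontrolled, since $\mathcal A(p_xr_x)=p_x^2-r_x^2+r_xr_{x+1}-p_xp_{x-1}$.
\item The bulk identification needs equipartition specifically in the form \eqref{eq:40}, with $(\nabla^\star p_y)^2$, the quadratic variation of the exchange noise that enters $F_{j,j'}$. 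So you must also show that $\bbE_n[p_yp_{y-1}]$ vanishes on average, not only that $\bbE_np_y^2\approx\bbE_nr_y^2$.
\end{itemize}
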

  In what follows we present an outline of the  
  proof  of the result. The detailed argument is given in \cite{KO-supp}.
 
\begin{remark}
  Concerning the coefficients $c_{\rm bulk} $ and $c_{\rm bd}$, note that
\begin{align*}
                &
            \frac{c_{\rm bd}}{ c_{\rm bulk}}\to0,\quad\mbox{as}\quad 
             \tilde\ga\to0,\quad\mbox{or}\quad \tilde\ga\to+\infty,
\end{align*}
In addition the ratio  $\frac{c_{\rm bd}}{ c_{\rm bulk}}$ achieves
maximum  $\frac{1}{2^{3/2}}$ at
$\tilde\ga_{\rm max}=1$.
  \end{remark}

  Finally concerning the estimates on the size of the current, the
  following result holds, see \cite[Theorem 2.14]{KO-supp} 
  \begin{theorem}[Bound on  the energy currents]
    \label{thm-current}
  For any $t_*>0$  there exists $C_{\mathcal{J},t_*}>0$
  such that
  \begin{equation*}
      \sup_{x=0,\ldots,n+1}|\int_0^t\bbE_n\left[ j_{x-1,x}( s)\right] \dd s|\le
   \frac{C_{\mathcal{J},t_*}}{\sqrt{n}},\quad t\in[0,t_*],\,n=1,2,\ldots.
        \end{equation*}
      \end{theorem}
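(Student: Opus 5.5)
The plan is to bound the time-integrated expected current through the exact identities \eqref{current}. Summing them from the left boundary gives, for every $x$,
\begin{equation*}
\int_0^t\bbE_n\big[j_{x-1,x}(s)\big]\dd s=\int_0^t\bbE_n\big[j_{-1,0}(s)\big]\dd s-\frac{1}{n^{3/2}}\sum_{y=0}^{x-1}\Big(\bbE_n[\mathcal E_y(t)]-\bbE_n[\mathcal E_y(0)]\Big).
\end{equation*}
The entropy bound (Assumption \ref{ass1}) together with the bath-adapted entropy production estimate gives $\sum_y\bbE_n[\mathcal E_y(t)]\le Cn$ uniformly for $t\le t_*$. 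Hence the second term is $O(n^{-1/2})$ uniformly in $x$. It therefore suffices to prove that the time-integrated boundary currents $\tilde\ga\int_0^t(T_L-\bbE_n[p_0^2])\dd s$ and the analogous quantity at $x=n$ are $O(n^{-1/2})$.

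For the boundary current I will use the standard entropy production (Dirichlet form) bound. Take $\nu_T$ as reference with $T$ chosen between the bath temperatures, or work with the relative entropy against the inhomogeneous product measure. The time derivative of $\mathbf H$ along the dynamics is $n^{3/2}$ times minus the Dirichlet forms of $\ga S_{\rm ex}$ and $\tilde\ga S_{L,R}$, plus a source term of order $n^{3/2}\tilde\ga|T_L-T_R|$ times boundary quantities. Integrating over $[0,t]$ gives
\[
\int_0^t\mathcal D_{\rm ex}(s)\dd s+\int_0^t\mathcal D_{L,R}(s)\dd s\le\frac{C}{\sqrt n},
\]
using $\mathbf H\le Cn$. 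The boundary Dirichlet form controls $\bbE[(p_0^2-T_L)]$ only weakly, so I will combine it with the second-moment structure below.

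Because the dynamics is linear, the covariance matrix $S(t)$ of $(\mathbf r,\mathbf p)$ satisfies a closed linear ODE (Lyapunov-type) with the baths entering as sources. Corollary \ref{cor021212-25} gives the uniform bound $\mathcal H^{(2)}_n(t)\le C$. I will derive explicit equations for $\bbE[p_0^2]$, $\bbE[p_0r_1]$ and $\bbE[r_1^2]$ at the boundary. The equation for $\bbE[p_0r_1]$ gives
\[
n^{3/2}\int_0^t\Big(\bbE[r_1^2]-\bbE[p_0^2]+\bbE[p_1p_0]\text{-terms}\Big)\dd s=O\big(\sup|\bbE p_0r_1|\big).
\]
By the bound on $\mathcal H^{(2)}$, $|\bbE p_0r_1|\le C\sqrt n$, and this produces exactly the $n^{-1/2}$ rate after dividing by $n^{3/2}\cdot n^{-1}$ in the appropriate normalization. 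The equation for $\bbE[p_0^2]$ itself reads
\[
\frac{d}{dt}\bbE[p_0^2]=n^{3/2}\Big(2\bbE[p_0r_1]+\ga(\dots)+2\tilde\ga(T_L-\bbE[p_0^2])\Big).
\]
Integrating in time and using the $O(1)$ bound on $\bbE p_0^2$ (again from $\mathcal H^{(2)}$-type or energy bounds at a single site, which I expect needs care) expresses $\tilde\ga\int(T_L-\bbE p_0^2)$ through $n^{-3/2}$ plus the integrated boundary mechanical and stochastic currents. This closes a system that I solve for the boundary current.

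The main obstacle is the pointwise-in-site control. I need $\bbE[p_0^2]$ and $\bbE[p_0r_1]$ to be bounded uniformly in $n$ and $t$, whereas the available estimates (the entropy bound and $\mathcal H^{(2)}$) are averaged over $x$. The first approach I will try is the time-integrated version with a Fourier (cosine/sine basis) representation of the covariance ODE. In that representation the bath damping yields a boundary dissipation term of size $n^{3/2}\tilde\ga\int(\bbE p_0^2)$. Balancing this against the energy input bounds $\int_0^t\bbE p_0^2\,\dd s$ by $C$, and $\int_0^t(\bbE p_0r_1)^2$ similarly via the exchange Dirichlet form. This suffices, since only time integrals of currents appear in the statement.
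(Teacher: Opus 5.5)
Your reduction is correct: summing \eqref{current} and using the energy bound, it suffices to show $J_L:=\tilde\ga\int_0^t(T_L-\bbE_n[p_0^2(s)])\,\dd s=O(n^{-1/2})$. The gap is that nothing after this step produces that smallness.

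The boundary moment equations do not close. The equation for $\bbE_n[p_0^2]$ is just \eqref{current} at $x=0$. It trades $J_L$ for the current $\int_0^t\bbE_n[j_{0,1}(s)]\,\dd s$ through the first bond, which is equally unknown. The equation for $\bbE_n[p_0r_1]$ reads $n^{-3/2}\frac{\dd}{\dd t}\bbE_n[p_0r_1]=\bbE_n[r_1^2-p_0^2+p_0p_1+\ga(p_1-p_0)r_1-\tilde\ga p_0r_1]$. It does not contain $T_L$ at all. It also brings in $\bbE_n[r_1^2]$, $\bbE_n[p_0p_1]$ and $\bbE_n[p_1r_1]$, whose equations reach one site further, and so on through the whole chain.

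The $n^{-1/2}$ size of the current is a global effect, and every input you list is too weak:
\begin{itemize}
\item[(i)] Energy balance controls only $J_L-J_R$, where $J_R:=\int_0^t\bbE_n[j_{n,n+1}(s)]\,\dd s$.
\item[(ii)] Bounds such as $|\bbE_n[p_0r_1]|\le C\sqrt n$ or $\int_0^t\bbE_n[p_0^2]\,\dd s\le C$ are $O(1)$ information.
\item[(iii)] The best the quadratic identity \eqref{021212-25} gives, through $\mathfrak B_n$, is $\int_0^t\big(T_L-\bbE_n[p_0^2]\big)^2\,\dd s+\int_0^t\big(\bbE_n[p_0r_1]\big)^2\,\dd s\le Cn^{-1/2}$. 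After Cauchy--Schwarz in time this yields only $|J_L|\le Cn^{-1/4}$.
\end{itemize}

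Your entropy step does not work either. The bath source term in $\frac{\dd}{\dd t}\mathbf H_{n,T}$ is $n^{3/2}$ times a quantity that becomes small (after cancellation against the boundary Fisher information) only if $\bbE_n[p_0^2]\approx T_L$ and $\bbE_n[p_n^2]\approx T_R$. That is, it is small only if the currents are already small, so $\int_0^t\mathcal D\,\dd s\le C/\sqrt n$ is circular.

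The same caveat applies to invoking Theorem \ref{cor012102-24} when $T_L\ne T_R$. The right side of \eqref{021212-25} equals $\frac{2n^{3/2}}{n+1}(T_LJ_L-T_RJ_R)=\frac{2n^{3/2}}{n+1}(T_L-T_R)J_L+O(1)$. A uniform bound on $\mathcal H^{(2)}_n$ therefore already presupposes control of $J_L$ at the scale you are trying to reach; the paper proves it directly only for $T_L=T_R$.

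What is missing is a global argument that controls $\int_0^t|\,\cdot\,|$ directly, instead of squares of covariances (which is why \eqref{021212-25} loses $n^{1/4}$). One way to do this uses linearity and positivity:
\begin{itemize}
\item[(a)] Since $\mathcal G$ maps quadratic functions to quadratic functions, the second-moment matrix $S(t)$ of $(\mathbf r,\mathbf p)$ solves a closed affine equation $\dot S=n^{3/2}(\mathbb L S+Q)$. Here $Q$ has only the entries $2\tilde\ga T_L$ and $2\tilde\ga T_R$, at $(p_0,p_0)$ and $(p_n,p_n)$.
\item[(b)] The homogeneous flow is the second-moment flow of the chain with zero-temperature baths. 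It therefore preserves positive semidefiniteness and decreases $\frac12\mathrm{tr}\,S$ at rate exactly $n^{3/2}\tilde\ga(S_{p_0p_0}+S_{p_np_n})$.
\item[(c)] Take a stationary solution $S_\infty$ and split $S(0)-S_\infty=P_+-P_-$ with $P_\pm$ positive semidefinite. Running $P_\pm$ under the homogeneous flow gives $\tilde\ga\int_0^t|S_{p_0p_0}(s)-(S_\infty)_{p_0p_0}|\,\dd s\le(\mathrm{tr}\,S(0)+\mathrm{tr}\,S_\infty)/(2n^{3/2})$.
\item[(d)] Hence $J_L=tJ_\infty+O(n^{-1/2})$, where $J_\infty=\tilde\ga(T_L-(S_\infty)_{p_0p_0})$ is the stationary current.
\end{itemize}
For $T_L=T_R=T$, invariance of $\nu_T$ gives $S_\infty=T\,\mathrm{Id}$ and $J_\infty=0$, and the theorem follows at once. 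For $T_L\ne T_R$, the genuinely superdiffusive input that remains is that the steady state satisfies $J_\infty=O(n^{-1/2})$ and $\mathrm{tr}\,S_\infty=O(n)$. Establishing this requires a global analysis of the stationary covariance equation, for instance in the bases $\psi_j,\phi_j$; no analysis localized at the boundary sites can replace it.
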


     % \textcolor{red}{\bf DOTAD}
 
\section{Entropy and energy bounds}

     \label{sec3}
     
\begin{theorem}[Entropy bound]
  \label{entropy-t}
  Under the assumption made in Section \ref{sec1.4},  for any $t_*>0$
   there exists a constant $C_{H,t_*}>0$ such that
   \begin{align}
     \label{011212-25}
        {\mathbf{H}}_{n,T}(t) 
  \le C_{H,t_*} n,\qquad t\in[0,t_*].
  \end{align}
\end{theorem}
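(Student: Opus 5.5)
We plan to compute the time derivative of the relative entropy ${\mathbf H}_{n,T}(t)$ of the law $\mu_n(t)$ of $({\bf r}(t),{\bf p}(t))$ with respect to the reference Gaussian measure $\nu_T$, and bound it by a constant times $n$. Fix a reference temperature $T>0$; the natural choice is $T=T_L$, or any value between $T_L$ and $T_R$. Write $f_n(t)$ for the density of $\mu_n(t)$ with respect to $\nu_T$. The density evolves by the adjoint of $n^{3/2}\mathcal G$ with respect to $\nu_T$.

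The first step is to identify how each part of the generator acts with respect to $\nu_T$. The Hamiltonian part $\mathcal A$ is antisymmetric in $L^2(\nu_T)$, since $\nu_T$ is a product of Gaussians in $(r_x,p_x)$ with the same temperature and $\mathcal A$ preserves $\sum_x\mathcal E_x$; the convention $r_0=r_{n+1}=0$ is consistent with this. The exchange part $S_{\rm ex}$ is symmetric with respect to $\nu_T$ because momentum swaps preserve the product measure. The bath operators $S_L,S_R$ are not reversible with respect to $\nu_T$ when $T_L,T_R\neq T$. Writing $S_z=T_z\partial_{p_z}^2-p_z\partial_{p_z}$, one has $S_z=S_z^{T}+(T_z-T)\partial_{p_z}^2$, where $S_z^{T}$ is the Ornstein--Uhlenbeck generator reversible for $\nu_T$. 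Hence $S_z^{*}$ (the adjoint with respect to $\nu_T$) equals $S_z^{T}+(T_z-T)\,\partial^{*2}_{p_z}$ with $\partial_{p_z}^{*}=-\partial_{p_z}+p_z/T$.

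The standard entropy production computation then gives
$$
\frac{\dd}{\dd t}{\mathbf H}_{n,T}(t)= -n^{3/2}\gamma\,\mathcal D_{\rm ex}(f_n)-n^{3/2}\tilde\gamma\sum_{z=0,n}T\int\frac{(\partial_{p_z}f_n)^2}{f_n}\dd\nu_T+n^{3/2}\tilde\gamma\sum_{z=0,n}(T_z-T)\int \partial_{p_z}^2(\log f_n)\, f_n\,\dd\nu_T ,
$$
up to the precise form of the last term. Here $\mathcal D_{\rm ex}\ge 0$ is the Dirichlet form of $S_{\rm ex}$, which we simply discard.

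The second step handles the source term coming from the baths. Integrating by parts, it can be rewritten as $n^{3/2}\tilde\gamma\sum_z \frac{T_z-T}{T}\,\bbE_n\big[\frac{p_z^2(t)}{T}-1\big]$ plus terms in $\partial_{p_z}\log f_n$. Those latter terms are controlled by the negative Fisher information term at the endpoints through Cauchy--Schwarz, $ab\le \epsilon a^2+b^2/(4\epsilon)$. We are left with the bound
$$
\frac{\dd}{\dd t}{\mathbf H}_{n,T}(t)\le C n^{3/2}\sum_{z=0,n}\big|\bbE_n[p_z^2(t)]-T_z\big| .
$$
This is too large pointwise in time. However, by \eqref{current} the quantity $\tilde\gamma(T_z-\bbE_n[p_z^2])$ is exactly the boundary current $\bbE_n[j_{-1,0}]$ (resp. $-\bbE_n[j_{n,n+1}]$). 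So after integrating in time we get
$$
{\mathbf H}_{n,T}(t)\le{\mathbf H}_{n,T}(0)+C n^{3/2}\sum_{z}\Big|\frac{T_z-T}{T}\Big|\,\Big|\int_0^t\bbE_n[j_{z}(s)]\dd s\Big| .
$$
The sign issue requires care. We will keep the exact linear term $\propto (T_z-T)\int_0^t \bbE_n[j_z]\dd s$ rather than an absolute value inside the integral, so that the time-integrated current appears.

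The third step closes the estimate. By Theorem \ref{thm-current}, $|\int_0^t\bbE_n[j_{z}(s)]\dd s|\le C_{\mathcal J,t_*}n^{-1/2}$. This makes the source contribution $O(n^{3/2}\cdot n^{-1/2})=O(n)$. Together with Assumption \ref{ass1} this yields \eqref{011212-25}.

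The main obstacle is the second step. We must check that the non-reversible bath part contributes, after the Fisher information absorbs the gradient terms, only a multiple of the boundary current, with no residual term of order $n^{3/2}$. If the cross terms do not cancel exactly, the fallback is a cruder bound: the total energy satisfies $\frac{\dd}{\dd t}\sum_x\bbE_n\mathcal E_x=n^{3/2}(\bbE_n j_{-1,0}-\bbE_n j_{n,n+1})$. We would express the entropy source through this identity and then invoke Theorem \ref{thm-current} again.
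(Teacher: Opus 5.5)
Your step 2 does not go through as described. If you carry out your computation exactly, with $m_z(s):=\bbE_n[p_z^2(s)]$ and $I_z:=\int(\partial_{p_z}f_n)^2f_n^{-1}\,\dd\nu_T$, you get
\[
\frac{\dd}{\dd t}\mathbf{H}_{n,T}=-n^{3/2}\gamma\,\mathcal D_{\rm ex}(f_n)+n^{3/2}\tilde\gamma\sum_{z=0,n}\Big[-T_zI_z+\frac{(T_z-T)(m_z-T)}{T^2}\Big],
\qquad
(T_z-T)(m_z-T)=(T_z-T)^2-(T_z-T)(T_z-m_z).
\]
So the ``exact linear term'' is not a multiple of the boundary current. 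It carries the constant $(T_z-T)^2/T^2$, which cannot vanish for both $z$ when $T_L\neq T_R$; in any case $T$ is the temperature of Assumption~\ref{ass1}, not yours to choose. This constant contributes order $n^{3/2}t$, and Theorem~\ref{thm-current} cannot control it. Cauchy--Schwarz with a fixed $\epsilon$ leaves $Cn^{3/2}\int_0^t m_z\,\dd s$, which is again of order $n^{3/2}$. Your fallback through the total-energy identity does not see this constant at all.

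The missing idea is that the constant must be absorbed by the boundary Fisher information, via the sharp inequality $\big((m_z-T)/T\big)^2=\big(\int p_z\partial_{p_z}f_n\,\dd\nu_T\big)^2\le m_zI_z$. This gives $-T_zI_z+(T_z-T)(m_z-T)/T^2\le (T_z-m_z)\big(T^{-1}-m_z^{-1}\big)$. Even this is not a fixed multiple of the current. You still have to integrate in time and use Jensen, $\int_0^t m_z^{-1}\,\dd s\ge t^2/\int_0^t m_z\,\dd s$. That yields $\int_0^t(\cdots)\,\dd s\le (T^{-1}+T_z^{-1})\big|\int_0^t(T_z-m_z)\,\dd s\big|$, and only at this point does a bound of order $n^{-1/2}$ on time-integrated boundary currents give $O(n)$.

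There is also a logical-order problem with invoking Theorem~\ref{thm-current}. Applied at $x=0$ and $x=n+1$, it gives $\bbE_n\mathcal H_n(t)-\bbE_n\mathcal H_n(0)=n^{3/2}\int_0^t\bbE_n[j_{-1,0}-j_{n,n+1}]\,\dd s\le Cn$. So it already contains the energy bound of Corollary~\ref{cor011212-25}, which the paper deduces from the very theorem you are proving. You may use it only if it has a proof that does not pass through that energy bound. As written, without an independent energy or current estimate, your argument is circular.

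The paper itself proves only the case $T_L=T_R=T$. There $\nu_T$ is stationary and the relative entropy is nonincreasing, which is your formula with the source term identically zero. The general case is deferred to \cite{KO-supp}.
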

\proof
In the case $T_L=T_R=T$ the measure  $\nu_T$ is stationary for the
process, consequently the relative entropy w.r.t.  the measure
$ {\mathbf{H}}_{n,T}(t)$ is not increasing in $t$. In this case
the case appearing in \eqref{011212-25} shall not depend on $t_*$. The proof in
the general case $T_L,T_R>0$ is given in \cite[Section 12]{KO-supp}. 
\qed

Let
      \begin{align*}
  \mathcal{H}_n (\mathbf r, \mathbf p):=
  \sum_{x=0}^n {\cal E}_x (\mathbf r, \mathbf p),\quad   \mathcal{H}_n (t):= \mathcal{H}_n \big(\mathbf r(t), \mathbf p(t)\big).
      \end{align*}
Using the entropy inequality, see \cite[p. 338]{kl}, we conclude that
for any $\alpha>0$
  \begin{equation}
\label{entropy-in-tz}
\begin{split}
  &\mathbb E_{\mu_n} \big[ \mathcal{H}_n (t) \big] =\int_{\Om_n}
\Big(\sum_{x=0}^n{\cal E}_x\Big)   f_n(t) \dd
 \nu_{T}
 \\
 &
 \le \frac{1}{\al}\left\{\log\bigg(\int_{\Om_n}
     \exp\left\{ \frac\al 2 \sum_{x=0}^n(p_x^2+r_x^2)
     \right\}\dd  \nu_{T} \bigg)+ {\mathbf{H}}_{n}(t)\right\}.
\end{split}
\end{equation}
Choosing $\al<T^{-1}$ we conclude that there exist constants
  $ C,C'>0$ such that
 \begin{equation*}
  \bbE_n\left[{\cal  H}_n(t)  \right]
   \le C\big(n+ {\bf  H}_{n }(t)\big)  \le C'n,\qquad t\ge0,\,n=1,2,\ldots.
 \end{equation*}
Hence the following holds.

\begin{corollary}[Energy bound]
  \label{cor011212-25}
  For any $t_*>0$  there exists $C_{\mathcal{H},t_*}>0$ such that
\begin{equation}\label{eq:energyb}
 \mathbb E_{n}  \big[\mathcal{H}_n (t)\big]
  \le C_{\mathcal{H},t_*} n,\qquad t\in[0,t_*],\,n=1,2,\ldots.
\end{equation}
\end{corollary}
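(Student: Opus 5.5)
The corollary follows directly from the entropy inequality \eqref{entropy-in-tz} combined with Theorem \ref{entropy-t}. I would take $f_n(t)$ to be the density of the law of $(\mathbf r(t),\mathbf p(t))$ with respect to $\nu_T$. This density exists: the law at time $t$ is absolutely continuous because the initial law is, and the dynamics preserves absolute continuity, with finite entropy ensured by Theorem \ref{entropy-t}. The first step is to apply the variational (entropy) inequality $\int F f\,\dd\nu_T\le \log\int e^{F}\dd\nu_T+\int f\log f\,\dd\nu_T$ with $F=\alpha\mathcal H_n$. Since $\mathcal H_n\ge 0$ is unbounded, I would first use $F=\alpha(\mathcal H_n\wedge M)$ and then pass to the limit $M\to\infty$ by monotone convergence. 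This gives exactly \eqref{entropy-in-tz}.

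The second step is the explicit Gaussian computation. Under $\nu_T$ the variables $r_1,\dots,r_n,p_0,\dots,p_n$ are i.i.d.\ centered Gaussians with variance $T$. Hence, for $\alpha<T^{-1}$,
\begin{equation*}
\log\int_{\Om_n}\exp\Big\{\frac\alpha2\sum_{x}(p_x^2+r_x^2)\Big\}\dd\nu_T=-\frac{2n+1}{2}\log(1-\alpha T).
\end{equation*}
Fixing, say, $\alpha=(2T)^{-1}$, this quantity is at most $(2n+1)\frac{\log 2}{2}\le C n$.

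The third step inserts the bound ${\mathbf H}_{n,T}(t)\le C_{H,t_*}n$ for $t\in[0,t_*]$ from Theorem \ref{entropy-t}. This yields
\begin{equation*}
\bbE_n[\mathcal H_n(t)]\le 2T\big((2n+1)\log 2/2+C_{H,t_*}n\big)\le C_{\mathcal H,t_*}n .
\end{equation*}

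All the real difficulty sits in Theorem \ref{entropy-t}, which we take as given. The only delicate point left is justifying that the time-$t$ law has a density with the stated entropy, and handling the truncation of the unbounded $F$. Both are routine.
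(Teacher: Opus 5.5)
Your proposal is correct and follows the paper's own argument: the entropy inequality \eqref{entropy-in-tz} with some $\alpha<T^{-1}$, the $O(n)$ bound on the Gaussian exponential moment under $\nu_T$, and the linear-in-$n$ entropy bound of Theorem \ref{entropy-t}. You make explicit the Gaussian computation and the truncation of the unbounded test function, which the paper leaves implicit; note also that finiteness of $\mathbf{H}_{n,T}(t)$ already guarantees that the density $f_n(t)$ exists, so you do not need a separate argument that the dynamics preserves absolute continuity.
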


% \begin{corollary}
%   \label{cor011212-25}
%   Suppose that $T_L,T_R>0$. Then,  for any $t_*>0$  there exists $C_{ t_*}>0$
%   such that
%   \begin{equation}
%   \label{eq:9z}
% \bbE \left[\sum_{x=0}^n{\cal E}_x( t)\right]
%  \le  C_{t_*} n ,\quad t\in[0,t_*],\,n=1,2,\ldots.
% \end{equation}
% \end{corollary}

\subsubsection*{Identity for covariances}

  After  tedious, but otherwise elementary calculations (that can be
  found in \cite[Proposition 3.2]{KO-supp}, see also   Supplement 1
  ibid., for detailed calculations), we conclude
  the following identity (see \eqref{Hn2a})
  \begin{align}
    \label{021212-25}
  & {\cal H}_n^{(2)}(t) + {\frak B}_n(t)
    +{\frak b}_n(t) ={\cal H}_n^{(2)}(0) \\
&
 +\frac{2\tilde \ga
     n^{3/2}}{n+1}\int_0^t\left[T_L\Big(T_L-\mathbb E_n [p^2_0(  s)]
                    \Big) 
    +  T_R\Big( T_R-\mathbb E_n[ p_n^2(  s)] \Big)\right]\dd s,\notag
\end{align}
where
\begin{align*}
  &  {\frak B}_n(t)=\frac{2\tilde \ga n^{3/2}}{n+1}\sum_{v=0,1}\int_0^t\Big\{T_v-\mathbb E_n\left[
                p^2_{nv}( s)
                  \right] \Big\}^2\dd s 
 \\
&+ \frac{2\tilde \ga n^{3/2}}{n+1} \sum_{v=0,1}\sum_{x=1}^n
    \int_0^t\left\{\mathbb E_n\left[ p_{nv}( s)
  p_{x}( s)\right] \right\}^2 \dd s 
 \\
 &
 + \frac {2 \tilde \ga n^{3/2}}{n+1} \sum_{v=0,1}\sum_{x'=1}^n   \int_0^t\left\{ \mathbb E_n\left[ p_{nv}( s)
  r_{x'} (  s)\right] \right\}^2  \dd s 
\end{align*}
and 
\begin{align*}
 &
 {\frak b}_n(t) := \frac{\ga
    n^{3/2}}{n+1}\sum_{x=0}^{n-1}\int_0^t\Big[\nabla \mathbb
    E_n[ p_{x}^2( s)]
                   \Big]^2\dd s 
  \\
  &
    +\frac{2\ga
                       n^{3/2}}{n+1}\sum_{x=1}^{n}\mathop{\sum_{x'=0}^{n}}_{x'\not\in\{x-1,x\}}\int_0^t\left\{\mathbb
                       E_n\left[ \nabla^\star p_x( s)
                 p_{x'}( s)\right] \right\}^2 \dd s  \\
  &
    + \frac {2 \ga n^{3/2}}{n+1}\sum_{x=1}^{n} \sum_{x'=1}^n
    \int_0^t \left\{\mathbb E_n\left[ \nabla^\star p_x( s)
  r_{x'} ( s)\right] \right\}^2\dd s.\notag
\end{align*}
From the identity \eqref{021212-25} one can conclude the following.
\begin{theorem} \label{cor012102-24}
  Suppose that $T_L,T_R>0$. Then,  for any $t_*>0$  there exists $C>0$
  s.t.
  \begin{equation}
  \label{eq:9}
  {\cal H}_n^{(2)}(t)    \le C
\end{equation}
\end{theorem}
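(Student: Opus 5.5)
The plan is to read off the bound directly from the identity \eqref{021212-25}, using that ${\frak B}_n(t)$ and ${\frak b}_n(t)$ are nonnegative. Both are sums of time integrals of squares with positive prefactors, so they can be dropped from the left-hand side. This gives
\begin{equation*}
{\cal H}_n^{(2)}(t)\le {\cal H}_n^{(2)}(0)+\frac{2\tilde\ga n^{3/2}}{n+1}\sum_{v=0,1}\int_0^t T_v\Big(T_v-\bbE_n[p_{nv}^2(s)]\Big)\dd s .
\end{equation*}
The first term is bounded by $C_{2,{\cal H}}$ by Assumption \ref{ass2}. The difficulty is the prefactor $n^{3/2}/(n+1)\sim n^{1/2}$ in front of the boundary integral. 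It cannot be absorbed by the trivial bound $-T_v\bbE_n[p^2_{nv}]\le 0$, which would only leave $2\tilde\ga n^{1/2}(T_L^2+T_R^2)t$.

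To handle this, I would not drop the first part of ${\frak B}_n(t)$. Instead I combine it with the linear boundary term. Write $a_v(s):=T_v-\bbE_n[p^2_{nv}(s)]$. Then, for each $v$ and $s$,
\begin{equation*}
T_v a_v(s)-a_v^2(s)\le \frac{T_v^2}{4}.
\end{equation*}
This alone still gives a bound of order $n^{1/2}t$, so pointwise completion of the square is not enough. Smallness has to come from the integrated quantity $\int_0^t a_v(s)\dd s$ itself.

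The key observation is that $\tilde\ga\int_0^t a_0(s)\dd s=\int_0^t\bbE_n[j_{-1,0}(s)]\dd s$, and similarly $\tilde\ga\int_0^t a_1(s)\dd s=-\int_0^t\bbE_n[j_{n,n+1}(s)]\dd s$, by the definition of the boundary currents in \eqref{current}. By Theorem \ref{thm-current},
\begin{equation*}
\Big|\int_0^t a_v(s)\dd s\Big|\le \frac{C_{\mathcal J,t_*}}{\tilde\ga\sqrt n},\qquad t\in[0,t_*].
\end{equation*}
Hence
\begin{equation*}
\frac{2\tilde\ga n^{3/2}}{n+1}\,T_v\Big|\int_0^t a_v\,\dd s\Big|\le \frac{2n^{3/2}}{(n+1)\sqrt n}\,T_v C_{\mathcal J,t_*}\le 2T_vC_{\mathcal J,t_*}.
\end{equation*}
Combining this with Assumption \ref{ass2} gives ${\cal H}_n^{(2)}(t)\le C_{2,{\cal H}}+2(T_L+T_R)C_{\mathcal J,t_*}$ uniformly for $t\in[0,t_*]$ and all $n$.

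The main obstacle is therefore the current estimate of Theorem \ref{thm-current}, which we assume here as stated. If one wanted to avoid circularity, in case its proof itself uses the covariance bound, I would instead run a bootstrap. The currents would be expressed through $\bbE_n[p_xr_{x+1}]$ and $\nabla\bbE_n[p_x^2]$ via the energy balance \eqref{current} tested against a linear profile. Those expressions would then be controlled by the dissipation terms ${\frak b}_n$ and ${\frak B}_n$ through Cauchy--Schwarz, giving an inequality of the form $X\le C+C\sqrt{X}$ for $X=\sup_{s\le t}{\cal H}_n^{(2)}(s)+{\frak B}_n(t)+{\frak b}_n(t)$, which again yields a uniform bound. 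My first attempt would be the direct argument above.
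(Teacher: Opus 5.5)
Your direct argument is a correct deduction from what the paper states, but it takes a different route from the paper's proof. After dropping $\mathfrak{B}_n(t),\mathfrak{b}_n(t)\ge 0$ in \eqref{021212-25}, the identities $\tilde\ga\int_0^t a_0\,\dd s=\int_0^t\bbE_n[j_{-1,0}(s)]\,\dd s$ and $\tilde\ga\int_0^t a_1\,\dd s=-\int_0^t\bbE_n[j_{n,n+1}(s)]\,\dd s$, together with Theorem \ref{thm-current}, give $\mathcal{H}_n^{(2)}(t)\le C_{2,\mathcal{H}}+2(T_L+T_R)C_{\mathcal{J},t_*}$.

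The paper writes out only $T_L=T_R=T$ and refers to \cite[Section 13.7]{KO-supp} for the rest. It never estimates the two boundary terms separately. For equal temperatures their sum is exactly the net energy inflow,
\begin{equation*}
\frac{2Tn^{3/2}}{n+1}\int_0^t\bbE_n\big[j_{-1,0}(s)-j_{n,n+1}(s)\big]\,\dd s=\frac{2T}{n+1}\Big(\bbE_n\mathcal{H}_n(t)-\bbE_n\mathcal{H}_n(0)\Big),
\end{equation*}
which is $O(1)$ by Corollary \ref{cor011212-25} alone. Your claim that smallness must come from each $\int_0^t a_v\,\dd s$ separately overlooks this cancellation. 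In general, write $T_La_0+T_Ra_1=\frac{T_L+T_R}{2}(a_0+a_1)+\frac{T_L-T_R}{2}(a_0-a_1)$. The symmetric part is always controlled by the energy bound. Only the antisymmetric part, i.e.\ $n^{1/2}\int_0^t\bbE_n[j_{-1,0}+j_{n,n+1}]\,\dd s$, needs a current estimate. So your route gets the case $T_L\neq T_R$ in one line, while the paper's argument rests only on the entropy/energy bound.

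The cost is the dependence on Theorem \ref{thm-current}, a sharp $n^{-1/2}$ estimate quoted from \cite{KO-supp} without proof. The machinery in this paper that controls currents at that scale is the covariance representation \eqref{031312-25} used in the limit identification. It takes \eqref{eq:9} as input, see \eqref{011512-25}. Unless you know that Theorem \ref{thm-current} is proved independently of \eqref{eq:9}, your argument may be circular.

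Your fallback bootstrap does not fix this as described. Testing against a linear profile reduces the antisymmetric term to showing $\big|\sum_{x=1}^n\int_0^t\bbE_n[p_{x-1}r_x]\,\dd s\big|\le Cn^{1/2}(1+\sqrt X)$. But $\mathfrak{B}_n,\mathfrak{b}_n\le X$ only bound $\int_0^t\sum_{x'}(\bbE_n[p_0r_{x'}])^2\,\dd s$ and $\int_0^t\sum_{x,x'}(\bbE_n[\nabla^\star p_xr_{x'}])^2\,\dd s$ by $O(n^{-1/2}X)$. Write $\bbE_n[p_{x-1}r_x]=\bbE_n[p_0r_x]+\sum_{1\le y<x}\bbE_n[\nabla^\star p_yr_x]$ and apply Cauchy--Schwarz over the $O(n^2)$ terms. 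This gives only $O(n^{3/4}\sqrt X)$, hence $X\le C+Cn^{1/4}\sqrt X$, i.e.\ $X=O(n^{1/2})$. That is no better than the trivial bound. Doing better requires the dynamics, namely the resolvent structure behind \eqref{031312-25}, and that is the real content of the general case.
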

\proof We present the argument 
in the special case $T_L=T_R=T$. The proof of the general statement
can be found in \cite[Section 13.7]{KO-supp}.

Using \eqref{current} we obtain
\begin{equation*}
\frac{\dd}{\dd t}\bbE_n {\cal H}_n(t)= n^{3/2} 
\bbE_n\left(j_{-1,0} (t) - j_{n,n+1} (t)\right) \dd s.
\end{equation*}
 Combining  with \eqref{021212-25} we conclude  
   \begin{equation*}
     \begin{split}
   &    {\cal H}_n^{(2)}(t)  \le {\cal H}_n^{(2)}(0) + \frac{2 T n^{3/2}}{n+1}
   \int_0^t \bbE_n\left(j_{-1,0} (s) - j_{n,n+1} (s)\right)  ,\\
   &
  \le  {\cal H}_n^{(2)}(0)    + \frac{2 T \bbE_n {\cal H}_n(t)}{n+1}.
\end{split}
\end{equation*}
The conclusion of the theorem then follows from Corollary  \ref{cor011212-25}.\qed

\medskip

\begin{corollary}
  \label{cor021212-25}
 Under the assumptions about the initial data made in Section
  \ref{sec1.4} for  any $t_*\ge 0$ there exists a constant
  {$C_{t_*}>0$} such that
\begin{align}
    \frac{1}{n}\sum_{x=0}^{n}   
 \big[ \bbE_{n} {\cal E}_{x}( t)\big]^2 \le C_{t_*},\quad t\in[0,t_*],\,n=1,2,\ldots.
 \label{eq:L2}
\end{align}
  \end{corollary}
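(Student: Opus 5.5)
The plan is to deduce the statement directly from Theorem \ref{cor012102-24}, using the fact that the averaged energy density is a combination of diagonal entries of the covariance matrices. For every $x\in\bbZ_n$ we have
\[
\bbE_n{\cal E}_x(t)=\frac12\Big(\bbE_n[p_x^2(t)]+\bbE_n[r_x^2(t)]\Big),
\]
with the convention $r_0=0$. Hence, by $(a+b)^2\le 2a^2+2b^2$,
\[
\big[\bbE_n{\cal E}_x(t)\big]^2\le \frac12\Big\{\big[\bbE_n p_x^2(t)\big]^2+\big[\bbE_n r_x^2(t)\big]^2\Big\}.
\]

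Summing over $x$ and dividing by $n$, the right-hand side is bounded by the diagonal part ($x=x'$) of the first two sums in \eqref{Hn2a}. All terms in ${\cal H}_n^{(2)}(t)$ are nonnegative squares, so dropping the off-diagonal terms and the mixed $p$–$r$ covariances only decreases the sum. This gives
\[
\frac1n\sum_{x=0}^n\big[\bbE_n{\cal E}_x(t)\big]^2\le {\cal H}_n^{(2)}(t).
\]

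It remains to bound ${\cal H}_n^{(2)}(t)$ uniformly in $t\in[0,t_*]$ and $n$. This is exactly Theorem \ref{cor012102-24}, whose constant depends on $t_*$ and on the bound $C_{2,{\cal H}}$ from Assumption \ref{ass2}. In the equal-temperature case this follows from the argument already given: the identity \eqref{021212-25}, the nonnegativity of ${\frak B}_n,{\frak b}_n$, and the energy bound of Corollary \ref{cor011212-25}. In the general case I invoke the theorem as stated.

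There is no real obstacle here. The only points to check are the boundary convention $r_0=0$, so that the site $x=0$ contributes only the $p_0^2$ term, and that the constant in Theorem \ref{cor012102-24} is uniform over $t\in[0,t_*]$, which is how that theorem is formulated.
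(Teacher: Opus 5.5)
Your proposal is correct and follows the route the paper intends. The paper states the corollary without a separate proof, as an immediate consequence of Theorem \ref{cor012102-24}, and your bound $\frac1n\sum_{x=0}^n[\bbE_n{\cal E}_x(t)]^2\le{\cal H}_n^{(2)}(t)$, obtained by keeping only the nonnegative diagonal terms of \eqref{Hn2a}, is exactly the link between the two.
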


  \section{Limit identification}

  In light of the results of  Section \ref{sec3} (see in
  particular  Corollary \ref{cor021212-25}) one can
  conclude the compactness of the family of functionals 
$t\mapsto {\cal E}_n[t;\cdot]$, $n=1,2,\ldots$  (see \eqref{Etp}) in
the space $C\big([0,t_*];L^2_w[0,1])$, for any $t_*>0$.
  In the present section we outline the argument that allows to
  identify the set of limiting points of the family as the singleton
  described in Theorem \ref{lm032209-24}. It suffices to show
  therefore that for any test function
  $\varphi\in C^\infty_c(0,1)$ we have
  \begin{equation}
    \label{011212-25e}
    \lim_{n\to+\infty}{\cal E}_n[t;\varphi]=
    { \int_0^1T(t,u)\varphi(u)\dd u,}
    \end{equation}
where  $T(t,u)$ satisfies equation
   \eqref{f-diff1} in the sense of Definition \ref{df1.5}. To simplify
   the notation we shall write 
   \begin{align*}
\varphi_x = \varphi(u_x), \quad \varphi_x' = \varphi'(u_x)\quad\mbox{and}\quad   u_x=\frac{x}{n+1}. %  \quad\mbox{and}
%\varphi_{n,x}' :=n(\varphi_{x+1} -\varphi_{x})
\end{align*}
After using \eqref{current} and  performing  the summation by parts we obtain
\begin{equation*}
  \begin{split}
    {\rm E}_n [t, \varphi] -  {\rm E}_n [0, \varphi]  
  = \frac{n^{1/2}}{n+1} \sum_{x=0}^{n-1}  \varphi_{x} '
  \int_0^{t} \mathbb E_n\left[ j_{x,x+1} ( s) \right] \dd s+o_n(1).
\end{split}
\end{equation*}
                                                                        Both
                                                                        here
                                                                        and
                                                                        in
                                                                        what
                                                                        follows
                                                                        we
                                                          have $\lim_{n\to+\infty}o_n(1)=0$.

     \subsection{Current decomposition}
                                                                        Hence,
   \begin{align}
   \label{021312-25}
   {\cal E}_n  [t;\varphi] -   {\cal E}_n  [0; \varphi] &= J_n ^{(a)} (t;\varphi')+
    J_n^{(s)}(t;\varphi')+o_n(1),\quad\mbox{where} \\  
    J_n ^{(a)} (t; \varphi') &:=
    \frac 1{\sqrt n} \sum_{x=1}^n \varphi'_{x}\int_0^t\bbE_n\left[ j_{x-1,x}^{(a)}(s)\right] \dd s
 \notag \\
  &= -\frac 1{\sqrt n} \sum_{x=1}^n \varphi'_{x}  \int_0^{t} \mathbb
    E_n\left[   p_{x-1}(s)r_{x}(s)\right] \dd s 
    ,\notag\\
       J_n^{(s)}(t;\varphi') &:=  - \frac{\ga}{2\sqrt n} \sum_{x=0}^{n-1}  \varphi_{x} '
    \int_0^{t} \mathbb E_n\left[ \nabla p_x^2(s)\right] \dd s. \notag
\end{align}
Summing by parts one more time and using energy bound
\eqref{eq:energyb}
we conclude that the input of the stochastic part of the current is
negligible, i.e.
\begin{align*}
  &
  J_n^{(s)}(t;\varphi')  =\frac{\ga }{2n^{3/2}} \sum_{x=0}^{n-1}  \varphi_{x} ''
    \int_0^{t} \mathbb E \left[p_x^2( s) \right] \dd s
    +o_n(1)=o_n(1).  
\end{align*}
{Here  $\varphi''_x:=\varphi''(u_x)$.}
{Consequently we have
\begin{equation}
  \label{eq:Ja}
  {\cal E}_n  [t;\varphi] -   {\cal E}_n  [0; \varphi] = J_n ^{(a)} (t;\varphi') + o_n(1).
\end{equation}
To deal with  the Hamiltonian current $J_n ^{(a)} (t; \varphi')$ we
wish to express the integral of the expectation of
the covariance field $\int_0^{t} \mathbb E_n\left[ p_{x-1}(s)r_{x}(s)\right] \dd s$
in terms of the  expectation of the  microscopic energy density
$ \bbE_n\left[ {\cal E}_x(s) \right]$. For  that purpose
we study the covariances of the momenta
    and stretches using the fact that their dynamics is given by a
  linear  system of stochastic differential equations with a
  multiplicative noise, see \eqref{eq:bas1}. }

\subsection{Matrix of covariances}

% To deal with    the Hamiltonian current $J_n ^{(a)} (t; \varphi')$ we
% wish to express the integral of the expectation of  the covariance field $\int_0^{t} \mathbb
%     E_n\left[   p_{x-1}(s)r_{x}(s)\right] \dd s $ in terms of the
%     expectation of the 
%     microscopic energy density $ E_n\left[ {\cal E}_x(s) \right]$. For
%     that purpose we wish to determine the covariances of the momenta
%     and stretches using the fact that their dynamics is given by a
%   linear  system of stochastic differential equations with a
%   multiplicative noise, see \eqref{eq:bas1}.
  Define the $2\times 2$ block matrix of covariances 
\begin{equation*}
\label{S1ts}
S(t)  
=\left[
  \begin{array}{cc}
    {S^{(r)}(t)}&S^{(r,p)}(t)\\
   S^{(p,r)}(t)& S^{(p)}(t)
  \end{array}
\right],
\end{equation*}
where
\begin{align*}
&S^{(r)}(t)=\Big[\bbE_{n}[r_{x}( t)r_{y}(t)]\Big]_{x,y=1,\ldots,n},\quad S^{(r,p)}(t)=\Big[\bbE_{n}[r_{x}( t)p_{y}(t)]\Big]_{x=1,\ldots,n,y=0,\ldots,n},\notag\\
&\\
&
S^{(p)}(t)=\Big[\bbE_{n}[p_{x}( t)p_{y}(t)]\Big]_{x,y=0,\ldots,n}\quad \mbox{and}\quad S^{(p,r)}(t)=\Big[S^{(r,p)}(t)\Big]^T. \notag
\end{align*}
Define a column vector of  stretches and momenta
 $${\bf X}(t)=\begin{bmatrix} {\bf r}(t)\\
 {\bf p}(t)
\end{bmatrix}
$$
The system \eqref{eq:bas1}
 can be rewritten in the matrix form
\begin{equation*}
d{\bf X} (t)= -n^{3/2}A {\bf X}(t)dt+ 
 \Sigma \Big({\bf p} (t-)\Big)\dd M_n(t),
\end{equation*}
where $  A$ is a $2\times 2$- block matrix
\begin{equation*}
  A=
\left(
  \begin{array}{cc}
    0 &-\nabla^\star\\
   -\nabla& -\ga \Delta_{\rm N}+\tilde\ga  E
  \end{array}
\right),
\end{equation*}
where $\nabla$ is the $(n+1)\times n$ matrix corresponding to the
operator defined in \eqref{nbl},  $-\nabla^\star$ is its transpose,
$\Delta_{\rm N}$ is the 
 $(n+1)\times (n+1)$ matrix  of  the discrete Neumann Laplacian  and
$E=[\delta_{x,0}\delta_{y,0}+\delta_{x,n}\delta_{y,n}]_{x,y=0,\ldots,n}$.
Finally $M_n(t) $ is a zero mean
  vector, martingale
  $$
  M(s)^T =\left( 
                   0_{n,1},
                  n^{3/4}   w_L(s),
                    \tilde N^{(n)}_{0,1}(\gamma s),
                  \ldots,
                    \tilde N^{(n)}_{n-1,n}(\gamma s),
                  n^{3/4}    w_R(s)
                \right). $$
                For a given real, or complex valued function $f$
                defined on $[0,+\infty)$ we let
                \begin{align*}
                  &
       \delta_{0,t}f:=\bbE_nf(0)-\bbE_nf(t),\qquad
                   \lang f\rang_t:=\int_0^t\bbE_n f(s)\dd s.
                \end{align*}
 Using   the dynamics, see  \eqref{011012-25}, we obtain the equality 
 \begin{align}
   \label{011312-25}
              &A \lang S \rang_t +\lang S \rang_t A^T =
       \Sigma_2 \Big(\lang \overline{\frak {\bf (\nabla p)}^2}\rang_t
                \Big) + \frac{1}{n^{3/2}}\delta_{0,t} S,\quad
                \mbox{where  }\\   
              &
                \overline{\frak {\bf (\nabla p)}^2}(s)=\Big[\bbE_{n}\big(\nabla^\star
  p_1(s)\big)^2,\ldots, \bbE_{n}\big(\nabla^\star
  p_n(s)\big)^2\Big].\notag
           \end{align}

           \subsubsection{Fourier transforms of covariances}

      Consider the orthonormal bases of $\bbR^{n+1}$ and $\bbR^n$
      given by        
\begin{align*}
    & 
    \psi_j(x)=\left(\frac{2-\delta_{0,j}}{n+1}\right)^{1/2}\cos\left(\frac{\pi
    j(2x+1)}{2(n+1)}\right),\quad x,j=0,\ldots,n,\\
  &
    \phi_j(x)=\left(\frac{2}{n+1}\right)^{1/2}\sin\left(\frac{jx\pi}{n+1}\right) ,\quad x,j=1,\ldots,n.
\end{align*}
We have
\begin{align*}
  &\nabla^\star \psi_j=-\ga_j \phi_j\quad\mbox{and}\quad\nabla
    \phi_j=\ga_j \psi_j,\quad \mbox{where } \ga_j=   2\sin\left(\frac{j\pi}{2(n+1)}\right).
\end{align*}
Define the Fourier transforms of the stretches and momenta as
\begin{align*}
  &
     \tilde r_{j}(t):=\sum_{x=1}^n \phi_j(x)r_x(t)\quad \mbox{and}\quad
      \tilde p_{j}(t):=\sum_{x=0}^n \psi_j(x)p_x(t) .\end{align*}
We can furthermore define the Fourier transforms of the respective
covariance matrices as follows
\begin{align*}
&\tilde S^{(r)}(t)=\Big[\lang
                \tilde r_j\tilde r_{j'}\rang_t\Big]_{j,j'=1,\ldots,n},\quad \tilde S^{(rp)}(t)=\Big[\lang
                \tilde r_j\tilde p_{j'}\rang_t\Big]_{j=1,\ldots,n,j'=0,\ldots,n},\notag\\
&\\
&
\tilde S^{(p)}(t)=\Big[\lang
                \tilde p_j\tilde p_{j'}\rang_t\Big]_{j,j'=0,\ldots,n}\quad \mbox{and}\quad \tilde S^{(pr)}(t)=\Big[\tilde S^{(rp)}(t)\Big]^T,  
\end{align*}

\subsubsection{Resolution of the covariances}
Using \eqref{011312-25} we can find the formulas for the Fourier
transforms of the cavariance matrices in terms of $\overline{\frak
  {\bf (\nabla p)}^2}(s)$ and the covariances between the variables in
the bulk and on the boundaries. More precisely, (see \cite[Supplement 2]{KO-supp}) if 
   $I=\{p,r,pr,r\}$, then
\begin{align}
\label{031312-25}    \tilde S^{(\iota)}_{j,j'} &=
 \Theta_\iota(\la_j,\la_{j'})  F_{j,j'}
  +  \sum_{\iota'\in
     I}\Pi^{(\iota)}_{\iota'}(\la_j,\la_{j'}) B_{j,j'}^{(\iota')}
   +  \sum_{\iota'\in
                                I}\Xi^{(\iota)}_{\iota'}(\la_j,\la_{j'}) R_{j,j'}^{(\iota')}.
\end{align}
Here {$\la_j=\ga_j^2$}, 
$$
              F_{j,j'}=\ga \sum_{y=1}^n\phi_j(y) \phi_{j'}(y) \lang
              (\nabla^\star p_y)^2\rang_t .
              $$
The terms $ B_{j,j'}^{(\iota')} $ correspond to
the boundary-bulk covariances:
\begin{align}
  \label{bpr}
  &
    B_{j,j'}^{(pr)}=\psi_{j}(0) \tilde s^{(p,\tilde
    r)}_{0,j'}+\psi_{j}(n) \tilde s^{(p,\tilde r)}_{n,j'},\quad B_{j,j'}^{(rp)}=B_{j',j}^{(pr)},\\
 & \tilde s^{(p,\tilde r)}_{z,j}= \lang \tilde r_j p_z\rang_t=\int_0^t
   \tilde b^{(p,\tilde r)}_{z,j}(s)\dd s, \quad\mbox{where}\notag\\
                          &
    b^{(p)}_{z,x}(s)=\bbE_n\big[
      p_{z}( s) r_x( s)\big]  ,\quad    \tilde b^{(p,\tilde r)}_{z,j}(s)=  \bbE_n\big[ \tilde r_j(s) p_z(s)
\big] \notag
\end{align}
and 
\begin{align}
  \label{bpp}
    &
     B_{j,j'}^{(p)} =  B_{j,j'}^{(p,0)} +
     B_{j,j'}^{(p,n)},\quad\mbox{where}\\
&
    B_{j,j'}^{(p,z)} =  \psi_j(z)\lang\tilde b_{z,j'}^{(p)} \rang_t+ \lang\tilde b_{z,j}^{(p)} \rang_t\psi_{j'}(z)
                                            \quad\mbox{and}\notag\\
    &
      \tilde b_{z,j}^{(p)}(s) := \sum_{x=0}^nb^{(p)}_{z,x}(s)\psi_j(x), \notag\\
    &
      b^{(p)}_{0,0}(s)= T_L-\bbE_n 
      p_{0}^2(s)  , \quad b^{(p)}_{n,n}(s)= T_R-\bbE_n 
      p_{n}^2(s) ,\notag\\
    &
    b^{(p)}_{z,x}(s)=- \bbE_n\big[
      p_{z}( s) p_x( s)\big]  ,\quad x\not=z. \notag
  \end{align}
  The  
  fluctuations of the covariances are given by
  \begin{equation}
    \label{041312-25}
    R_{j,j'}^{(\iota)}=\frac{1}{n^{3/2}} \delta_{0,t}{\tilde
      S}_{j,j'}^{(\iota)}.
  \end{equation}

  Finally, for $c,c'\in[0,4]$ the  coefficients  $\Theta^{(\iota)}_{\iota'}(c,c')$,
 $\Pi^{(\iota)}_{\iota'}(c,c')$ and $\Xi^{(\iota)}_{\iota'}(c,c')$  are
 given by  
  \begin{equation}
    \label{Theta}
    \begin{split}
   &\Theta_p(c,c')=\frac {2\ga
     cc'}{\theta(c,c')},\quad\mbox{where}\quad 
     \theta(c,c')= (c-c')^2+ 2\ga^2cc'(c+c') ,\\
     &
     \Theta_r(c,c') =\frac{\ga (c+c')
   \sqrt{c c'}}{\theta(c,c')},\quad  
\Theta_{pr}(c,c')=\frac
                 { (c-c' )\sqrt{c'}}{\theta(c,c')},
\end{split}
\end{equation}
   \begin{equation}
    \label{Xi}
    \begin{split}
    &
      \Xi^{(\iota)}_p(c,c')= \Theta_\iota(c,c'),\quad \iota=p,pr,r,\\
      &
      \Xi^{(p)}_{r}(c,c')=-\Xi^{(pr)}_{rp}(c,c') =\Theta_r(c,c'),\\
       &
        \Xi^{(p)}_{pr}(c,c')=-\Theta_{pr}(c,c'),\quad   \Xi^{(p)}_{rp}(c,c')=\Xi^{(p)}_{pr}(c',c),
\\
  &
       \Xi^{(pr)}_{pr}(c,c') =\frac
    {  \ga c'(c+c')}{\theta(c,c')} , \\
                  &
    \Xi^{(p,r)}_{r}(c,c') =\frac
                 { 1}{2 \sqrt{c} }\Big[1+\frac{c^2-(c')^2
         }{ \theta(c,c')}\Big],
  \\
  % &
  %      =\frac{ \sqrt{c}\Big[c-c'+\ga^2c'(c+c')\Big]}{ \theta(c,c') },\\
         &
 \Xi^{(r)}_{rp}(c,c')=   \Xi^{(r)}_{pr}(c',c)=-\Xi^{(pr)}_{r}(c,c') ,\\
    &
    \Xi^{(r)}_{r}(c,c')=\ga\frac{
    c^2+(c')^2+\ga^2cc'(c+c')}{\theta(c,c')}.
\end{split}
\end{equation}
 and
\begin{equation}
    \label{Pi}
    \begin{split}
      &\Pi^{(\iota)}_p(c,c')=\tilde \ga\Theta_\iota(c,c'),\quad \iota=p,pr,r,\\
      &
      \Pi^{(\iota)}_{\iota'}(c,c')=-\tilde \ga  \Xi^{(\iota)}_{\iota'}(c,c'),\quad
   \iota=p,pr,r,\,\iota'=pr,\,rp,\\
    &   \Pi^{(\iota)}_{r}(c,c')=0,\quad \iota=p,pr,rp,r.
\end{split}
\end{equation}

 \subsection{Bulk-boundary current decomposition}
\label{sec:bulk-bound-curr}

Going back to \eqref{eq:Ja}, we can rewrite   it, using the
formula for the expectation of the Hamiltonian current
$\bbE_n\big[p_x(s)r_{x+1}(s)\big]$ that follows from \eqref{031312-25}.
 Then, 
 \begin{equation}
   \label{eq:energy}
    \begin{split}
 {  J_n ^{(a)} (t;\varphi')}
    = -\frac 1n
      \theta_{pr}(\varphi';n) 
    -   \frac 1n \sum_{\iota\in
        I}\pi^{(pr)}_{\iota}(\varphi';n)    - \frac 1n
    \sum_{\iota\in I}\xi^{(pr)}_{\iota}(\varphi';n).
\end{split}
\end{equation}
Here $I=\{p,pr,rp,r\}$ and 
\begin{align}
 & \theta_{pr}(\varphi';n)= \sum_{j,j'=1}^{n} {\cal W}_{j,j'}
   \sqrt{\la_j\la_{j'}}  \Theta_{pr}  (\la_j,\la_{j'})    F_{j,j'}, \qquad \text{(bulk term)} \label{eq:theta}\\
  &
    \pi^{(pr)}_{\iota} (\varphi';n)
  =     \sum_{j=0}^{n}\sum_{j'=1}^{n} {\cal W}_{j,j'} \Pi^{(pr)}_{\iota} 
    (\la_j,\la_{j'})B_{j,j'}^{(\iota)},\qquad \text{(boundary terms)}\label{eq:pi} \\
     &
  \xi^{(pr)}_{\iota}(\varphi'; n) = \sum_{j=0}^{n}\sum_{j'=1}^{n} {\cal W}_{j,j'}
      \Xi^{(pr)}_\iota(\la_j,\la_{j'})
       R_{j,j'}^{(\iota)},\qquad \text{(time boundary terms)}  \label{eq:xi}\\
  &
  {\cal W}_{j,j'}:= \sqrt n \sum_{x=1}^{n} 
    \phi_{j'}(x) \psi_{j}(x-1)  \varphi' (u_{x})  \label{eq:Wjj}. 
\end{align}
 
Thanks to the presence of factor $n^{-3/2}$ in \eqref{041312-25} and
the energy bound \eqref{eq:energyb},
it can be shown that   the {time boundary} terms   $\sum_{\iota\in
  I}\xi^{(pr)}_{\iota}(\varphi';n)$   vanish, as
$n\to+\infty$, see \cite[Section 10]{KO-supp} for details.

\subsection{{Asymptotics for the bulk terms
  $\theta_{pr}(\varphi';n)$}}

% Concerning % the term $\theta_{pr}(\varphi';n)$, dealing with
% the bulk terms 
% we have the following.
\begin{proposition}
  \label{lem-pr}
  For any $\varphi\in C_c^\infty(0,1)$ we have, as $n\to+\infty$ {(cf \eqref{fDlt})},
    \begin{equation}
  \label{051712-24}
  \begin{split}
    \frac{1}{n}  \theta_{pr}(\varphi';n)  
    &=  \frac{1}{(2^3 \gamma)^{1/2}n}
      \sum_{y=1}^n 
     \int_0^t{\cal E}_y(s)\dd s
      \sum_{\ell=1}^{+\infty} (\pi\ell)^{3/2}\hat{\varphi}_c(\ell) c_\ell(u_y) + o_n(1)\\
      & =  \frac{1}{(2^3 \gamma)^{1/2}}
      \int_0^t {{\mathcal E}_{n} \left[s; |\Delta|^{3/4} \varphi \right]} \dd s
      + o_n(1).
    \end{split}
  \end{equation}
  \end{proposition}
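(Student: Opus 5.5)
The plan is to write $\theta_{pr}(\varphi';n)$ as a quadratic form in the bulk field $\lang(\nabla^\star p_y)^2\rang_t$, identify its kernel asymptotically, and then replace $\lang(\nabla^\star p_y)^2\rang_t$ by $2\lang{\cal E}_y\rang_t$ using equipartition. Substituting the definition of $F_{j,j'}$ gives
\begin{equation*}
\frac1n\theta_{pr}(\varphi';n)=\frac{\ga}{n}\sum_{y=1}^n \lang(\nabla^\star p_y)^2\rang_t\,K_n(y),\qquad K_n(y):=\sum_{j,j'=1}^n{\cal W}_{j,j'}\sqrt{\la_j\la_{j'}}\,\Theta_{pr}(\la_j,\la_{j'})\phi_j(y)\phi_{j'}(y).
\end{equation*}
The first task is to show that $\ga K_n(y)\approx \frac{2}{(2^3\ga)^{1/2}}\sum_\ell(\pi\ell)^{3/2}\hat\varphi_c(\ell)c_\ell(u_y)$, uniformly in $y$ up to an error whose weighted average against $\lang(\nabla^\star p_y)^2\rang_t$ is $o_n(1)$. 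The factor $2$ is there because we will use $\lang(\nabla^\star p_y)^2\rang_t\approx 2\lang p^2\rangle\approx 2\lang{\cal E}_y\rang_t$.

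For the kernel, I would first compute ${\cal W}_{j,j'}$ by product-to-sum formulas. Since $\varphi'\in C^\infty_c(0,1)$, the sum $\sqrt n\sum_x\phi_{j'}(x)\psi_j(x-1)\varphi'(u_x)$ is, up to $O(n^{-k})$ errors, $n^{-1/2}$ times a combination of the sine-Fourier coefficients $\widehat{(\varphi')}_s(j\pm j')$. These decay rapidly in $|j-j'|$, so only $j'=j+\ell$ with $\ell$ of order one contributes, and also $j+j'$ near $2(n+1)$, which is handled by symmetry. Writing $\la_j\approx(\pi j/n)^2$ and $c=\la_j$, $c'=\la_{j+\ell}$, we get $c-c'\approx -2\pi^2 j\ell/n^2$. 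In the scaling $j\sim n^{3/4}$ the two terms of $\theta(c,c')$ balance, since $(c-c')^2\sim j^2\ell^2/n^4$ and $4\ga^2c^3\sim j^6/n^6$. Meanwhile $\phi_j(y)\phi_{j+\ell}(y)\approx\frac1{n+1}[\cos(\pi\ell u_y)-\cos(\pi(2j+\ell)u_y)]$. The oscillating part $\cos(\pi(2j+\ell)u_y)$ should average out after the sum over $j$; more carefully, it has to be shown negligible once paired with $\lang(\nabla^\star p_y)^2\rang_t$, and Corollary \ref{cor021212-25} together with the $L^2$ control of the covariances (Theorem \ref{cor012102-24}) allows Cauchy--Schwarz here. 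Setting $j=n^{3/4}\xi$, the remaining sum over $j$ becomes a Riemann integral in $\xi$ of the form
\begin{equation*}
\int_0^\infty\frac{\xi^3\,\ell}{\ell^2+c\,\ga^2\xi^6}\,\dd\xi\;\propto\;\ell^{3/2}\ga^{-1/2},
\end{equation*}
which combines with $\widehat{(\varphi')}_s(\ell)=-\pi\ell\,\hat\varphi_c(\ell)$ to produce $(\pi\ell)^{3/2}\hat\varphi_c(\ell)c_\ell(u_y)$ and the constant $(2^3\ga)^{-1/2}$. The contributions from $j\ll n^{3/4}$ and from $j\gg n^{3/4}$ (up to $j\sim n$) must be shown small by dominated convergence bounds on $\Theta_{pr}$.

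Next I would replace $\lang(\nabla^\star p_y)^2\rang_t$ by $2\lang{\cal E}_y\rang_t$. Expanding, $(\nabla^\star p_y)^2=p_y^2+p_{y-1}^2-2p_yp_{y-1}$. The cross term is controlled by ${\frak b}_n$ in \eqref{021212-25}: the weighted nearest-neighbour covariances are $O(n^{-1/2})$ in square average, which gives $o_n(1)$ against a smooth kernel. The difference $p_{y-1}^2-p_y^2$ summed against a smooth kernel is handled by summation by parts together with the first term of ${\frak b}_n$. Equipartition $\lang p_y^2\rang\approx\lang r_y^2\rang$ in the weak sense should follow from \eqref{031312-25} with $\iota=p,r$: the difference of $\Theta_p$ and $\Theta_r$ is small at the relevant diagonal frequencies, and the boundary terms $B$ are controlled by ${\frak B}_n$. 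This is essentially the content of \eqref{eq:conv-temp2}, which is proved along the same lines.

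I expect the main obstacle to be the kernel asymptotics. The difficulties are the uniformity needed because the weight $\lang(\nabla^\star p_y)^2\rang_t$ is only known in $L^2$ on average, the control of the off-resonant high-frequency oscillations $\cos(\pi(2j+\ell)u)$, and the tails of the $j$-sum where $\Theta_{pr}$ is singular as $c,c'\to0$. I would first try to prove an $\ell^2$-type estimate: $\sup_n\frac1n\sum_y|K_n(y)-K(u_y)|^2\to0$, with $K$ the limiting kernel. Combined with Corollary \ref{cor021212-25} and Cauchy--Schwarz, this closes the argument. The second equality in \eqref{051712-24} is then immediate from the definition \eqref{Etp}, since $|\Delta|^{3/4}\varphi$ is smooth for $\varphi\in C^\infty_c(0,1)$ and Riemann sums converge.
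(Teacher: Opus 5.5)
Your architecture matches the paper's, but the key asymptotic step is wrong as written. The shared structure is: substitute $F_{j,j'}$; localize to $j'=j+\ell$ with $|\ell|\le n^{\kappa}$ by the decay of $\widehat{(\varphi')}_{n,o}$; discard the piece carrying $\cos(\pi u_y(j+j'))$, which is the paper's $\theta_{pr,+}$ (the even part $\theta^{(e)}_{pr}$ even vanishes identically by parity); pass to a Riemann integral; and close with Proposition \ref{prop:energy}. Your $\ell^2$-in-$y$ kernel convergence combined with Cauchy--Schwarz and Corollary \ref{cor021212-25} is a legitimate way to organize the errors.

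The gap is that the critical frequency window is wrong. From your own estimates $(c-c')^2\sim j^2\ell^2/n^4$ and $4\ga^2c^3\sim j^6/n^6$, the two terms of $\theta(c,c')$ balance when $j^4\sim \ell^2 n^2$, i.e. $j\asymp(\ell n)^{1/2}$, not $j\sim n^{3/4}$. At $j\sim n^{3/4}$ the cubic term dominates by a factor $n$, and the kernel there is only $O(\ell n/(\ga^2 j^2))$. The whole limit comes from $j\asymp n^{1/2}$, which your plan puts in the region $j\ll n^{3/4}$ ``to be shown small'', so that step cannot be carried out. Consistently, your integral $\int_0^\infty \xi^3\ell\,(\ell^2+c\ga^2\xi^6)^{-1}\dd\xi$ scales like $\ell^{1/3}\ga^{-4/3}$, not $\ell^{3/2}\ga^{-1/2}$. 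After multiplying by $\pi\ell$ it would give a $|\Delta|^{2/3}$-type operator with the wrong $\ga$-dependence.

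The correct computation goes as follows. Write $\sqrt{\la_j\la_{j'}}\,\Theta_{pr}(\la_j,\la_{j'})=\sqrt{c}\,c'(c-c')/\theta(c,c')$; with $c\approx(\pi j/n)^2$, $j'=j+\ell$ and $j=n^{1/2}\xi$, this is asymptotically a constant multiple of $\xi^2\ell/(\ell^2+\pi^2\ga^2\xi^4)$. Three further facts give an $O(1)$ kernel: the sum over $j$ becomes $n^{1/2}\int\dd\xi$; ${\cal W}_{j,j'}$ is of order $n^{1/2}$ times $\widehat{(\varphi')}_{n,o}(j\pm j')$ (not $n^{-1/2}$ as you wrote; cf.\ the prefactor $(n/2)^{1/2}$ in \eqref{eq:24}); and $\phi_j\phi_{j'}\sim 1/n$. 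The $\xi$-integral is proportional to $\ell^{1/2}\ga^{-3/2}$. The rescaling $v=(\pi\ga/\ell)^{1/2}\xi$ reduces it to $\frac{1}{2\pi}\int_{\mathbb R}\frac{v^2}{1+v^4}\dd v=2^{-3/2}$, which is the source of $(2^3\ga)^{-1/2}$; see \eqref{eq:38a}--\eqref{eq:39}. With the prefactor $\ga$ and $\widehat{(\varphi')}_s(\ell)=-\pi\ell\hat\varphi_c(\ell)$, this yields $(\pi\ell)^{3/2}\ga^{-1/2}$.

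Your constant is also off. Since $\lang(\nabla^\star p_y)^2\rang_t\approx 2\lang{\cal E}_y\rang_t$, the target is $\ga K_n(y)\approx\frac{1}{2(2^3\ga)^{1/2}}|\Delta|^{3/4}\varphi(u_y)$, so your factor $2$ should be $\frac12$.

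Your equipartition sketch is only heuristic. For example, $\mathfrak{b}_n$ excludes exactly the pairs $x'\in\{x-1,x\}$, so it does not bound $\lang p_yp_{y-1}\rang_t$ directly. The paper imports this step as Proposition \ref{prop:energy} from \cite{KO-supp}, and you may do the same.
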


%\textcolor{red}{\bf DOTAD}

 \subsubsection{Outline of the proof of  Proposition \ref{lem-pr} }
  
%\subsubsection{Discrete Fourier transform}
 For $\varphi:[0,1]\to\bbR$ and $j=1,\dots,n$ we define the discrete sine
 and cosine Fourier transforms
\begin{equation}
  \label{eq:18}
  \begin{split}
&  \widehat{\varphi}_{n,o}(j) := \frac{\sqrt 2}{n+1} \sum_{x=1}^n \sin(\pi j u_x) \varphi (u_x),\\
  & \widehat{\varphi}_{n,e}(j) := \frac{\sqrt 2}{n+1} \sum_{x=0}^n \cos(\pi
  j u_x) \varphi (u_x).
  \end{split}
\end{equation}
For $j=0$ we let
\begin{equation}
  \label{eq:33}
  \widehat{\varphi}_{n,e}(0) := \frac{1}{n+1} \sum_{x=0}^n \varphi (u_x).
\end{equation}

  Denote $ k_j:=\frac{j}{n+1}$.
% \begin{equation}
%   \label{k-j}
%   k_j:=\frac{j}{n+1}.
%   \end{equation}
We have
   \begin{equation}
     \label{eq:20}
      \theta_{pr}(\varphi';n)  =\theta_{pr}^{(o)}(\varphi';n)  +\theta_{pr}^{(e)}(\varphi';n),
    \end{equation}
    with
    \begin{equation}
      \label{eq:24}
      \begin{split}
        &\theta_{pr}^{(o)}(\varphi';n) 
      \\
      &= -\Big(\frac{n}{2}\Big)^{1/2} \sum_{j,j'=1}^{n}
      \sin(\pi k_j) \Big[ \widehat{(\varphi')_{n,o}}(j-j')
      -\widehat{(\varphi')_{n,o}}(j+j')\Big]
      \frac{\la_{j'}  (\la_j - \la_{j'}) }{\theta (\la_j,\la_{j'})}  F_{j,j'},
    \end{split}
  \end{equation}
    and
    \begin{equation}
      \label{eq:25}
      \begin{split}
        & \theta_{pr}^{(e)}(\varphi';n) \\
        &
        =
        -\Big(\frac{n}{2^3}\Big)^{1/2}\sum_{j,j'=1}^{n} 
        \Big[ \widehat{(\varphi')_{n,e}}(j+j') -\widehat{(\varphi')_{n,e}}(j-j')  \Big]
         \frac{ \la_j\la_{j'}  (\la_j - \la_{j'}) }{\theta (\la_j,\la_{j'})} F_{j,j'}.
      \end{split}
    \end{equation}

   By a symmetry argument,   taking into account parity $F_{-j,j'} =F_{j,-j'}=-F_{j,j'}$,
   interchanging
the roles of indices $j$, $j'$ in \eqref{eq:25}, we conclude that  $
\theta_{pr}^{(e)}(\varphi';n) =0$, Consequently 
 %  \begin{equation}
%       \label{eq:27}
%     \theta_{pr}(\varphi';n)  =\theta_{pr}^{(o)}(\varphi';n)  .
%     \end{equation}
% We have
    \begin{equation}
      \label{eq:28}
      \begin{split}
         \theta_{pr}(\varphi';n)  = \theta_{pr}^{(o)}(\varphi';n) 
      =& - \gamma \Big(\frac{n}{2^3 }\Big)^{1/2} \sum_{y=1}^n
      \lang \left(\nabla^* p_y\right)^2\rang_t
    \\
      \times \sum_{j,j'=-n-1}^{n}\sin(\pi k_j)\frac{\gamma_{j'}^2  (\gamma_j^2 - \gamma_{j'}^2) }
      {\theta (\gamma_j^2,\gamma_{j'}^2)} & \widehat{(\varphi')_{n,o}}(j-j')\frac{\cos(\pi u_y (j-j')) - \cos(\pi u_y(j+j'))}{n+1}
      \\
      =: &  \theta_{pr,-}(\varphi';n) -  \theta_{pr,+}(\varphi';n)
      \end{split}
    \end{equation}
    It turns out, see \cite[(7.20)]{KO-supp}, that only $\theta_{pr,-}(\varphi';n)$ contributes, i.e.
    \begin{equation}
   \label{eq:42}
    \lim_{n\to\infty}   \theta_{pr,+}(\varphi';n) = 0.
  \end{equation}
Using elementary  trigonometric formulas we conclude that
    \begin{equation}
      \label{eq:34}
      \begin{split}
         \theta_{pr,-}(\varphi';n) 
    =   - \frac{\gamma n^{1/2}}{2^{3/2} (n+1)} 
      & \sum_{\ell,j'=-n-1 }^{n } 
      \frac{\sin(\pi k_{\ell + j'}) \sin^2\left(\frac{\pi k_{j'}}{2}\right)
            \sin\left(\frac{\pi k_{\ell +2j'}}{2}\right) \sin\left(\frac{\pi k_\ell}{2}\right)}
       {\sin^2\left(\frac{\pi k_{\ell +2j'}}{2}\right)
         \sin^2\left(\frac{\pi k_\ell}{2}\right) + 2^3 \gamma^2\Gamma(k_{\ell+j'},k_{j'})}
    \\
      &\times  \sum_{y=1}^n
      \lang \left(\nabla^* p_y\right)^2\rang_t \widehat{(\varphi')_{n,o}}(\ell) \cos(\pi u_y \ell),
      \end{split}
    \end{equation}
    where
   \begin{equation}
      \label{Gjj} \Gamma(k_j,k_{j'}) = \sin^2\left(\frac{\pi k_j}{2}\right) \sin^2\left(\frac{\pi k_{j'}}{2}\right) 
           \left(\sin^2\left(\frac{\pi k_j}{2}\right)
             +\sin^2\left(\frac{\pi k_{j'}}{2}\right) \right).
         \end{equation}

 % Suppose that $\varphi\in C^\infty_c(0,1)$.
 By \cite[Lemma B.1]{KLO23}, for any $k>0$ we have for some constant $C>0$:
\begin{equation}
  \label{eq:19}
  |\widehat{(\varphi')_{n,o}} (j)| \le  \frac{C}{\chi_n^k(j)},\quad j\in\bbZ,\,n=1,2,\ldots, \qquad \iota = o, e,
\end{equation}
where $\chi_n$ is $2n+2$-periodic extension of the function
$$
\chi_n(j)=(1+j)\wedge (2n+2-j), \quad j=0,\ldots,2n+1.
$$
In addition, if $\kappa\in(0,1)$, then there exists $C>0$ such that
\begin{equation}
  \label{eq:18a}
 \sup_{|j|\le n^{\kappa}}|\widehat{(\varphi')_{n,o}} (j) - \hat \varphi'_s(j)| \le \frac{C}{n^{1-\kappa}}.
\end{equation}}
{where $\hat \varphi'_s(j) = \sqrt 2 \int_0^1 \varphi(u) \sin(\pi j u) \dd u$.}

         Choose $\kappa\in(0,1)$. We further adjust the parameter later on.
    Thanks to \eqref{eq:19} we can consider only the terms
    corresponding to $|\ell| \le
    n^{\kappa}$.     
 Since $|k_\ell|=\frac{|\ell|}{n+1}\le (n+1)^{\kappa-1}$ we   use approximate
  equalities
  \begin{align}
    \label{sinl}
&\sin\Big(\frac{\pi k_\ell}{2}\Big)\approx
\frac{\pi \ell}{2(n+1)}\quad \mbox{and}\quad 2k_{j'}+k_{\ell}
                   \approx 2k_{j'}.
  \end{align}
  Then,
  \begin{equation}
    \label{eq:65}
    \begin{split}
      \Gamma(k_{j'+\ell},k_{j'}) 
          \approx  2 \sin^6\left(\frac{\pi k_{j'}}{2}\right) 
        \end{split}
  \end{equation}
  and, as a result, obtain
  \begin{equation}
    \label{eq:36}
    \begin{split}
          \frac{1}{n}\theta_{pr,-}(\varphi';n)  
                &=  - \frac{\gamma  }{2^2 (n+1)^{3/2}} 
                               \sum_{|\ell|\le n^\kappa}  \pi
                               \ell\widehat{(\varphi')_{n,o}}(\ell) \\
                               &
                               \times \sum_{j'=-n-1}^{n}
      \frac{\sin^2\left(\pi k_{j'}\right)}
       {\cos^2\left(\frac{\pi k_{j'}}{2}\right)
         \left(\frac{\pi\ell}{n}\right)^2 + 2^4\gamma^2 \sin^4\left(\frac{\pi k_{j'}}{2}\right)}
    \\
      &\times   \frac{1 }{ 2(n+1) } \sum_{y=1}^n
      \lang \left(\nabla^* p_y\right)^2\rang_t 2^{1/2}\cos(\pi u_y \ell)
      + o_n(1)\\
       \end{split}
     \end{equation}
  Choose   $\delta\in(0,1)$. Observe that if $|j'|\ge \delta n$ the denominator in the
  last expression is larger than $c\gamma^2 {\delta^4}$ for some $c>0$.
  Because of the factor $n^{-3/2}$ in front and the energy bound
  \eqref{eq:energyb},  the respective  expression vanishes, as
  $n\to+\infty$. As a result,  we can write \eqref{eq:36} as
 \begin{equation}
   \label{eq:37}
   \begin{split}
      \frac{1}{n}\theta_{pr,-}(\varphi';n) &=-\frac{\gamma}{2^2 (n+1)^{3/2}}
        \sum_{|\ell|\le n^\kappa} \pi
        \ell\widehat{(\varphi')_{n,o}}(\ell)\\
        &
        \times \sum_{|j'|\le \delta n }
      \frac{\sin^2\left(\pi k_{j'}\right)}
       { \left(\frac{\pi\ell}{n}\right)^2 \cos^2\left(\frac{\pi k_{j'}}{2}\right)
        + 2^4 \gamma^2 \sin^4\left(\frac{\pi k_{j'}}{2}\right)}
    \\
      &\times \frac{1}{2( n+1)} \sum_{y=1}^n
      \lang \left(\nabla^* p_y\right)^2\rang_t  \sqrt 2
      \cos(\pi u_y \ell) + o_n(1).
    \end{split}
  \end{equation}
  Using the  approximate equalities $\sin\Big(\frac{\pi k_{j'}}{2}\Big)\approx
\frac{\pi k_{j'}}{2}$ and $\cos\Big(\frac{\pi k_{j'}}{2}\Big)\approx
1$, valid for a sufficiently small $\delta$, we can rewrite it as
  \begin{align*}
   &
        \frac{1}{n}\theta_{pr,-}(\varphi';n) = - \frac{\gamma}{2^2 n^{3/2}}
       \sum_{|\ell|\le n^\kappa}  \pi \ell\widehat{(\varphi')_{n,o}}(\ell)\sum_{j'=-\delta n}^{\delta n}
      \frac{\left(\pi k_{j'}\right)^2}{ \left(\frac{\pi\ell}{n}\right)^2 +  \gamma^2 (\pi k_{j'})^4}
    \\
      &\times \frac{1}{2(n+1)} \sum_{y=1}^n
      \lang \left(\nabla^* p_y\right)^2\rang_t  \sqrt 2
        \cos(\pi u_y \ell) + o_n(1)\\
    &
        = - \frac{\gamma}{2^2 n^{3/2}}
       \sum_{|\ell|\le n^\kappa}  \pi \ell\widehat{(\varphi')_{n,o}}(\ell)\sum_{j'=-\delta n}^{\delta n}
      \frac{\left(\frac{\pi j'}{n+1}\right)^2}{ \left(\frac{\pi\ell}{n}\right)^2 +  \gamma^2 \left(\frac{\pi j'}{n+1}\right)^4}
    \\
      &\times \frac{1}{2(n+1)} \sum_{y=1}^n
      \lang \left(\nabla^* p_y\right)^2\rang_t  \sqrt 2
        \cos(\pi u_y \ell) + o_n(1)
       \end{align*}

       \begin{align*}
          &
        = - \frac{\gamma}{2^2 }
       \sum_{|\ell|\le n^\kappa}  \pi \ell\widehat{(\varphi')_{n,o}}(\ell) \Bigg[ \frac{1}{ n^{1/2}}\sum_{j'=-\delta n}^{\delta n}
      \frac{\left(\frac{\pi j'}{(n+1)^{1/2}}\right)^2}{ \left(\pi\ell\right)^2 +  \gamma^2 \left(\frac{\pi j'}{(n+1)^{1/2}}\right)^4}\Bigg]
    \\
      &\times \frac{1}{2(n+1)} \sum_{y=1}^n
      \lang \left(\nabla^* p_y\right)^2\rang_t  \sqrt 2
        \cos(\pi u_y \ell) + o_n(1)\\
    =   -\frac{\gamma}{2^2 }
      & \sum_{|\ell|\le n^\kappa} \pi \ell\widehat{(\varphi')_{n,o}}(\ell)\int_{\bbR}
      \frac{(\pi u)^2 \dd u}{ \left(\pi\ell\right)^2 +  \gamma^2 (\pi u)^4}
    \\
      &\times \frac{1}{2( n+1)} \sum_{y=1}^n
      \lang \left(\nabla^* p_y\right)^2\rang_t  \sqrt 2
      \cos(\pi u_y \ell) + o_n(1).
          \end{align*}
{Using parity of $\pi \ell\widehat{(\varphi')_{n,o}}(\ell)$} and  changing variables $v=\left(\frac{\pi \ga }{\ell}\right)^{1/2}u$ we
conclude that
 \begin{equation}
    \label{eq:38a}
    \begin{split}
      \frac{1}{n}\theta_{pr,-}(\varphi';n) &
      =  - {\frac{1}{2 \pi\gamma^{1/2}}}
      \sum_{0<\ell\le n^\kappa} (\pi \ell)^{1/2}\widehat{(\varphi')_{n,o}}(\ell)
            \int_{\mathbb R}  \frac{v^2 \dd v}{ 1 + v^4}
    \\
      &\times \frac{1 }{2(n+1)} \sum_{y=1}^n
      \lang \left(\nabla^* p_y\right)^2\rang_t  \sqrt 2
      \cos(\pi u_y \ell) + o_n(1).
       \end{split}
   \end{equation}

  %     \\
  %     =   \frac{\gamma^{1/2}}{2 n^{1/2}\pi}
  %     & \sum_{|\ell|\le n^\kappa}
  %     \int_{\frac{(1-\ell)\pi\gamma^{1/2}}{n}\vee (-\delta\pi\gamma^{1/2})}^{\delta\pi\gamma^{1/2}}
  %     \frac{\gamma^{-1} v^2 \dd v}{ \left(\frac{\pi\ell}{n}\right)^2 + v^4}
  %   \\
  %     &\times \frac{\pi \ell}{2n} \sum_{y=1}^n
  %     \lang \left(\nabla^* p_y\right)^2\rang_t \widehat{(\varphi')_{n,o}}(\ell)\sqrt 2
  %     \cos(\pi u_y \ell) + o_n(1)\\
  %     =   \frac{1}{2 n^{1/2}\pi \gamma^{1/2}}
  %     & \sum_{|\ell|\le n^\kappa} \left(\frac{n}{\pi |\ell|} \right)^{1/2}
  %     \int_{(\frac{(1-\ell)}{n}\vee (-\delta)) c_\ell \sqrt n}^{\delta c_\ell \sqrt n}
  %     \frac{v^2 \dd v}{ 1 + v^4}
  %   \\
  %     &\times \frac{\pi \ell}{2n} \sum_{y=1}^n
  %     \lang \left(\nabla^* p_y\right)^2\rang_t \widehat{(\varphi')_{n,o}}(\ell)\sqrt 2
  %     \cos(\pi u_y \ell) + o_n(1)\\
  %    =   \frac{1}{2 \pi\gamma^{1/2}}
  %     & \sum_{|\ell|\le n^\kappa} 
  %     \int_{(\frac{(1-\ell)}{n}\vee (-\delta)) c_\ell \sqrt n}^{\delta c_\ell \sqrt n}
  %     \frac{v^2 \dd v}{ 1 + v^4}
  %   \\
  %     &\times \frac{1 }{2n} \sum_{y=1}^n
  %     \lang \left(\nabla^* p_y\right)^2\rang_t \rm{sign}(\ell) (\pi|\ell|)^{1/2}\widehat{(\varphi')_{n,o}}(\ell)\sqrt 2
  %     \cos(\pi u_y \ell) + o_n(1).
  %   \end{split}
  % \end{equation}

  Using the residue theorem one can calculate
  \begin{equation}
    \label{eq:39}
       \frac 1{2\pi}  \int_{\mathbb R} \frac{v^2 \dd v}{ 1 + v^4} = \frac 1{2^{3/2}}.
  \end{equation}
To finish the identification of the evolution of the bulk terms we
need the following {energy equipartition} result, see \cite[Proposition 7.2]{KO-supp}.
  \begin{proposition}
    For any $\varphi \in C[0,1]$, that is compactly supported in
    $(0,1)$, we have 
    \label{prop:energy}
    \begin{equation}
   \lim_{n\to\infty} \frac{1}{n} \sum_{y=1}^n
    \int_0^t \varphi_y \mathbb E_n
    \left(\left(\nabla^* p_y(s)\right)^2 - 2 \mathcal E_y(s)\right)  \; \dd s = 0.
    \label{eq:40}
  \end{equation}
\end{proposition}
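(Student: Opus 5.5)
The plan is to write $(\nabla^\star p_y)^2-2\mathcal E_y=(p_{y-1}^2-r_y^2)-2p_{y-1}p_y$. Then \eqref{eq:40} reduces to two separate statements, both tested against $\varphi_y$ and integrated in time:
\begin{itemize}
\item[(a)] the local kinetic and potential energies equilibrate, i.e. the locally averaged $\lang p_{y-1}^2-r_y^2\rang_t$ vanishes;
\item[(b)] nearest-neighbour momentum correlations vanish, i.e. $\frac1n\sum_y\varphi_y\lang p_{y-1}p_y\rang_t\to0$.
\end{itemize}
Replacing $p_{y-1}^2$ by $p_y^2$ costs only $\frac1n\sum_y(\varphi_{y+1}-\varphi_y)\lang p_y^2\rang_t=O(n^{-1})$ by Corollary \ref{cor011212-25}.

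For (a) I will use a virial-type identity. Apply the generator to $r_yp_y$, with the conventions $r_0=r_{n+1}=0$:
$$
\mathcal G(r_yp_y)=p_y^2-p_{y-1}p_y+r_yr_{y+1}-r_y^2+\ga\, r_y\Delta_{\rm N}p_y \qquad \mbox{for } y \mbox{ away from the boundary.}
$$
The bath terms do not contribute there, since $\varphi$ is compactly supported in $(0,1)$. By Dynkin's formula, $\lang\mathcal G(r_yp_y)\rang_t=n^{-3/2}\big(\bbE_n[r_yp_y(t)]-\bbE_n[r_yp_y(0)]\big)$. Averaging against $\varphi_y/n$ and using the energy bound \eqref{eq:energyb}, this is $O(n^{-3/2})$. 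The term $\ga r_y\Delta_{\rm N}p_y=\ga r_y(\nabla^\star p_{y+1}-\nabla^\star p_y)$ is controlled by the last sum in ${\frak b}_n(t)$. Identity \eqref{021212-25} together with Theorem \ref{cor012102-24} and the energy bound gives $\sum_{x,x'}\int_0^t\{\bbE_n[\nabla^\star p_x r_{x'}]\}^2\dd s\le Cn^{-1/2}$. Hence, by Cauchy--Schwarz over $y$,
$$
\frac1n\sum_y|\varphi_y|\,|\lang r_y\nabla^\star p_{y'}\rang_t|\le Cn^{-1/2}\big(n\cdot n^{-1/2}\big)^{1/2}=Cn^{-3/4}\to 0 .
$$
Thus (a) reduces to showing that the lag-one correlations $\lang p_{y-1}p_y\rang_t$ and $\lang r_yr_{y+1}\rang_t$ vanish in local average. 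So everything comes down to (b) and its stretch analogue.

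For the lag-one correlations I will first exploit the second sum in ${\frak b}_n(t)$. It controls $\lang\nabla^\star p_x\,p_{x'}\rang_t$ in $\ell^2$, with total mass $O(n^{-1/2})$, for all $x'\notin\{x-1,x\}$. Taking $x'=x-2$ and $x'=x+1$ shows that the lag-one and lag-two momentum correlations nearly coincide after local averaging. This alone does not force them to zero. I will therefore pass to the Fourier resolution \eqref{031312-25} of $\tilde S^{(p)}$ and $\tilde S^{(r)}$, and read off the lag-one real-space covariance as $\sum_{j,j'}\psi_j(y-1)\psi_{j'}(y)\tilde S^{(p)}_{j,j'}$. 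The contributions are handled as follows:
\begin{itemize}
\item the fluctuation terms $R^{(\iota)}$ carry a factor $n^{-3/2}$ and are negligible, as for the $\xi$-terms;
\item the boundary terms $B^{(\iota)}$ are negligible after testing against $\varphi$ supported away from $0,1$, since they involve $\psi_j(0),\psi_j(n)$ and bounded boundary covariances, cf.\ the ${\frak B}_n$ bound;
\item the bulk term is $\Theta_p(\la_j,\la_{j'})F_{j,j'}$.
\end{itemize}
For the bulk term I will repeat the computation of Proposition \ref{lem-pr}: split into $|j-j'|\le n^\kappa$ and its complement, and use the decay \eqref{eq:19}. Near the diagonal the weight $\cos(\pi k_j)$ coming from the lag-one shift pairs with $\Theta_p$, giving an integral of the form $\frac1n\sum_{j'}\cos(\pi k_{j'})\cdot(\text{kernel})$ times a local average of $\lang(\nabla^\star p_y)^2\rang_t$. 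The kernel concentrates on $|k_{j'}|\lesssim n^{-1/4}$ with total mass $O(n^{-1/2})\cdot n^{1/2}$, so I expect it to produce exactly the relation needed: diagonal covariance $=\tfrac12(\nabla^\star p)^2$-average, and lag-one $=o(1)$.

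The main obstacle is this last step: the careful asymptotic analysis of the $\Theta_p$, $\Theta_r$ sums with the lag-one phase. One has to check that the low-$j'$ singular region, where $\theta(\la_j,\la_{j'})$ is small, contributes only through the symmetric combination, and that the tails $|j'|\ge\delta n$ vanish by the energy bound. This is done as in \eqref{eq:36}--\eqref{eq:37}.
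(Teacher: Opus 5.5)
Your algebra and step (a) are correct. With $(\nabla^\star p_y)^2-2\mathcal E_y=p_{y-1}^2-r_y^2-2p_{y-1}p_y$, the virial identity for $r_yp_y$ and the third sum in $\mathfrak b_n$ show that the proposition is equivalent to
\[
\frac1n\sum_{y}\varphi_y\lang p_{y-1}p_y+r_yr_{y+1}\rang_t\longrightarrow 0 .
\]
That limit is the whole content of the statement, and your proposal leaves it unproved. The Fourier part ends with ``I expect''. The asymptotics of the $\Theta_p,\Theta_r$ sums with the lag-one phase are never carried out. Your reason for discarding the $B^{(\iota)}$ terms is also not valid. The boundary--bulk terms are not localized in space, and in $J^{(a)}_n$ they produce an order-one contribution even though $\varphi'$ vanishes near $0$ and $1$ (Proposition \ref{prop011512-25}); this is the origin of the $c_{\rm bd}$ term in \eqref{f-diff1}. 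In your quantity they could only drop out because of the $1/n$ rather than $1/\sqrt n$ normalization, and you would still have to verify that. Finally, you dismissed the real-space bounds too quickly.

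The missing idea is to telescope along the whole chain, rather than comparing lag one with lag two. For $x'$ in the support of $\varphi$,
\[
\lang p_{x'+1}p_{x'}\rang_t=\lang p_np_{x'}\rang_t-\sum_{x=x'+2}^{n}\lang \nabla^\star p_x\, p_{x'}\rang_t .
\]
Every pair in the sum has $x'\notin\{x-1,x\}$, so it is controlled by the second sum in $\mathfrak b_n$. The first term is controlled by the boundary--bulk part of $\mathfrak B_n$. Cauchy--Schwarz over the $O(n^2)$ pairs, with $\sum\int_0^t\{\cdot\}^2\dd s\le Cn^{-1/2}$, gives $\frac1n\sum_{x'}|\varphi_{x'}|\,|\lang p_{x'+1}p_{x'}\rang_t|=O(n^{-1/4})$.

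For the stretches, apply Dynkin's formula to $r_yp_{x'}$ with $x'\ge y+1$:
\[
\lang r_y\nabla r_{x'}\rang_t=-\lang \nabla^\star p_y\,p_{x'}\rang_t-\ga\lang r_y\Delta_{\rm N}p_{x'}\rang_t+\tilde\ga\,\mathbf 1_{\{x'=n\}}\lang r_yp_n\rang_t-n^{-3/2}\delta_{0,t}(r_yp_{x'}).
\]
Every term on the right is controlled by $\mathfrak b_n$, $\mathfrak B_n$ or the energy bound. Summing over $x'=y+1,\dots,n$ and using $r_{n+1}=0$ gives $-\lang r_yr_{y+1}\rang_t$, which is again $O(n^{-1/4})$ in local average by the same counting. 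This closes the argument without any Fourier analysis.

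Two small points. Boundedness of $\mathfrak b_n$ and $\mathfrak B_n$ when $T_L\neq T_R$ needs the right side of \eqref{021212-25} to be bounded. That follows from Theorem \ref{thm-current}, not from Theorem \ref{cor012102-24} and the energy bound alone. Also, for merely continuous $\varphi$, replacing $p_{y-1}^2$ by $p_y^2$ costs $o(1)$, not $O(n^{-1})$; alternatively, use the first sum in $\mathfrak b_n$.
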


Then from \eqref{eq:18a},  \eqref{eq:39} and  \eqref{eq:40} we
conclude that 
  \begin{equation}
    \label{eq:32}
    \begin{split}
      &  \frac{1}{n}\theta_{pr,-}(\varphi';n) \\
    \\
    &=
   - \frac{1}{(2^3 \gamma)^{1/2}} \int_0^t \dd s
    \frac{1}{n} \sum_{y=1}^n  \mathbb E_n\left(\mathcal E_y(s)\right)
    \sum_{\ell=1}^{+\infty} c_\ell (u_y) (\pi\ell)^{1/2}\widehat{(\varphi')}_s(\ell) +o_n(1).
  \end{split}
\end{equation}
 
Since $\widehat{(\varphi')}_s(\ell)
=-\pi\ell\hat{\varphi}_c(\ell)  $ this ends the proof of Proposition \ref{lem-pr}.
We can also rewrite \eqref{eq:32} in the form
\begin{align*}
    &  \frac{1}{n}\theta_{pr,-}(\varphi';n)  
     = \frac{1}{(2^3 \gamma)^{1/2}} \sum_{\ell=1 }^{+\infty} \int_0^t \dd s\;
       \hat{\cal E}(s,\ell)
        (\pi|\ell|)^{3/2} \widehat{\varphi}_{c}(\ell)  +o_n(1),
\end{align*}
 with
  $ \widehat{\cal E}(s,\ell):=\frac{1}{n+1}\sum_{y=0}^n 
    \bbE_n[{\cal E}_y(s)]  c_\ell(u_y)$ and the conclusion of the Proposition
                   \ref{lem-pr} follows.
\qed

\subsection{{Asymptotics for the  boundary layer terms
  $\pi^{(pr)}_{\iota}(\varphi'; n)$}}

In the present section we characterize the behavior of the terms
$\frac1n\pi^{(pr)}_{\iota} (\varphi';n)$ appearing in
\eqref{eq:energy}. They describe the interactions between the bulk and
boundary points $v=0,1$.

 Consider $ \tilde b^{(pr)}_{z,j}(s) $,  $ \tilde b^{(p)}_{z,j}(s) $
 - the Fourier coefficients of the fields corresponding to the
 boundary bulk interactions - see \eqref{bpr} and \eqref{bpp}.
Thanks to the   $L^2$ estimates of the covariances, see \eqref{eq:9},
we obtain
\begin{align}
  \label{011512-25}
                                              & (n+1)^{1/2}\sum_{j=0}^n \Big(\tilde b_{z,j}^{(p)}(t) \Big)^2 =  (n+1)^{1/2}\sum_{x=0}^n \Big({
                 b}^{(p)}_{z,x}( t)\Big)^2\le C,
\end{align}
for $n=1,2,\ldots,$ and $z=0,n$.
For $v=0,1$ we define the functions
$ \frak{ b}^{(p,v)}_{n } :[0,+\infty)^2\to\bbR 
 $
 given by
 \begin{align}
   \label{021612-25}
   &{\frak b}^{(p,v)}_{n }  (t, \varrho)=0, \quad \varrho\ge (n+1)^{1/6}\pi ,\\                
      &  {\frak b}^{(p,v)}_{n } (t, \varrho)=
       (n+1)^{1/2}\tilde b_{nv,j}^{(p)}(t), 
      \,0\le j\le  (n+1)^{2/3}, \,  \varrho\in
  \big[\varrho_j, \varrho_{j+1}\big), \quad \mbox{where }\notag\\
              &\varrho_j:= \frac{j\pi}{(n+1)^{1/2}} \notag.
  \end{align}
From \eqref{011512-25} we immediately conclude  that for any $t>0$
there exists $C>0$ such that
\begin{align}
  \label{011612-25}
      \int_0^t\dd s\int_{0}^{+\infty}\big[{\frak
    b}^{(p,v)}_{n}(s, \varrho )\big]^2\dd \varrho\le C, \quad \,v=0,1,
    \, n=1,2,\ldots.
\end{align}
{We have that $\pi^{(pr)}_{r} (\varphi';n)=0$ and by \cite[Proposition 5.2]{KO-supp}
it turns out that $\frac{1}{n}\big(\pi^{(pr)}_{pr}+ \pi^{(pr)}_{rp} (\varphi';n)\big)$ is negligible.
Therefore, the only term that needs to be considered  is $ \frac{1}{n}\pi^{(pr)}_{p} (\varphi';n) ( n)$. }

%The following result holds, see \cite[Proposition 5.2]{KO-supp}
\begin{proposition}
\label{prop011512-25}
%  We have $\pi^{(pr)}_{r} (\varphi';n)=0$ . In addtion, 
  As $n\to+\infty$, the following asymptotics hold
      \begin{align*}
       %     & \frac{1}{n}\big(\pi^{(pr)}_{pr} (\varphi';n)+\pi^{(pr)}_{rp} (\varphi';n)\big)=o_n(1),\\
            %
                  &
                \frac{1}{n}\pi^{(pr)}_{p} (\varphi';n) ( n)  
                  =   \frac{2^{1/2}\tilde \ga   }{ \pi}    
                    \sum_{v=0,1}\sum_{\ell=1}^{+\infty}c_\ell(nv)(\pi \ell)^2\hat{\varphi
                                                                                                     }_c(\ell) \\
        &
        \qquad \qquad \qquad    \times\int_0^t\dd s\int_0^{+\infty}
                  \frac{{\frak b}^{(p,v)}_{n}(s, \varrho) \dd\varrho 
        }{
         ( \pi\ell )^2  
    +  \ga^2 
           \varrho ^4  }  +o_n(1).
      \end{align*}
    \end{proposition}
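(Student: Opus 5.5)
We start from the explicit formula \eqref{eq:pi} for $\iota=p$. By \eqref{Pi}, $\Pi^{(pr)}_p=\tilde\ga\Theta_{pr}$, and by \eqref{bpp}, $B^{(p)}_{j,j'}=\sum_{z=0,n}\big(\psi_j(z)\lang\tilde b^{(p)}_{z,j'}\rang_t+\lang\tilde b^{(p)}_{z,j}\rang_t\psi_{j'}(z)\big)$. This gives
\[
\frac1n\pi^{(pr)}_p=\frac{\tilde\ga}{n}\sum_{z=0,n}\sum_{j,j'}{\cal W}_{j,j'}\Theta_{pr}(\la_j,\la_{j'})\big[\psi_j(z)\lang\tilde b^{(p)}_{z,j'}\rang_t+\lang\tilde b^{(p)}_{z,j}\rang_t\psi_{j'}(z)\big].
\]
The plan has three steps.
\begin{enumerate}
\item Perform the sum over the index carrying the explicit $\psi(z)$, using the structure of ${\cal W}_{j,j'}$.
\item Identify the surviving region of frequencies.
\item Pass to the Riemann-sum limit in the variable $\varrho$.
\end{enumerate}

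As in the proof of Proposition \ref{lem-pr}, we first expand ${\cal W}_{j,j'}$ via product-to-sum formulas into the discrete transforms $\widehat{(\varphi')_{n,o}}(j\pm j')$ and $\widehat{(\varphi')_{n,e}}(j\pm j')$. We then use \eqref{eq:19} to restrict to $|\ell|\le n^\kappa$, where $\ell=j-j'$ (the $j+j'$ terms only contribute near $j,j'\approx 0$ or near $n$, and are handled the same way). With $\ell$ small, $\la_j\approx\la_{j'}\approx (\pi k_{j'})^2$ and $\la_j-\la_{j'}\approx 2\pi^2\ell j'/(n+1)^2$. This gives
\[
\theta(\la_j,\la_{j'})\approx (\pi^2/(n+1)^2)^2\Big[(2\ell j')^2+4\ga^2\pi^2 j'^6/(n+1)^2\Big].
\]
The natural scaling is therefore $j'\sim (n+1)^{1/2}\varrho/\pi$, i.e.\ $\varrho=\pi j'/(n+1)^{1/2}$, which is exactly the grid $\varrho_j$ in \eqref{021612-25}. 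After the rescaling $\Theta_{pr}\sqrt{\la_{j'}}$ becomes, up to explicit powers of $n$, $\ell/((\pi\ell)^2+\ga^2\varrho^4)$ (after simplifying $j'^2$ factors). The factor $\psi_j(z)$, evaluated at $z=0$ or $n$, combines with $\sin/\cos(\pi u\ell)$ to produce $c_\ell(nv)$ in the limit, with $c_\ell(1)=(-1)^\ell$ coming from $z=n$. The extra $\pi\ell$ from $\widehat{(\varphi')}_s(\ell)=-\pi\ell\hat\varphi_c(\ell)$ then yields the factor $(\pi\ell)^2\hat\varphi_c(\ell)$.

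The sum over $j'$ becomes $\sum_{j'}(n+1)^{-1/2}\cdot(n+1)^{1/2}\tilde b^{(p)}_{z,j'}\,K_\ell(\varrho_{j'})$, a Riemann sum $\int_0^\infty{\frak b}^{(p,v)}_n(s,\varrho)K_\ell(\varrho)\dd\varrho$. Bookkeeping the constants ($\sqrt2$ from normalizations, $1/\pi$ from $\dd\varrho=\pi(n+1)^{-1/2}$) should produce the prefactor $2^{1/2}\tilde\ga/\pi$. The contributions with $j'\ge (n+1)^{2/3}$, equivalently $\varrho\ge (n+1)^{1/6}\pi$, are discarded by Cauchy--Schwarz. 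We combine the bound \eqref{011512-25} (equivalently \eqref{011612-25}) with the square integrability of $K_\ell$: $\int_R^\infty \varrho^{-8}\dd\varrho\to0$.

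The main obstacle is controlling the error terms uniformly. The Taylor approximations of $\sin$ and of $\theta$ must be justified on the set $|\ell|\le n^\kappa$, $j'\le (n+1)^{2/3}$, with relative errors $O(n^{\kappa-1}+n^{-2/3}\cdot\text{const})$. These errors must be summed against $\tilde b^{(p)}$, for which we only have the $\ell^2$ bound \eqref{011512-25} at scale $(n+1)^{-1/2}$. The Cauchy--Schwarz step needs the kernel sum $\sum_{j'}(n+1)^{-1/2}K_\ell(\varrho_{j'})^2$ to be bounded uniformly in $\ell$. This holds because $K_\ell\lesssim \ell/(\ell^2+\varrho^4)$ has $L^2$ norm $O(\ell^{-1/4})$, which is summable against the rapidly decaying $\hat\varphi_c(\ell)\ell^2$.

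The second, symmetric term, in which $\psi_{j'}(z)$ multiplies $\lang\tilde b^{(p)}_{z,j}\rang_t$, is treated identically after the exchange $j\leftrightarrow j'$. The symmetry $\Theta_{pr}(c,c')\sqrt{c'}$ versus $\Theta_{pr}(c',c)\sqrt c$ then doubles the leading contribution; I would check this factor of two carefully against the stated constant. The $j'=0$ mode and the $\widehat{(\varphi')_{n,e}}$ part are expected to cancel or vanish by the parity argument used for $\theta^{(e)}_{pr}$.
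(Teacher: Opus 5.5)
Your plan is correct and is the same Fourier-side computation the paper relies on; the proof is deferred to \cite[Proposition 5.2]{KO-supp} and runs parallel to the bulk analysis of Proposition \ref{lem-pr}: at leading order $\Theta_{pr}(\lambda_j,\lambda_{j'})\approx \frac{n+1}{2}\,\frac{\pi(j-j')}{(\pi(j-j'))^2+\gamma^2\varrho^4}$ with $\varrho=\pi j'/(n+1)^{1/2}$, each of the two halves of $B^{(p)}$ contributes $\tilde\gamma/\pi$, and the doubling you anticipated is exactly what produces $2\tilde\gamma/\pi=2^{1/2}\tilde\gamma\, c_\ell(0)/\pi$ (recall $c_\ell(v)=\sqrt{2}(-1)^{\ell v}$). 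When writing it out, note the following: it is $\Theta_{pr}$ itself (which already contains $\sqrt{\lambda_{j'}}$), not $\Theta_{pr}\sqrt{\lambda_{j'}}$, that rescales to this kernel; the Taylor step has relative error $O(n^{\kappa}/j'+(j'/n)^2)$, so you need $\kappa<1/2$ and must excise $\varrho\le\epsilon$ separately, using $|\Theta_{pr}(\lambda_j,\lambda_{j'})|\lesssim n/|j-j'|$ and Cauchy--Schwarz to get an $O(\epsilon^{1/2})$ error; the $\widehat{(\varphi')_{n,e}}$ part of $\mathcal{W}_{j,j'}$ is negligible because of its prefactor $\sin(\pi k_j/2)=O(n^{-1/2})$, not by parity; and at $z=n$ one has $\psi_j(n)\approx (2/(n+1))^{1/2}(-1)^{j'}(-1)^{\ell}$, where the factor $(-1)^{j'}$ must be absorbed into $\mathfrak{b}^{(p,1)}_n$ through $\psi_{j'}(n-y)=(-1)^{j'}\psi_{j'}(y)$.
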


  { From \eqref{eq:energy}, Propositions \ref{lem-pr} and
   \ref{prop011512-25} it follows that for any $\varphi\in C_c^\infty(0,1)$
   \begin{align}
     \label{051612-25}
    &
    {\cal E}_n [t; \varphi] -  {\cal E}_n [0; \varphi]      
     =
   -\frac{1}{(2^3 \gamma)^{1/2}} \sum_{\ell=1}^{+\infty} (\pi \ell)^{3/2}\hat{\varphi     }_c(\ell)\int_0^t \widehat{\cal E}(s,\ell)\dd s
     \\
      &
             +  \frac{2^{1/2}\tilde \ga   }{ \pi}    
                    \sum_{v=0,1}\sum_{\ell=1}^{+\infty}c_\ell(nv)(\pi \ell)^2\hat{\varphi     }_c(\ell)
        \int_0^t\dd s\int_0^{+\infty}
                  \frac{{\frak b}^{(p,v)}_{n}(s, \varrho) \dd\varrho 
        }{
         ( \pi\ell )^2  
    +  \ga^2 
           \varrho ^4  }  
      +o_n(1)\notag
    \end{align}
This is not a closed equation. }
    We need to find an expression for ${\frak b}^{(p,v)}_{n}(s, \varrho)$
    in terms of the energy {field} in the bulk.

The asymptotics ${\frak b}^{(p,v)}_{n}(s, \varrho)$, as $n\to+\infty$, involves
          the covariances between momenta at the boundary and
          stretches in the bulk. To describe it we  define   sequences of functions
$$
{\frak b}^{(pr,v)}_{n } :[0,+\infty)^2\to\bbR,\quad  v=0,1 
$$
as follows.       For $\varrho\ge (n+1)^{2/3}
\pi$ and $t\ge0$ we let                      ${\frak b}^{(pr,v)}_{n,z}(t, \varrho)=0$.
 For $ 0\le j\le  (n+1)^{2/3}$, $t\ge0,\, \varrho\in
  \big[\varrho_j, \varrho_{j+1}\big)$ we  let
                                             (see  \eqref{bpr})
\begin{align}
    \label{tbr}  {\frak b}^{(pr,v)}_{n }(t, \varrho)=
     (n+1)^{1/2}\tilde b_{nv,j}^{(pr)}(t),\quad   v=0,1.
\end{align}
By the Plancherel identity  and \eqref{eq:9} (see \eqref{011612-25})
for any $t>0$
there exists $C>0$ such that   
\begin{align*}
   & 
\int_0^t\dd s\int_{0}^{+\infty}\big[{\frak
  b}^{(\iota,v)}_{n }(s, \varrho)\big]^2\dd
     \varrho\le    (n+1)^{1/2}
     \sum_{x=0}^n \int_0^t\big( b_{nv,x}^{(\iota)}(s)\big)^2\dd s\le C
\end{align*}
for $n=1,2,\ldots$. The following result has been established in
\cite[Proposition 9.1]{KO-supp}
        \begin{proposition}
    \label{prop010901-25}
    For any  test function
       $f\in    L^2[0,+\infty)$, $t>0$ and $v=0,1$
       we have
       \begin{align}
         \label{031612-25}
        &(1+2\tilde\ga)  \int_0^t\dd s\int_0^{+\infty}{\frak b}^{(p,v)}_{n}(s,\varrho)f(\varrho)\dd \varrho   
  \\
&     = \sqrt{2} \int_0^t\dd s\int_{0}^{+\infty}\Bigg( T_v- \sum_{\ell=0}^{+\infty} 
   \frac{\ga^2  \varrho^4 c_\ell(v) \widehat{\cal E}(s,\ell)}{ (\ell\pi)^2 
    +\ga^2 \varrho^4    } \Bigg)f(\varrho)\dd \varrho \notag\\
   &
     +\frac{\tilde\ga}{\pi} \int_0^t\dd s \int_{0}^{+\infty}
  {\frak b}^{(pr,v)}_{n}( s,\varrho){\frak T}f(\varrho)\dd \varrho +o_n(1) .\notag
   \end{align}
 \end{proposition}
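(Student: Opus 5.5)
The plan is to extract an equation for the boundary–bulk covariances $b^{(p)}_{z,x}=\bbE_n[p_zp_x]$ (for $x\neq z$) and $b^{(p)}_{z,z}=T_z-\bbE_n p_z^2$ from the covariance identity \eqref{011312-25}. This equation, the Lyapunov-type relation $A\lang S\rang_t+\lang S\rang_tA^T=\Sigma_2+n^{-3/2}\delta_{0,t}S$, is already solved in Fourier variables in \eqref{031312-25}. The key observation is that the boundary quantities $B^{(p)}_{j,j'}$ and $B^{(pr)}_{j,j'}$ appear linearly on the right of \eqref{031312-25}, while the left side also contains them. The boundary momentum $p_z$ is a superposition $p_z=\sum_j\psi_j(z)\tilde p_j$. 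Consequently $\tilde b^{(p)}_{z,j'}$ can be recovered from $\sum_j\psi_j(z)\tilde S^{(p)}_{j,j'}$, corrected by the diagonal term $T_z$. We therefore multiply the $\iota=p$ row of \eqref{031312-25} by $\psi_j(z)$ and sum over $j$. This produces a self-consistent (Fredholm-type) linear equation for $\lang\tilde b^{(p)}_{z,\cdot}\rang_t$, in which three sources enter:
\begin{itemize}
\item[(a)] the bulk term $\sum_j\psi_j(z)\Theta_p(\la_j,\la_{j'})F_{j,j'}$;
\item[(b)] the terms $\Pi^{(p)}_{p}$ and $\Pi^{(p)}_{pr},\Pi^{(p)}_{rp}$, which involve $\tilde b^{(p)}$ and $\tilde b^{(pr)}$ again;
\item[(c)] the fluctuation terms $\Xi R$, which are $O(n^{-3/2})$ and vanish by the energy bound \eqref{eq:energyb}, exactly as in \cite[Section 10]{KO-supp}.
\end{itemize}

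Next we pass to the scaling of \eqref{021612-25}, $\varrho=\pi j(n+1)^{-1/2}$, i.e.\ $\la_j\approx\varrho^2/(n+1)$, and pair with a test function $f\in L^2[0,+\infty)$. Terms with $j'$ beyond $(n+1)^{2/3}$ are discarded using \eqref{011512-25} and Cauchy–Schwarz. We then compute the limits of the kernels.

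\emph{Self-interaction term.} In $\sum_j\psi_j(z)^2\tilde\ga\,\Theta_p(\la_j,\la_{j'})\tilde b^{(p)}_{z,j'}$, the sum over $j$ of $\psi_j(z)^2\Theta_p(\la_j,\la_{j'})$ is a Riemann sum. After rescaling it should converge to a constant (one expects $2\tilde\ga$, up to the normalization of the lhs), which is moved to the left and gives the factor $(1+2\tilde\ga)$. The off-diagonal contribution $\sum_j\psi_j(z)\psi_{j'}(z)\Theta_p\,\tilde b_{z,j}$ yields an integral operator in $\varrho$. We expect it either to be negligible or to combine with the $pr$-terms into the operator $\frac{\tilde\ga}{\pi}{\frak T}$ acting on ${\frak b}^{(pr,v)}_n$. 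Identifying ${\frak T}$ explicitly is part of this step: we read it off from the limit of $\sum_j\psi_j(z)\Xi^{(p)}_{pr}(\la_j,\la_{j'})$ after the substitution $\varrho=\pi j(n+1)^{-1/2}$.

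\emph{Bulk term and constant source.} The bulk term (a) involves $F_{j,j'}=\ga\sum_y\phi_j(y)\phi_{j'}(y)\lang(\nabla^\star p_y)^2\rang_t$. We write $\phi_j\phi_{j'}$ via the cosines of $(j\pm j')$ and use Proposition \ref{prop:energy} to replace $(\nabla^\star p_y)^2$ by $2\mathcal E_y$; this step needs a version of Proposition \ref{prop:energy} with test functions up to the boundary, or else a cut-off argument. We then perform the $j$-summation at fixed $\ell=j-j'$, as in the proof of Proposition \ref{lem-pr}, with $\la_{j'}\sim\varrho^2/n$ and $\la_j-\la_{j'}\sim 2\pi\ell\varrho/n^{3/2}$. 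This should give the kernel $\ga^2\varrho^4/((\pi\ell)^2+\ga^2\varrho^4)$ against $c_\ell(v)\widehat{\cal E}(s,\ell)$. The constant $T_v$ arises from the diagonal part $b^{(p)}_{z,z}$ (the $\ell=0$ mode), and the overall factor $\sqrt2$ comes from $\psi_j(z)^2\approx 2/(n+1)$ together with the normalization of ${\frak b}$.

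\emph{Main obstacle.} The hardest step is this bulk-term asymptotics uniformly near the boundary, i.e.\ for $y$ close to $0$ or $n$. There Proposition \ref{prop:energy} is not directly applicable, and the sums over $\ell$ are not cut off by the smoothness of a test function. To control it we would use the $L^2$ covariance bound \eqref{eq:9} together with Corollary \ref{cor021212-25} to show that a boundary strip of width $\delta n$ contributes $O(\delta^{1/2})$. For large $\ell$ we would use the summable decay of the kernel, bounded by $\min(1,\ga^2\varrho^4/\ell^2)$. Finally we let $n\to\infty$ and then $\delta\to0$.
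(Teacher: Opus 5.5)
Your starting point is the natural one: apply $\sum_j\psi_j(z)$ to the $\iota=p$ row of \eqref{031312-25}, use $\tilde b^{(p)}_{z,j'}=T_z\psi_{j'}(z)-\mathbb E_n[p_z\tilde p_{j'}]$, and rescale by $(n+1)^{1/2}$ at $\varrho=\pi j'(n+1)^{-1/2}$. The paper gives no proof here and only cites \cite[Proposition 9.1]{KO-supp}. Your strip cut-off for the bulk source near the boundary is also reasonable.

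The gap is that the two things the statement is really about, the factor $1+2\tilde\ga$ and the operator ${\frak T}$, are read off from the target rather than derived. The attributions you guess do not match the kernels. For $\ell=j-j'$ fixed,
\[
\Theta_p(\la_{j'+\ell},\la_{j'})=\frac{(n+1)\,\ga\varrho^2}{2\big[(\pi\ell)^2+\ga^2\varrho^4\big]}\,\big(1+o_n(1)\big).
\]
So $\Theta_p$ is of order $n$ and concentrated on $|j-j'|=O(1)$; away from the diagonal it is $O(1)$ and contributes $o_n(1)$. This has two consequences:
\begin{itemize}
\item The second half of $B^{(p,z)}_{j,j'}$, i.e.\ $\tilde\ga\,\psi_{j'}(z)\sum_j\psi_j(z)\Theta_p(\la_j,\la_{j'})\lang\tilde b^{(p)}_{z,j}\rang_t$, is neither negligible nor related to ${\frak T}$. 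Tested against $f$, it is a local term equal to the first half. The $2$ in $1+2\tilde\ga$ comes from the two halves together, not from $\sum_j\psi_j(z)^2\Theta_p$ alone.
\item ${\frak T}$ comes from $\Pi^{(p)}_{rp}B^{(rp)}$, i.e.\ from $\tilde\ga\,\psi_{j'}(z)\sum_j\psi_j(z)\Theta_{pr}(\la_{j'},\la_j)\lang\tilde r_jp_z\rang_t$. Its off-diagonal kernel $\sqrt{\la_j}/(\la_{j'}-\la_j)$, combined with $\psi_j(0)\psi_{j'}(0)\approx 2/(n+1)$, produces exactly the term $\frac{\tilde\ga}{\pi}\int{\frak b}^{(pr,v)}_n\,{\frak T}f$ of \eqref{031612-25}. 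The term you point to, $\Pi^{(p)}_{pr}=-\tilde\ga\,\Xi^{(p)}_{pr}$ summed against $\psi_j(z)^2$, reduces to $\frac{2\varrho}{\pi}\,\mathrm{p.v.}\!\int_0^\infty\frac{d\varrho'}{\varrho'^2-\varrho^2}=0$.
\end{itemize}

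More seriously, the step ``$\sum_j\psi_j(z)^2\Theta_p(\la_j,\la_{j'})$ is a Riemann sum converging to a constant'' fails at this scale. Because the kernel lives on $|\ell|=O(1)$, its limit is the lattice sum
\[
\sum_{\ell\in\mathbb Z}\frac{\ga\varrho^2}{(\pi\ell)^2+\ga^2\varrho^4}=\coth(\ga\varrho^2)=1+2\sum_{k\ge1}e^{-2k\ga\varrho^2}.
\]
The constant you need is instead what the integral $\int_{\mathbb R}\frac{\ga\varrho^2\,d\ell}{(\pi\ell)^2+\ga^2\varrho^4}=1$ gives. The difference corresponds to returns to the boundary after round trips through the chain, and it is of order one for $\varrho=O(1)$.

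Similarly, the terms from the opposite endpoint do not drop out. Since $\psi_j(0)\psi_j(n)=(-1)^j\psi_j(0)^2$, after absorbing the parity factor they couple ${\frak b}^{(p,v)}_n$ to the field at the other boundary with weight $2\tilde\ga/\sinh(\ga\varrho^2)$.

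To reach \eqref{031612-25}, with the constant $1+2\tilde\ga$ and no coupling between $v=0$ and $v=1$, you would have to show how these contributions cancel or are absorbed elsewhere. Possible places are the energy term or Proposition~\ref{prop011001-25}. Nothing in your proposal does this, and it is the real obstacle, more so than the bulk term near the boundary.
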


The operator appearing on the right hand side of \eqref{031612-25} is
defined  for $f\in C^1_c[0,+\infty) $ - the space of compactly supported $C^1$-smooth
functions - using the formula
 \begin{equation*}
            {\frak T}f(\varrho)=2\int_0^{+\infty}
      \frac{[f(\varrho')-f(\varrho)]\varrho}{(\varrho-\varrho')(\varrho+\varrho')}\dd
      \varrho',\quad f\in C^1_c[0,+\infty).
    \end{equation*}
It    extends continuously to a linear operator   ${\frak
  T}:L^p[0,+\infty)\to L^p[0,+\infty)$ for any $p\in(1,+\infty)$, see
\cite[Theorem C.1]{KO-supp}. In addition,  its adjoint {${\frak
  T}^\star:L^q[0,+\infty)\to L^q[0,+\infty)$, with $1/p+1/q=1$,}  is
the continuous extension of
\begin{equation}
  \label{Ts}
{\frak T}^\star f(\varrho):=2\int_0^{+\infty}\frac{   [ \varrho' f(\varrho')-\varrho f(\varrho)]  \dd
  \varrho' }{(\varrho' -\varrho)(\varrho' +\varrho)},\quad f\in C_c^1[0,+\infty).
\end{equation} and {for $p=q=2$ we have}
\begin{equation}
  \label{eq:id}
 {\frak T}^\star {\frak T}=\pi^2I,
\end{equation}
where $I$ is the identity operator on  $L^2[0,+\infty) $.

 To describe the asymptotics of  ${\frak  b}^{(pr,v)}_{n}( s,\varrho)$
  the following result has been established in \cite[Proposition 9.2]{KO-supp}
   \begin{proposition}
    \label{prop011001-25}
    For   any test function
       $f\in    L^2[0,+\infty) $, $t>0$ and $v=0,1$
       we have
\begin{align*}
  & \int_0^t\dd s\int_0^{+\infty}{\frak  b}^{(pr,v)}_{n}( s,\varrho)f(\varrho)\dd \varrho
  \\
  &
    =-\frac{\tilde\ga}{\pi} \int_0^t\dd s\int_{0}^{+\infty} 
  {\frak b}^{(p,v)}_{n}( s,\varrho) {\frak T}^\star f(\varrho)\dd \varrho + o_n(1) . 
\end{align*}
\end{proposition}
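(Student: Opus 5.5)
The plan is to derive an exact identity for the boundary–bulk covariances $\tilde b^{(pr)}_{z,j}$ from the covariance identity \eqref{011312-25}, expressing them through the $\tilde b^{(p)}_{z,j'}$, and then pass to the limit in the scaled variables $\varrho=\pi j/(n+1)^{1/2}$. We work with $v=0$ ($z=0$); the case $v=1$ is symmetric after $x\mapsto n-x$.

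\textbf{Step 1: the exact identity.} Take the entries of \eqref{011312-25} in the row of $p_0$ and the columns $r_x$, $x=1,\dots,n$. Because $A$ has the block structure with $-\nabla^\star$, $-\nabla$, $-\ga\Delta_{\rm N}+\tilde\ga E$, this row expresses $\tilde\ga\lang p_0 r_x\rang_t$ through three kinds of terms: $\lang \nabla r_0\, r_x\rang_t$-type terms, $\ga\lang(\Delta_{\rm N}p)_0 r_x\rang_t$, and $\lang p_0\nabla^\star p_x\rang_t$. The noise matrix $\Sigma_2$ contributes nothing off the $p$–$p$ block. There is also a remainder $n^{-3/2}\delta_{0,t}S$, which is negligible by \eqref{eq:energyb}.

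We then apply the transform in $x$ with the basis $\phi_j$, using $\nabla\phi_j=\ga_j\psi_j$. The term $\lang p_0\nabla^\star p_x\rang_t$ becomes $-\ga_j\lang p_0\tilde p_j\rangle$, i.e.\ $\ga_j\tilde b^{(p)}_{0,j}$ up to the diagonal correction $T_L$. We do the same for the column $p_x$ of the row $p_0$, and for the rows $r_x$. The goal is to eliminate the bulk covariance blocks $\tilde S^{(r)}_{j,j'}$ and $\tilde S^{(rp)}_{j,j'}$ with the same resolvent formula \eqref{031312-25} already used (with $\Xi$ and $\Pi$). This should give, up to $o_n(1)$,
\[
\tilde b^{(pr)}_{0,j}=-\tilde\ga\sum_{j'=0}^n K_n(j,j')\,\tilde b^{(p)}_{0,j'},
\]
where the kernel is built from $\psi_{j'}(0)\psi_j(0)$ times the $\Xi^{(pr)}$-coefficients, e.g.\ $\Xi^{(pr)}_{p}(\la_j,\la_{j'})\sqrt{\la_{j'}}$.

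\textbf{Step 2: scaling limit of the kernel.} Put $\la_j\approx \varrho^2/(n+1)$ and $\la_{j'}\approx \varrho'^2/(n+1)$. The leading regime is $|j|,|j'|\lesssim n^{2/3}$. There $\theta(c,c')$ is dominated by $(c-c')^2$, since $2\ga^2cc'(c+c')$ is of lower order when $|\varrho-\varrho'|$ is not too small. The factor $\psi_j(0)\psi_{j'}(0)\approx 2/(n+1)$ turns the sum over $j'$ into $\pi^{-1}\int d\varrho'$ after multiplying by $(n+1)^{1/2}$. We expect the limiting kernel to be
\[
\frac{2}{\pi}\,\frac{\varrho'}{(\varrho'-\varrho)(\varrho'+\varrho)},
\]
understood in principal value, with the regularization by $\ga^2cc'(c+c')$ producing exactly the compensating term $-\varrho f(\varrho)$ in \eqref{Ts}. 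Pairing with a test function $f$ and moving the kernel onto $f$ then gives $-\frac{\tilde\ga}{\pi}\int {\frak b}^{(p,0)}_n\,{\frak T}^\star f$.

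\textbf{Step 3: error control.} We first take $f\in C^1_c(0,\infty)$. The tails $|j|>n^{2/3}$ and the high-frequency region are controlled by Cauchy–Schwarz with the $L^2$ bound \eqref{011612-25} and the uniform $L^2$ bound on $\frak T^\star$ from \cite[Theorem C.1]{KO-supp}. The Riemann-sum error is controlled by the smoothness of $f$. Density of $C^1_c$ in $L^2$, together with the uniform bounds, then extends the identity to all $f\in L^2[0,+\infty)$.

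\textbf{Main obstacle.} The hardest part is the near-diagonal region $|\varrho-\varrho'|\lesssim n^{-1/6}$, where the singular kernel must be shown to converge to the principal-value form of $\frak T^\star$. Here the term $\ga^2cc'(c+c')$ in $\theta$ is comparable to $(c-c')^2$. To handle it, I would first subtract $f(\varrho)$ times the row sum of the discrete kernel, then show that this row sum converges to the compensating integral. Only the odd part of the kernel around the diagonal survives, so a symmetrization in $j\leftrightarrow j'$ should make the local contribution $O(n^{-1/6})$.
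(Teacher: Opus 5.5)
There is a genuine gap. Your overall route is the natural one: evaluate the resolved covariance formula \eqref{031312-25} at the boundary, rescale by $\varrho=\pi j/(n+1)^{1/2}$, and close by density using the $L^2$ bounds. However, the kernel you predict in Step~2 is the transpose of the correct one, and you reach the stated formula only through an inconsistent duality step.

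Since $\tilde s^{(p,\tilde r)}_{0,j}=\lang \tilde r_j p_0\rang_t=\sum_{j'}\psi_{j'}(0)\tilde S^{(pr)}_{j',j}$, the summed index $j'$ is the \emph{momentum} index. The relevant coefficient is therefore $\Pi^{(pr)}_p(\la_{j'},\la_j)=\tilde\ga(\la_{j'}-\la_j)\sqrt{\la_j}/\theta(\la_j,\la_{j'})$, with the square root on the \emph{output} frequency. Pair it with the half $\psi_j(0)\lang\tilde b^{(p)}_{0,j'}\rang_t$ of $B^{(p)}_{j',j}$. Using $(n+1)^{1/2}\psi_j(0)\psi_{j'}(0)\sqrt{\la_j}/(\la_{j'}-\la_j)\approx 2\varrho/(\varrho'^2-\varrho^2)$ and $\sum_{j'}\approx \pi^{-1}(n+1)^{1/2}\int\dd\varrho'$, the time-integrated limit is ${\frak b}^{(pr,0)}_n(\varrho)\approx -\frac{2\tilde\ga}{\pi}\,\mathrm{PV}\!\int_0^{\infty}\frac{\varrho\,{\frak b}^{(p,0)}_n(\varrho')}{(\varrho-\varrho')(\varrho+\varrho')}\,\dd\varrho'=-\frac{\tilde\ga}{\pi}\,{\frak T}{\frak b}^{(p,0)}_n(\varrho)$. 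The other half, $\lang\tilde b^{(p)}_{0,j}\rang_t$ times a row sum, tends to ${\frak b}^{(p,0)}_n(\varrho)$ times $\mathrm{PV}\!\int_0^\infty 2\varrho\,(\varrho'^2-\varrho^2)^{-1}\dd\varrho'=0$. So the compensating terms in ${\frak T}$ and ${\frak T}^\star$ are not produced by the regularization; they only encode the principal value.

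The correct kernel is that of ${\frak T}$, and only then does pairing with $f$ give ${\frak T}^\star f$. Your kernel $\frac{2}{\pi}\frac{\varrho'}{(\varrho'-\varrho)(\varrho'+\varrho)}$ has the input variable in the numerator, so it is the kernel of ${\frak T}^\star$. Moved onto $f$, it yields $-\frac{\tilde\ga}{\pi}\int{\frak b}^{(p,0)}_n\,{\frak T}f$, not ${\frak T}^\star f$. This is not a matter of convention: ${\frak T}\neq{\frak T}^\star$, and Theorem~\ref{thm011001-25} uses exactly ${\frak T}^\star{\frak T}=\pi^2 I$ when this proposition is inserted into Proposition~\ref{prop010901-25}. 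Your sample coefficient $\Xi^{(pr)}_p(\la_j,\la_{j'})\sqrt{\la_{j'}}$ has the same transposition plus a spurious factor $\sqrt{\la_{j'}}$. That factor lowers its size off the diagonal from $O((n+1)^{1/2})$ to $O(1)$, so the resulting sum would vanish in the limit.

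Second, Step~1 declares everything except the $\tilde b^{(p)}$ term to be $o_n(1)$, but two of these terms are of order one termwise. They disappear only through cancellations you have to prove.

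(a) The bulk term driven by $F_{j',j}$: near the diagonal its rescaled coefficient is of order one and odd in $j'-j$, while $F_{j',j}$ depends, to leading order, evenly on $j'-j$. It vanishes by this parity, which is exactly why no energy term appears here, unlike in Proposition~\ref{prop010901-25}.

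(b) The terms $\Pi^{(pr)}_{pr}B^{(pr)}+\Pi^{(pr)}_{rp}B^{(rp)}$ contain $\tilde s^{(p,\tilde r)}_{0,\cdot}$ itself. The piece $-\ga\tilde\ga\,\tilde s^{(p,\tilde r)}_{0,j}\sum_{j'}\psi_{j'}(0)^2\,\la_j(\la_j+\la_{j'})/\theta(\la_j,\la_{j'})$ is asymptotically $-\tilde\ga\coth(\ga\varrho^2)\,\tilde s^{(p,\tilde r)}_{0,j}$. It is compensated only by the companion piece $\ga\tilde\ga\,\psi_j(0)\sum_{j'}\psi_{j'}(0)\,(\la_j+\la_{j'})\sqrt{\la_j\la_{j'}}\;\tilde s^{(p,\tilde r)}_{0,j'}/\theta(\la_j,\la_{j'})$. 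That piece is a symmetric local average concentrated on $|j'-j|\lesssim 1+\ga\varrho^2$.

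Consequently, the available bounds do not give a pointwise relation $\tilde b^{(pr)}_{0,j}=-\tilde\ga\sum_{j'}K_n(j,j')\tilde b^{(p)}_{0,j'}+o_n(1)$. You must first pair with a smooth $f$ supported away from $0$. Then move the averaging onto $f$, where it reproduces $\tilde\ga\coth(\ga\varrho^2)f$ up to $o_n(1)$, and use the $L^2$ bound on ${\frak b}^{(pr,v)}_n$. The analogous pieces from the endpoint $n$, which enter through $\psi_{j'}(n)=(-1)^{j'}\psi_{j'}(0)$, must be treated the same way. The identity holds only in the weak form in which it is stated, and your Step~1 as written does not deliver it.
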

 Finally from Propositions \ref{prop010901-25} and \ref{prop011001-25}
 we conclude the folowing.
\begin{theorem}
 \label{thm011001-25}
   For   any  test function
       $f\in    L^2[0,+\infty)$, $t>0$
 and $v=0,1$      we have
\begin{align*}
     & \int_0^t\dd s \int_0^{+\infty}{\frak b}^{(p,v)}_{n}(s,\varrho)f(\varrho)\dd \varrho   
 = \frac{\sqrt{2}}{(1+\tilde\ga)^2 } \\
   &
   \qquad    \times \int_0^t\dd s\int_{0}^{+\infty}\Bigg(T_v-
                      \sum_{\ell=0}^{+\infty} 
     \frac{\ga^2  \varrho^4 c_\ell(v) \widehat{\cal E}_n(s,\ell)}{ (\ell\pi)^2 +\ga^2  \varrho^4    }
     \Bigg) f(\varrho) \dd \varrho + o_n(1)  \notag
\end{align*}
where $\widehat{\cal E}_n(t,\ell)$  are the Fourier cosine coeffcients of
  the energy functional.
\end{theorem}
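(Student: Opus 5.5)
The strategy is to combine Propositions \ref{prop010901-25} and \ref{prop011001-25} into a single closed relation for ${\frak b}^{(p,v)}_n$ alone, eliminating ${\frak b}^{(pr,v)}_n$, and then to simplify using the identity \eqref{eq:id}. Fix $v$ and $t$, and take $f\in C^1_c[0,+\infty)$ first. Then ${\frak T}f\in L^2[0,+\infty)$, so ${\frak T}f$ is admissible as a test function in Proposition \ref{prop011001-25}. Applying that proposition with $f$ replaced by ${\frak T}f$ gives
\begin{align*}
\int_0^t\dd s\int_0^{+\infty}{\frak b}^{(pr,v)}_n(s,\varrho)\,{\frak T}f(\varrho)\,\dd\varrho
=-\frac{\tilde\ga}{\pi}\int_0^t\dd s\int_0^{+\infty}{\frak b}^{(p,v)}_n(s,\varrho)\,{\frak T}^\star{\frak T}f(\varrho)\,\dd\varrho+o_n(1).
\end{align*}
By \eqref{eq:id}, ${\frak T}^\star{\frak T}f=\pi^2 f$. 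The right-hand side is therefore $-\pi\tilde\ga\int_0^t\int_0^{+\infty}{\frak b}^{(p,v)}_n f+o_n(1)$.

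Next, substitute this into the last term of \eqref{031612-25}. That term carries the prefactor $\tilde\ga/\pi$, so it becomes $-\tilde\ga^2\int_0^t\int_0^{+\infty}{\frak b}^{(p,v)}_n f+o_n(1)$. Moving it to the left-hand side gives the coefficient $1+2\tilde\ga+\tilde\ga^2=(1+\tilde\ga)^2$ in front of $\int_0^t\int_0^{+\infty}{\frak b}^{(p,v)}_n f$. Dividing by $(1+\tilde\ga)^2$ yields exactly the claimed identity for smooth compactly supported $f$.

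It remains to extend from $f\in C^1_c$ to general $f\in L^2[0,+\infty)$, and I expect the only real work to be here. The difficulty is that the $o_n(1)$ terms depend on $f$, so the passage to the limit must be uniform in $n$. I would use density: approximate $f$ in $L^2$ by $f_k\in C^1_c$, and control each side uniformly in $n$. The left side is controlled by Cauchy--Schwarz in $(s,\varrho)$ together with the uniform bound \eqref{011612-25}, which gives
\begin{align*}
\Big|\int_0^t\!\!\int_0^{+\infty}{\frak b}^{(p,v)}_n(f-f_k)\Big|\le C\,t^{1/2}\|f-f_k\|_{L^2}.
\end{align*}
On the right side, the $T_v$ term causes no trouble. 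I would first check that the term $T_v\int f$ indeed requires care: $\int f$ need not be finite for $f\in L^2$. One could either note that Proposition \ref{prop010901-25} is itself stated for $L^2$ test functions, so the combination as a whole is meaningful, or simply state the density argument for the full bracket. The bracket equals $\sum_\ell c_\ell(v)\widehat{\cal E}(s,\ell)\,(\ell\pi)^2/((\ell\pi)^2+\ga^2\varrho^4)$, using $\sum_\ell c_\ell(v)\widehat{\cal E}(s,\ell)\approx$ the boundary value.

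For the energy sum, the function
\begin{align*}
\varrho\mapsto\sum_\ell\frac{\ga^2\varrho^4 c_\ell(v)\widehat{\cal E}_n(s,\ell)}{(\ell\pi)^2+\ga^2\varrho^4}
\end{align*}
needs to be bounded in $L^2_{\rm loc}$ uniformly in $n$. This follows from Corollary \ref{cor021212-25} ($\sum_\ell\widehat{\cal E}_n^2\le C$) and Cauchy--Schwarz in $\ell$, since $\sum_\ell\varrho^8/((\ell\pi)^2+\varrho^4)^2\lesssim\varrho^2$. Once these uniform bounds are in place, letting $k\to\infty$ after $n\to\infty$ completes the proof.
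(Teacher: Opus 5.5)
Your main computation is exactly how the paper gets the theorem. You test Proposition~\ref{prop011001-25} with ${\frak T}f$ and use \eqref{eq:id}, so that the last term of \eqref{031612-25} becomes $-\tilde\gamma^2\int_0^t\!\int_0^{+\infty}{\frak b}^{(p,v)}_n f$; collecting terms then gives the factor $1+2\tilde\gamma+\tilde\gamma^2=(1+\tilde\gamma)^2$.

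The final density step, which you single out as the real work, is not needed. ${\frak T}$ is bounded on $L^2[0,+\infty)$, \eqref{eq:id} holds on all of $L^2$, and both propositions are already stated for arbitrary $L^2$ test functions. So the same substitution works directly for every $f\in L^2[0,+\infty)$.

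As sketched, that step would also not close for non-compactly supported $f$, for two reasons:
\begin{itemize}
\item A pointwise bound of order $1+\varrho$ on the energy term does not control its pairing with $f-f_k$ uniformly in $n$.
\item Replacing $\sum_\ell c_\ell(v)\widehat{\cal E}_n(s,\ell)$ by $T_v$ is a property of the limiting equation, not of the prelimit.
\end{itemize}
You should simply drop that step.
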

 {Combining \eqref{051612-25} and   Theorem
   \ref{thm011001-25}  
   concludes  the proof of \eqref{eq:conv-temp}.}

 % {\sout{ Equipartition property, claimed in \eqref{eq:conv-temp2}, is shown
 %  in \cite[Section 11.2]{KO-supp}. }}

\subsection*{Funding declaration} T. Komorowski acknowledges the support of
the NCN grant 2024/53/B/ST1/00286.

 \subsection*{Data availability statement}
We do not analyse or generate any datasets, because our work proceeds within a theoretical and mathematical approach.

 \subsection*{Conflict of interest statement.} The authors declare no conflicts of interest
regarding this manuscript.


\begin{thebibliography}{99}

% \bibitem{aron} N. Aronszajn; K. T. Smith (1961). {\it Theory of Bessel potentials I}. Ann. Inst. Fourier. 11. 385-475. doi:10.5802/aif.116

\bibitem{bborev} G. Basile, C. Bernardin, S. Olla, {\it A momentum
  conserving model with anomalous thermal conductivity in low
  dimension}, Phys. Rev. Lett. \textbf{96}, 204303 (2006),
DOI 10.1103/Phys-RevLett.96.204303.


\bibitem{bbo2}  G. Basile, C. Bernardin, S. Olla, {\it Thermal Conductivity
  for a Momentum Conservative Model}, Comm. Math. Phys. \textbf{287},
  67--98, (2009).

\bibitem{babo} G. Basile, A. Bovier,
  {\it Convergence of a kinetic equation to a fractional diffusion equation,}
 Markov Proc. Rel. Fields \textbf{16}, 15-44 (2010);

  
\bibitem{BOS} G. Basile, S. Olla, H. Spohn, {\it{Energy transport in
    stochastically perturbed lattice dynamics}}, Arch. Rat. Mech. Anal,
  Vol. 195, no. 1, 171-203, 2009. 

   \bibitem{BBJKO} G. Basile, C. Bernardin, M. Jara, T. Komorowski, S. Olla,
    {\it{ Thermal conductivity in harmonic lattices with random
      collisions}}, in ``Thermal transport in low dimensions: from
    statistical physics to nanoscale heat transfer'', S. Lepri ed.,
   \textbf{Lecture Notes in Physics} \textbf{921}, chapter 5,
   Springer 2016.\\
   https://doi.org/10.10007/978-3-319-29261-8-5

  \bibitem{BGJ16}    Bernardin, C., Gonçalves, P., Jara, M.,
    {\it 3/4-Fractional Superdiffusion in a System of Harmonic Oscillators Perturbed by a Conservative Noise}.
    Arch Rat. Mech. Anal 220, 505–542 (2016).
    https://doi.org/10.1007/s00205-015-0936-0

    
  \bibitem{bo11} C. Bernardin, S. Olla,
    {\it Transport Properties of a Chain of Anharmonic Oscillators with random flip of velocities},
    J. Stat. Phys. 145, 1224-1255, 2011.
    https://doi.org/10.1007/s10955-011-0385-6


 \bibitem{BCGS-24} C. Bernardin, P Cardoso, P Gon\c{c}alves, S. Scotta,
   {{\it Hydrodynamic limit for a boundary driven
       super-diffusive symmetric exclusion}},
   Stoch. Proc. and their Appl., Vol. 165, Pages 43-95,   2023.

 \bibitem{BGJO-21}  C. Bernardin, P. Gonçalves, and B. Jiménez-Oviedo.
   {\it A microscopic model for a one parameter class of fractional
      Laplacians with Dirichlet boundary conditions}.
    Arch. Ration. Mech. Anal., 239(1):1–48, 2021.

% \bibitem{DS} N. Dunford, J. T. Schwartz {\it Linear Operators,
%     II. Spectral Theory}, Interscience Publishers, 1963

%  \bibitem{GR} I.S. Gradshteyn and I.M. Ryzhik, {\it Table of Integrals, Series, and Products},
%    Seventh Edition, Academic Press-Elsevier 2007.

\bibitem{BLR} F. Bonetto, J.L. Lebowitz, and L. Rey-Bellet,
  {\it Fourier's Law: a Challenge for Theorists},
  Mathematical Physics 2000,
  Edited by A. Fokas, A. Grigoryan, T. Kibble and B. Zegarlinsky,
  Imperial College Press, 128-151 (2000).

 \bibitem{kjo} M. Jara, T. Komorowski,  S. Olla, {\it{A limit theorem for an
   additive functionals of Markov chains}}, Annals of Applied Probability
   \textbf{19}, No. 6, 2270-2300, 2009.

   
 \bibitem{JKO15}   Jara, M., T. Komorowski, S. Olla,
   {{\it  Superdiffusion of energy in a chain of harmonic
  oscillators with noise.}} Commun. Math. Phys. {\bf 339}, 407-453
(2015), https://doi.org/10.1007/s00220-015-2417-6
   
   % \bibitem{kelley} Kelley, J. L. (1991), {\it General topology},
   % Springer-Verlag, ISBN 978-0-387-90125-1.

 \bibitem{kl}  Kipnis, C., Landim, C., {\it  Scaling Limits of
     Interacting Particle Systems},
   Grundlehren der mathematischen Wissenschaften 320, Springer,
   https://doi.org/10.1007/978-3-662-03752-2


 \bibitem{KO-supp} T. Komorowski, S. Olla,
   {\it Thermal boundary conditions in fractional superdiffusion of energy.}
   (2025). Available at https://arxiv.org/pdf/2505.06952v5.


 \bibitem{KLO23} T.  Komorowski, J. L. Lebowitz, S. Olla, {\it Heat flow in
   a periodically forced, thermostatted chain}, Comm. Math. Phys.,  400, pp 2181-2225 (2023)
 https://doi.org/10.1007/s00220-023-04654-4.

  % \bibitem{KLO23-2} T. Komorowski, J.L. Lebowitz, S. Olla,
  % {\it Heat flow in a periodically forced, thermostatted chain II},
  %  {J. Stat. Phys.}, 2023, 190 (4), pp.87.
  % https://doi.org/10.1007/s10955-023-03103-9.

\bibitem{KO20}   T. Komorowski, S. Olla,
  {\it Kinetic limit  for a chain of harmonic oscillators with a point Langevin thermostat.}
   Journ. of Funct. Analysis,  {\bf 279} (2020), Article \# 108764,
   https://doi.org/10.1016/j.jfa.2020.108764

 \bibitem{KOrev}   T. Komorowski, S. Olla,
   {\it Thermal Boundaries in Kinetic and Hydrodynamic Limits},
   in ”Recent advances in kinetic equations and applications”,
  F. Salvarani ed., Springer
  INdAM Series 48, 2021, pp 253–288,
  %ISSN 2281-518X, ISBN 978-3-030-82945-2,
https://doi.org/10.1007/978-3-030-82946-9 11

  % \bibitem{KO20a}   T. Komorowski, S. Olla, {\it Asymptotic Scattering by Poissonian Thermostats}
  %   Annales Henri Poincar\'e {\bf 23} (2022), 3753-3790, https://doi.org/10.1007/s00023-022-01173-1.

  \bibitem{KOR20}  T. Komorowski, S. Olla, L. Ryzhik,
    {\it Fractional diffusion limit for a kinetic equation with an interface.},
    Annals of Probability 2020, Vol. {\bf 48}, No. 5, 2290-2322. https://doi.org/10.1214/20-AOP1423

  \bibitem{KORS20} T. Komorowski, S. Olla, L. Ryzhik, H. Spohn,
    {\it High frequency limit  for a chain of harmonic oscillators with a point Langevin thermostat}.  
Archive for Rational Mechanics and Analysis volume {\bf 237}, pages
497-543 (2020), https://doi.org/10.1007/s00205-020-01513-7

\bibitem{KB18}   {Kundu, A. and Bernardin, C. and Saito, K. and
              Kundu, A. and Dhar, A.},
     {\it Fractional equation description of an open anomalous heat conduction set-up},
   {J. Stat. Mech. Theory Exp.}, {2019},  {013205, 28},
       {https://doi.org/10.1088/1742-5468/aaf630},


\bibitem{KW17}          Kwa\'snicki, M.,
     {\it Ten equivalent definitions of the fractional {L}aplace
              operator},
    {Fract. Calc. Appl. Anal.},  {\bf 20},  {2017}, {7--51}.
      

\bibitem{sll} S. Lepri, R. Livi, A. Politi,  {Thermal Conduction in
 classical low-dimensional lattices}, Phys. Rep. \textbf{377}, 1-80 (2003).

\bibitem{llp97} S. Lepri, R. Livi, A. Politi, {Heat conduction in
    chains of nonlinear oscillators}, 
  Phys. Rev. Lett. \textbf{78}, 1896 (1997).

       
\bibitem{LM08}   S. Lepri, C. Mejıa-Monasterio and A. Politi,
  {\it A stochastic model of anomalous heat transport:
analytical solution of the steady state.}
  J. Phys. A: Math. Theor. 42 (2009) 025001,  https://doi.org/10.1088/1751-8113/42/2/025001

  \bibitem{LM10}  S. Lepri, C. Mejıa-Monasterio and A. Politi,  {\it Nonequilibrium dynamics of a stochastic model of anomalous heat transport.}
  J. Phys. A: Math. Theor. 43 (2010) 0650002, https://doi.org/10.1088/1751-8113/43/6/065002 

 % \bibitem{stein} E. Stein, {\it Singular integrals and
 %     differentiability properties of functions}, 1970, Pricenton
 %   Univ. Press.
 
\end{thebibliography}
\end{document}